\DeclareMathOperator*{\wslim}{w^\star-lim}
\DeclareMathOperator*{\slim}{s-lim}
\DeclareMathOperator*{\swslim}{s-w^\star-lim}
\DeclarePairedDelimiter\set{\{}{\}}
\DeclarePairedDelimiter\abs\lvert\rvert
\DeclarePairedDelimiter\norm\lVert\rVert
\newcommand{\N}{{\mathbb{N}}} 
\newcommand{\R}{{\mathbb{R}}} 
\newcommand{\C}{{\mathbb{C}}}
 \renewcommand{\c}{{\rm c}}
\newcommand{\e}{{\rm e}} \newcommand{\ess}{{\rm ess}}
 \renewcommand{\i}{{\rm i}}
\renewcommand{\d}{{\rm d}}
\newcommand{\pupo}{{\rm pp}}
\renewcommand{\Re}{{\rm Re}\,} \renewcommand{\Im}{{\rm Im}\,}
\DeclarePairedDelimiter\inp\langle\rangle
\newcommand\parb[2][]{#1 \big ( #2#1\big )}
 \renewcommand{\exp}{{\rm exp}}
\newcommand{\mand}{\text{ and }}
\newcommand{\bD}{{\mathbf D}}
\newcommand{\bX}{{\mathbf X}}
\newcommand{\vA}{{\mathcal A}} \newcommand{\vB}{{\mathcal B}}
\newcommand{\vC}{{\mathcal C}} \newcommand{\vD}{{\mathcal D}}
 \newcommand{\vH}{{\mathcal H}}
 \newcommand{\vL}{{\mathcal L}}
\newcommand{\vT}{{\mathcal T}} \newcommand{\vU}{{\mathcal U}}
\newcommand\Step[1]{
 \par\bigskip
 \noindent
 \textit{#1}.\enspace
}
\theoremstyle{plain}
\newtheorem{thm}{Theorem}[section]
\newtheorem{lemma}[thm]{Lemma} \newtheorem{corollary}[thm]{Corollary}
\theoremstyle{definition}
\newtheorem{cond}[thm]{Condition}
\newtheorem{remark}[thm]{Remark}
\newtheorem{remarks}[thm]{Remarks}
\newtheorem*{remarks*}{Remarks}
\newtheorem*{remark*}{Remark}
\numberwithin{equation}{section}
\title {New methods in spectral theory of $N$-body Schr\"odinger operators}
\thanks{
K.I. is supported by JSPS KAKENHI grant nr.\ 17K05325.\, 
E.S. is  supported by the Research Institute for Mathematical Sciences, a Joint
Usage/Research Center located in Kyoto University, and by DFF grant nr.\ 4181-00042.
} 
\author{T. Adachi}
\address[T. Adachi]{Graduate School of Human and Environmental Studies,
 Kyoto University, Kyoto, Japan}
\email{adachi@math.h.kyoto-u.ac.jp}
\author{K. Itakura}
\address[K. Itakura]{Department of Mathematics, Kobe University\\
1-1 Rokkodai, Nada, Kobe, 657-8501, Japan}
\email{itakura@math.kobe-u.ac.jp}
\author{K. Ito}
\address[K. Ito]{Graduate School of Mathematical Sciences, University of Tokyo\\
3-8-1 Komaba, Meguro-ku, Tokyo 153-8914, Japan}
\email{ito@ms.u-tokyo.ac.jp}
\author{E. Skibsted} \address[E. Skibsted]{Institut for Matematiske Fag \\
 Aarhus Universitet\\ Ny Munkegade 8000 Aarhus C, Denmark}
\email{skibsted@math.au.dk}
\date{\today}
\begin{document}
\begin{abstract} 
We develop a new  scheme of proofs for spectral theory 
of the $N$-body Schr\"odinger operators, 
reproducing and extending a series of sharp results under minimum conditions. 
Our main results include Rellich's theorem,  limiting absorption principle bounds, 
 microlocal resolvent bounds, 
H\"older continuity of the resolvent and a  microlocal Sommerfeld uniqueness result. 
We present a new proof of  Rellich's theorem which is unified with exponential decay estimates 
 studied  previously  only for  $L^2$-eigenfunctions.
 Each  pair-potential  is a sum of a long-range term
with first
 order derivatives, 
a short-range term  without derivatives and a singular term  of
operator- or form-bounded type, and the  setup  includes  
hard-core interaction. 
Our  proofs consist of a systematic use of commutators with 
`zeroth order' operators. 
In particular  they   do not rely on 
Mourre's differential inequality technique.
\end{abstract}

\allowdisplaybreaks

\maketitle
\tableofcontents

\section{Introduction}\label{sec:introduction}

In this paper we introduce new  methods 
establishing a series of sharp results 
in spectral theory of  $N$-body Schr\"{o}dinger operators. 
Using  elementary   functional calculus and mostly rather standard commutator arguments,
 we obtain Rellich's theorem,  LAP (\emph{Limiting Absorption Principle}) bounds, 
 microlocal resolvent bounds, 
H\"older continuity of the resolvent and a  microlocal Sommerfeld uniqueness result.
These are fundamental ingredients of  the stationary scattering theory,
which however 
  is only poorly developed so far for $N\geq 2$, and
particularly for $N\geq 3$. Moreover 
 our    results have   interest of their own. 
Previously  sharp, or Besov space, versions of the results 
were obtained only by sophisticated Mourre technology.
In this paper we reformulate and refine the Mourre estimate in terms
of a 
certain `zeroth order' operator $B$,
and we prove  `sharp results'  under natural and minimum
assumptions. 
In fact each  pair-potential  is a sum of a long-range term
with first
 order derivatives, 
a short-range term  without derivatives and a singular term  of
operator- or form-bounded type.
Hard-core interaction  is  also included (without additional complication). 

We provide a unified treatment of Rellich's theorem \cite{IS2,Is}
and exponential decay estimates of $L^2$-eigenfunctions \cite{FH}
(thereby solving a problem stated  in \cite{IS2}). 
 A  sharp version of Rellich's theorem for $N$-body operators
was established only recently in \cite{IS2}. 
However we extend it to an even stronger and more classical form. 
In addition, our proof of the sharp LAP bounds for $N$-body operators, 
which was first obtained by Jensen and Perry \cite{JP},
does not rely on Mourre's differential inequality technique. This is
in contrast to all the existing proofs we are aware of.
Instead, an integral part of our proof of the LAP bounds consists of Rellich's theorem,
being to some extent similar to the proof by Agmon and H\"ormander \cite{AH,Ho} in the $1$-body case. 
The precise setting will be presented in Section~\ref{subsec:Usual generalized $N$-body systems},
and the main results in Section~\ref{subsec:result}.

In the spectral theory of Schr\"{o}dinger operators 
it has been an issue with a long history to achieve `minimum conditions' on the pair-potentials. 
To our knowledge Lavine was the first who proved LAP in the $1$-body case 
for a natural class of potentials by a commutator method \cite{La1, La2}. 
After the discovery of the Mourre method \cite{Mo1,Mo2} the question about minimum conditions, 
in particular the historically painful issue of inclusion 
of arbitrary short-range potentials, was raised again. 
This was particularly the case in the $N$-body setting  
where Mourre's differential inequality technique, 
  which involves a certain double commutator (arising from   commutation
  with  a 
certain `first order' operator $A$) not obviously  compatible with
  short-range potentials, was the only available method for a couple  of decades. 
See \cite{ABG1,ABG2, BGM1,BGM2, Ta} for studies of this problem. 
A similar study of the Mourre method for form-bounded potentials 
was done in \cite{BMP} in the $1$-body setting.

Inventing new techniques, 
we not only reproduce known LAP results for operator-bounded potentials,
but also obtain new ones for form-bounded potentials; a
brief comparison with the literature is given  in
Section~\ref{subsec:result}. This  being said   we still have a 
`double commutator problem', 
although in a disguised form in commutation with
 the operator $B$. Most of our  proofs use
in a systematic way  commutation with   $B$ (the same operator was
used in \cite{GIS}, however  not in a systematic way). It is a
standing open problem to show the basic  results
of  the stationary scattering theory for  $N$-body Schr\"{o}dinger
operators without some rudiment  of  such a 
commutator problem (this  would require  very different  methods). A
similar problem does not appear in  time-dependent  $N$-body
scattering theory, see
\cite {De,Gr}.

For rather different methods we refer to \cite{Ge,GJ}, 
which are also based on an elementary functional calculus
with the Mourre estimate used as an input, 
although in less structured abstract contexts. 
The primary goal of these works is to demonstrate an alternative approach to 
Mourre's differential inequality technique in their abstract settings.
However, these papers do not contain the sharp LAP bounds, and more
smoothness  is required (due to multiple   commutators). 
We would also like to mention that it was realized by Melrose and Vasy
 that the Mourre estimate 
combined with propagation of singularities in a certain calculus 
leads to  LAP for a class of $N$-body Schr\"{o}dinger operators, see \cite{Va}. 
Our goals and techniques are different, again.
We aim at showing sharp results  by  elementary methods and at  a  minimum cost.

\subsection{Setting}\label{subsec:Usual generalized $N$-body systems}

In this subsection we precisely formulate the setting of the paper.
We work on a \emph{generalized $N$-body model with hard-cores}.
This is a natural generalization of the usual $N$-body model,
and hopefully   the terminologies used below would not need any
motivation for the reader.

\subsubsection{$N$-body Hamiltonian}
Let $\bX$ be                    
a finite dimensional real inner product space,
equipped with a
finite family $\{\bX_a\}_{a\in \vA}$ of subspaces closed under intersection:
For any $a,b\in\mathcal A$ there exists $c\in\mathcal A$ such that 
\begin{align}
\bX_a\cap\bX_b=\bX_c.
\label{171028}
\end{align}
The elements of $\vA$ are called {\it cluster decompositions},
and we order and write them as $a\subset b$ if
$\bX_a\supset \bX_b$. 
It is assumed that there exist
$a_{\min},a_{\max}\in \vA$ such that 
$$\bX_{a_{\min}}=\bX,\quad
\bX_{a_{\max}}=\{0\}.$$ 
For a chain of cluster decompositions $a_1\subsetneq \cdots \subsetneq a_k$
the number $k$ is called the \textit{length} of the chain,
and such a chain is said to \textit{connect} $a=a_1$ and $b=a_k$. 
For any $a\in\vA$ we denote 
the maximal length of all the chains connecting $a$ and $a_{\max}$
by $\# a$: 
$$\# a=\max\{k\,|\, a=a_1\subsetneq \dots\subsetneq a_k=a_{\max}\};\quad 
\# a_{\max}=1.$$
We say that the family $\{\bX_a\}_{a\in\vA}$ is of {\it$N$-body type} 
if $\# a_{\min}=N+1$. 
To avoid confusion we remark that $(N+1)$ number of moving particles 
form an $N$-body system after separation of the center of mass.

Let $\bX^a\subset\bX$ be the orthogonal complement of $\bX_a\subset \bX$,
and denote the associated orthogonal decomposition of $x\in\bX$ by 
$$x=x^a\oplus x_a\in\bX^a\oplus \bX_a.$$ 
The component $x_a$ is called the \emph{inter-cluster coordinates},
and $x^a$ the \emph{internal coordinates}. 
We note that the family $\{\mathbf X^a\}_{a\in \mathcal A}$ is closed under addition:
For any $a,b\in\mathcal A$ there exists $c\in\mathcal A$ such that 
$$\bX^a+\bX^b=\bX^c,$$
cf.\ \eqref{171028}. 
A real-valued measurable function $V\colon\bX\to\mathbb R$ is called 
a \textit{potential of $N$-body type} 
if there exist real-valued measurable functions
$V_a\colon\bX^a\to\mathbb R$ (i.e.\ \textit{pair-potentials}) such that 
\begin{align}
V(x)=\sum_{a\in\mathcal A}V_a(x^a)\ \ \text{for }x\in\mathbf X.
\label{eq:170927}
\end{align}
Throughout the paper we may assume $V_{a_{\min}}=0$ without loss of generality. 
We sometimes call $V$ a \textit{soft potential} compared to the following \textit{hard-cores}:
For each $a\in \vA$ let $\Omega_a\subset \bX^a$
be a given non-empty open subset with $\bX^a\setminus \Omega_a$ being compact,
and we set 
\begin{align}
\Omega=\bigcap_{a\in\mathcal A}(\Omega_a+\bX_a).
\label{eq:171002}
\end{align}
Note that by the non-emptiness assumption it automatically follows that $\Omega_{a_{\min}}=X^{a_{\min}}=\{0\}$.
The complement $\bX\setminus \Omega$ corresponds to 
a region where particles can not penetrate due to the existence of `hard-cores'.

Now we present 
conditions of the paper on these interactions. 
Throughout the paper we 
use the standard notation $\langle y\rangle=(1+|y|^2)^{1/2}$
for a vector or a complex number $y$.
We denote the space of  bounded operators 
from a general Banach space $X$ to another $Y$ by $\vL(X,Y)$ 
and abbreviate $\mathcal L(X)=\mathcal L(X,X)$. The space of 
  compact operators from $X$ to $Y$ is denoted by
$\vC(X,Y)$, and  $X^*$ denotes   the dual space of $X$.
\begin{cond}\label{cond:smooth2wea3n1} 
Let $\delta\in (0,1/2]$ be fixed.
For each $a\in\vA\setminus\{a_{\min}\}$ 
there exists a splitting 
$$V_a=V_a^{\textrm{lr}}+V_a^{\textrm{sr}}+V_a^{\textrm{si}}$$ 
into three real-valued measurable functions in $\mathcal
C(H^1_0(\Omega_a),H^1_0(\Omega_a)^*)$ such that 
\begin{enumerate}[label=(\arabic*)]
\item\label{item:cond12bweak3n2} 
$V_a^{\textrm{lr}}$ has first order distributional derivatives 
in $L^1_{\textrm {loc}}(\Omega_a)$, 
and for any $\abs{\alpha}= 1$
\begin{equation*}
\langle x^a\rangle^{1+2\delta}\partial^\alpha
V_a^{\textrm{lr}}\in \vL(H^1_0(\Omega_a), H^1_0(\Omega_a)^*);
\end{equation*}
\item\label{item:cond122same3n2} 
$V_a^{\textrm{sr}}$ satisfies 
\begin{align}
\langle x^a\rangle^{1+2\delta}V_a^{\textrm{sr}}
\in \vL(H^1_0(\Omega_a), L^2(\Omega_a));
\label{eq:2k2we3n}
\end{align}
\item \label{item:cond1223n2} 
$V_a^{\textrm{si}}$ vanishes outside a bounded subset of $\Omega_a$.
\end{enumerate}
\end{cond} 

In the case where hard-cores are absent, the following 
operator-bounded version is also available. 
It is almost identical with the  condition of \cite {ABG1}.
\begin{cond}\label{cond:smooth2wea3n12} 
Let $\delta\in (0,1/2]$ be fixed. 
For each $a\in\vA\setminus\{a_{\min}\}$ 
hard-core interaction  is  absent, i.e.  $\Omega_a=\bX^a$,
and 
there exists a splitting 
$$V_a=V_a^{\textrm{lr}}+V_a^{\textrm{sr}}$$ 
into two real-valued measurable functions in $V_a\in
\vC(H^2(\bX^a),L^2(\bX^a))$ such that 
\begin{enumerate}[label=(\arabic*)]
\item\label{item:cond12bweak3n3} 
$V_a^{\textrm{lr}}$ has first order distributional derivatives 
in $L^1_{\textrm {loc}}(\bX^a)$, and for any $\abs{\alpha}= 1$
\begin{equation*}
\langle x^a\rangle^{1+2\delta}\partial^\alpha
V_a^{\textrm{lr}}\in \vL(H^2(\bX^a), H^{-1}(\bX^a));
\end{equation*}
\item\label{item:cond122same3n3} 
$V_a^{\textrm{sr}}$ satisfies 
\begin{align*}
\langle
  x^a\rangle^{1+2\delta}V_a^{\textrm{sr}}\in \vL(H^2(\bX^a),
  L^2(\bX^a)).
\end{align*}
\end{enumerate}
\end{cond} 
\begin{remark*}
It is easily checked that the proof of the Mourre estimate in \cite{IS1} 
works under  either  Condition~\ref{cond:smooth2wea3n1}  or  Condition~\ref{cond:smooth2wea3n12},
see \eqref{eq:mourre comm}, \eqref{eq:mourre commb} and Lemma~\ref{lemma:Mourre1_hard} below. 
In this paper the Mourre estimate, or Lemma~\ref{lemma:Mourre1_hard}, 
is used as an `input', and we will not give a proof of it.
\end{remark*}

For an $N$-body potential \eqref{eq:170927} and an exterior region \eqref{eq:171002}
satisfying 
{Conditions~\ref{cond:smooth2wea3n1} or \ref{cond:smooth2wea3n12}}
we define the \emph{generalized $N$-body Hamiltonian} $H$ as 
\begin{align*}
 H=H_0+V;\quad H_0=-\tfrac12\Delta,\quad
 \text{on }\vH=L^2(\Omega). 
\end{align*}
Here $\Delta$ is the Laplace--Beltrami operator associated with the inner product on $\bX$,
and we impose the Dirichlet boundary condition on $\partial\Omega$. 
More precisely, $H$ 
is defined as the self-adjoint operator 
associated with the closed quadratic form $\tilde H$:
\begin{align*}
\langle \tilde H\rangle_\psi
=\tfrac12\langle p\psi,p\psi\rangle+\langle \psi,V\psi\rangle
\ \ 
\text{for }\psi\in Q(\tilde H)=H^1_0(\Omega),
\end{align*}
where $p=-\mathrm i\nabla$.
Note that $V$ is  infinitesimally $H_0$-small in the form sense.
For  later use let us be more careful about the domain $\mathcal D(H)$. 
Note that the above quadratic form $\tilde H$ may be considered as a 
bounded operator $H^1_0(\Omega)\to (H^1_0(\Omega))^*$.
Then the self-adjoint operator $H$ is realized by letting 
\begin{align}
H=\tilde H_{|\mathcal D(H)};\quad 
\mathcal D(H)=\bigl\{\psi\in H^1_0(\Omega)\,\big|\, \tilde H\psi\in \mathcal H\bigr\}.
\label{eq:100516}
\end{align}
This is exactly the definition of the Friedrichs extension, see e.g.\ \cite[proof of Theorem XI.7.2]{Yo}. 
Note in particular
that, if Condition~\ref{cond:smooth2wea3n12} is adopted, 
then we have 
\begin{align}
\mathcal D(H)=\mathcal D(H_0)=H^2(\mathbf X)\supset C^\infty_{\mathrm c}(\mathbf X).
\label{eq:180216}
\end{align}
We henceforth denote the quadratic form $\tilde H$ on $H^1_0(\Omega)$,
or the bounded operator $\tilde H\colon H^1_0(\Omega)\to(H^1_0(\Omega))^*$, simply 
by $H$ if there is no confusion.

\subsubsection{Sub-Hamiltonians}
We recall the definition of the \emph{sub-Hamiltonian $H^a$ associated with a cluster 
decomposition $a\in\vA$}. 
For $a=a_{\min}$, 
noting that $V^{a_{\min}}=0$ and $\Omega^{a_{\min}}=\{0\}$, we define
$$H^{a_{\min}}=0\ \ \text{on }\mathcal H^{a_{\min}}=L^2(\Omega^{a_{\min}})=\mathbb C. $$
For $a\neq a_{\min}$, since $\{\mathbf X_b\cap\mathbf X^a\}_{b\subset a}$ forms 
a family of subspaces of $(\# a-1)$-body type in $X^a$, 
we can consider, similarly to the \emph{full} Hamiltonian $H$, 
\begin{equation*}
V^a(x^a)=\sum_{b\subset a} V_{b}(x^b)
,\quad
\Omega^a=\bigcap_{b\subset a} \bigl[\Omega_b+(\bX_b\cap \bX^a)\bigr].
\end{equation*} 
Then we define the associated sub-Hamiltonian $H^a$ as 
\begin{align*}
 H^a=-\tfrac 12
\Delta_{x^a} +V^a\ \ 
\text{on }\mathcal H^a=L^2(\Omega^a)
\end{align*}
with the Dirichlet boundary condition on $\partial \Omega^a$. 
We remark that in particular
\begin{align*}
V^{a_{\max}}=V,\quad
 \Omega^{a_{\max}}=\Omega,\quad
 H^{a_{\max}}=H,\quad 
 \mathcal H^{a_{\max}}=\mathcal H.
\end{align*}

The \textit{thresholds} of $H$ are the eigenvalues of sub-Hamiltonians $H^a$, 
$a\in\mathcal A\setminus \{a_{\max}\}$.
We set 
\begin{align*}
 \vT (H)= \bigcup \bigl\{\sigma_{\pupo}( H^a)\,\big|\,a\in\vA\setminus \{a_{\max}\}\bigr\}.
\end{align*} 
It is known that under 
{Conditions~\ref{cond:smooth2wea3n1} or \ref{cond:smooth2wea3n12}} 
that the set 
$\vT (H)$ is closed and   at most countable. Moreover the set of non-threshold eigenvalues
  is discrete in $\R\setminus \vT (H)$,  and it 
 can only  accumulate  at  points in
$\vT (H)$  from below. See
 \ Lemma~\ref{lemma:Mourre1_hard}, Remark \ref{rem:180123}  and
\cite{FH, IS1, Pe}. 
By the so-called HVZ theorem 
the essential spectrum of $H$ is given by the formula
\begin{equation*}
 \sigma_{\ess}(H)= \bigl[\min \vT(H),\infty\bigr),
\end{equation*} cf. 
\cite[Theorem~X\hspace{-.05em}I\hspace{-.05em}I\hspace{-.05em}I.17]{RS}.

\subsubsection{Unique continuation property}
Most of the results of the paper depend on the \textit{unique continuation property}.
Due to singularities of pair-potentials $V_a$ 
and the hard-cores $\bX^a\setminus\Omega_a$ this property does not necessarily hold for our Hamiltonian $H$.
In this paper we are going to \emph{assume} this property  rather than imposing technical sufficient conditions 
on {$V_a$} and $\Omega_a$.

For any $a\in\mathcal A\setminus\set{a_{\min}}$
we introduce the \emph{locally $H_0^1$ space} by 
\begin{align*}
H^1_{0,\mathrm{loc}}(\Omega^a)
=\bigl\{\psi\in L^2_{\mathrm{loc}}(\Omega^a)\,\big|\, 
\chi\psi\in H^1_0(\Omega^a)\text{ for any }\chi
\in C^\infty_{\mathrm c}(\bX^a)\bigr\}.
\end{align*}
Then, since $H^a\colon H_0^1(\Omega^a)\to (H_0^1(\Omega^a))^*
\subset \mathcal D'(\Omega)$ is a local operator, 
it naturally extends as 
$$H^a\colon H^1_{0,\mathrm{loc}}(\Omega^a)\to \mathcal D'(\Omega^a)$$
by using a partition of unity. 
The vector $H^a\psi$ for $\psi\in H^1_{0,\mathrm{loc}}(\Omega^a)$  may be referred to as  
`$H^a\psi$ \emph{in the distributional sense}'.
We call a function $\phi\in H^1_{0,\mathrm{loc}}(\Omega^a)$
a {\it generalized Dirichlet eigenfunction for $H^a$ with eigenvalue $E\in \mathbb C$}, 
if it satisfies 
$$H^a\phi=E\phi\ \ \text{in the distributional sense}.$$ 
\begin{cond}\label{cond:smooth2wea3n2} 
The unique continuation property holds for $H^a$ for all $a\neq a_{\min}$:
If $\phi\in H^1_{0,\mathrm{loc}}(\Omega^a)$ 
is a generalized Dirichlet eigenfunction for $H^a$ 
and $\phi=0$ on a non-empty open subset of $\Omega^a$, then $\phi=0$ on $\Omega^a$.
\end{cond}

For a particular result on the unique continuation property for
$N$-body Schr\"odinger operators (without hard-core interaction) we refer to \cite{Geo}. To our
knowledge this property is not well understood in the $N$-body case.

\subsection{Results}\label{subsec:result}
In this subsection we state all the main results of the paper.

\subsubsection{Rellich type  theorems}

Let us first 
recall the definitions of the \emph{Besov spaces 
associated with the multiplication operator
$|x|$ on $\vH$}. 
Set
\begin{align*}
F_0&=F\bigl(\bigl\{ x\in \Omega\,\big|\,\abs{x}<1\bigr\}\bigr),\\
F_n&=F\bigl(\bigl\{ x\in \Omega\,\big|\,2^{n-1}\le \abs{x}<2^n\bigr\} \bigr)
\quad \text{for }n=1,2,\dots,
\end{align*}
where $F(S)$ is the sharp characteristic function of any given  subset $S\subset \Omega$. 
The Besov spaces $\mathcal B$, $\mathcal B^*$ and $\mathcal B^*_0$ are defined as 
\begin{align*}
\mathcal B&=
\bigl\{\psi\in L^2_{\mathrm{loc}}(\Omega)\,\big|\,\|\psi\|_{\mathcal B}<\infty\bigr\},\quad 
\|\psi\|_{\mathcal B}=\sum_{n=0}^\infty 2^{n/2}
\|F_n\psi\|_{{\mathcal H}},\\
\mathcal B^*&=
\bigl\{\psi\in L^2_{\mathrm{loc}}(\Omega)\,\big|\, \|\psi\|_{\mathcal B^*}<\infty\bigr\},\quad 
\|\psi\|_{\mathcal B^*}=\sup_{n\ge 0}2^{-n/2}\|F_n\psi\|_{{\mathcal H}},
\\
\mathcal B^*_0
&=
\Bigl\{\psi\in \mathcal B^*\,\Big|\, \lim_{n\to\infty}2^{-n/2}\|F_n\psi\|_{{\mathcal H}}=0\Bigr\},
\end{align*}
respectively.
Denote the standard \emph{weighted $L^2$ spaces} by 
$$
L_s^2=\inp{x}^{-s}L^2(\Omega)\ \ \text{for }s\in\mathbb R ,\quad
L_{-\infty}^2=\bigcup_{s\in\R}L^2_s,\quad
L^2_\infty=\bigcap_{s\in\mathbb R}L_s^2
.
$$
Then note that for any $s>1/2$
\begin{align}\label{eq:nest}
 L^2_s\subsetneq \mathcal B\subsetneq L^2_{1/2}
\subsetneq \mathcal H
\subsetneq L^2_{-1/2}\subsetneq \mathcal B^*_0\subsetneq \mathcal B^*\subsetneq L^2_{-s}.
\end{align}

Now we have  Rellich type  theorems  on  the following form extending \cite{IS1,IS2}.

\begin{thm}\label{thm:priori-decay-b_0} 
Suppose 
Condition~\ref{cond:smooth2wea3n1} or Condition \ref{cond:smooth2wea3n12}.
Let $\phi\in \mathcal B_0^*\cap H^1_{0,\mathrm{loc}}(\Omega)$, $E\in\mathbb R$
and $\rho\ge 0$, and assume that 
\begin{align*}
(H-E)\phi(x)=0\ \ \text{for $|x|> \rho$ in the distributional sense}.
\end{align*}
Set $\alpha_0
=\sup\bigl\{\alpha\ge 0\,\big|\, \mathrm e^{\alpha|x|}\phi\in \mathcal B_0^*
\bigr\}\in [0,\infty]$. Then 
\begin{align}
E+\tfrac12\alpha_0^2\in\mathcal T(H)\cup\{\infty\}.
\label{eq:1711051}
\end{align}
\end{thm}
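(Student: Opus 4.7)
My plan is to argue by contradiction. Suppose $\alpha_0 < \infty$ and $E' := E + \tfrac12 \alpha_0^2 \notin \mathcal T(H)$. I will show that then $\mathrm e^{\alpha' |x|}\phi \in \mathcal B_0^*$ for some $\alpha' > \alpha_0$, contradicting the definition of $\alpha_0$. Throughout, it will be convenient to replace $|x|$ by a smooth convex regularisation $r(x)$ with $r(x)=|x|$ for large $|x|$ and $|\nabla r|\le 1$ everywhere, since $\mathrm e^{\alpha r}$ and $\mathrm e^{\alpha|x|}$ are comparable up to bounded factors. A standard cutoff also allows me to reduce to the case $\rho=0$ after harmlessly enlarging $\rho$.

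For $\alpha\in[0,\alpha_0)$ set $\phi_\alpha=\mathrm e^{\alpha r}\phi\in \mathcal B_0^*\cap H^1_{0,\mathrm{loc}}(\Omega)$. A direct computation of the conjugated operator $H_\alpha=\mathrm e^{\alpha r}H\mathrm e^{-\alpha r}$ gives, outside a ball,
\begin{equation*}
\bigl(H-E-\tfrac{\alpha^2}{2}|\nabla r|^2\bigr)\phi_\alpha = -\alpha\,\tfrac12\bigl(\nabla r\cdot p+p\cdot\nabla r\bigr)\phi_\alpha - \tfrac{\alpha}{2}(\Delta r)\phi_\alpha,
\end{equation*}
so $\phi_\alpha$ satisfies an approximate eigenequation at energy $E+\tfrac{\alpha^2}{2}$, with a first-order right-hand side of size $O(\alpha)$ modulo terms that decay in $|x|$. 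Choosing an energy window $f\in C_\mathrm c^\infty(\mathbb R)$ supported in a small neighbourhood of $E'$ disjoint from $\mathcal T(H)$, for $\alpha$ close enough to $\alpha_0$ the relevant spectral localisation is captured by $f$, and $\phi_\alpha$ is essentially localised there modulo an exponentially decaying remainder (which has no effect on the $\mathcal B_0^*$ analysis). This is where the non-threshold hypothesis enters decisively: on $\operatorname{Ran} f(H)$ the sharp Mourre estimate reformulated in terms of the zeroth-order operator $B$ (the input of \textit{Lemma~\ref{lemma:Mourre1_hard}}) gives a uniform positive lower bound
\begin{equation*}
f(H)^*\,\mathrm i[H,B]\,f(H)\;\ge\;c\,f(H)^*f(H).
\end{equation*}

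The key step is then a commutator identity applied to $\phi_\alpha$: using $B$ as the multiplier and pairing the equation above with $\mathrm i B\phi_\alpha$, the antisymmetric part supplies $\langle \phi_\alpha,\mathrm i[H,B]\phi_\alpha\rangle$, while the symmetric part is controlled by the right-hand side, which itself is bounded by $\mathcal B^*$-norms of $p\phi_\alpha$ times $\mathcal B$-norms produced by $B$. The conclusion is a uniform bound
\begin{equation*}
\bigl\|\mathrm e^{\epsilon r}\phi_\alpha\bigr\|_{\mathcal B_0^*} \;\le\; C\bigl(\|\phi_\alpha\|_{\mathcal B_0^*}+\|(\text{compactly supported remainder})\|\bigr)
\end{equation*}
with $\epsilon>0$ and $C$ independent of $\alpha\in[\alpha_0-\eta,\alpha_0)$ for some $\eta>0$. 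Letting $\alpha\nearrow\alpha_0$ along a sequence for which $\|\phi_\alpha\|_{\mathcal B_0^*}$ stays bounded (which can be arranged by the definition of $\alpha_0$ and a monotone convergence argument on the dyadic sums defining $\|\cdot\|_{\mathcal B^*}$) yields $\mathrm e^{(\alpha_0+\epsilon)r}\phi\in\mathcal B_0^*$, the desired contradiction.

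The hard part is Step 3, the commutator estimate. The difficulty is twofold. First, $B$ is only a zeroth-order operator, so $\mathrm i[H,B]$ is itself zeroth order and the positivity it generates is delicate; the paper's `disguised double commutator' (commutation of $B$ with $p\cdot\nabla r$) has to be absorbed by the first-order term of the conjugated equation, requiring exactly the structure of $B$ to balance the $\alpha$-dependent corrections. Second, the argument must be carried out in $\mathcal B_0^*$ rather than $L^2_s$, which rules out any weighted $L^2$ bootstrap; instead the dyadic annular decomposition underlying $\mathcal B^*_0$ forces a careful scale-by-scale analysis. An auxiliary difficulty is ensuring that singular and form-bounded parts of $V_a$, as well as the hard-core boundary $\partial\Omega$, do not obstruct the pairing with $B\phi_\alpha$; this is handled by checking that $V_a^{\mathrm{si}}$ is compactly supported and that $V_a^{\mathrm{lr}},V_a^{\mathrm{sr}}$ obey the growth bounds of Condition~\ref{cond:smooth2wea3n1}, so their contribution to the commutator has exactly the order needed for the scheme to close.
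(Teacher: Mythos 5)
Your overall strategy (contradiction, conjugation by $\mathrm e^{\alpha r}$, positivity via the Mourre estimate phrased through $B$) is the same as the paper's, but the central step as you describe it does not close, for a concrete reason: the lower bound you invoke, $f(H)^*\,\mathrm i[H,B]\,f(H)\ge c\,f(H)^*f(H)$, is false. Since $B$ is zeroth order, the commutator has the form $\mathrm i[H,B]=r^{-1/2}(L-B^2)r^{-1/2}$ (Lemma~\ref{lem:17100620}), and after energy localization the best one gets is $f(H)(\mathbf DB)f(H)\ge f(H)r^{-1/2}(\sigma^2-B^2)r^{-1/2}f(H)-Cr^{-2}$ (Corollary~\ref{cor:171007}); the $-B^2$ term destroys any uniform positive lower bound. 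The paper's whole construction exists to compensate for this: the observable is $P=\Theta f(H)\zeta(B)f(H)\Theta$ with a smooth sign function $\zeta=\zeta_\epsilon$, and positivity is recovered by combining the term $\theta'\Theta B\zeta(B)\cdots$ coming from $\mathbf D\Theta$ (positive where $|B|\ge\epsilon$) with the term $\zeta'(B)(\sigma^2-B^2)$ coming from $\mathbf D\zeta(B)$ (positive where $|B|\le\epsilon<\sigma$), via the elementary inequality $b\zeta(b)+c\zeta'(b)\ge\min\{\epsilon/8,2c/\epsilon\}$ of Lemma~\ref{lem:170928}. Controlling $\mathbf D\zeta(B)$ in turn requires the second commutator $\mathrm i[\mathrm i[H,B],B]$ (Lemmas~\ref{lemma:sing} and~\ref{lem:17111515}), which is the technical heart of the argument and which your sketch only alludes to. Simply pairing the conjugated equation with $\mathrm iB\phi_\alpha$, as you propose, yields $\langle\mathrm i[H,B]\rangle_{\phi_\alpha}\approx 0$ up to remainders (the leading right-hand term $\mathrm i\alpha B\phi_\alpha$ contributes nothing to the imaginary part), and with the correct form of the commutator bound this gives no gain.

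Two further points. First, your limiting argument is not justified: if $\mathrm e^{\alpha_0 r}\phi\notin\mathcal B_0^*$ then $\|\phi_\alpha\|_{\mathcal B^*}$ typically blows up as $\alpha\nearrow\alpha_0$, so one cannot "arrange" boundedness along a sequence. The paper instead proves, for a \emph{fixed} $\alpha<\alpha_0$ with $\alpha+\beta>\alpha_0$, a bound of the form $\|\bar\chi_m r^{-1/2}\theta_0^{-\kappa}\mathrm e^\theta\phi\|^2\le C_1(m)\|\chi_{m-1,m+1}\phi\|^2$ whose right-hand side is a compactly supported quantity independent of the exponential weight and of the regularization parameter $\nu$; monotone convergence in $n$ and then in $\nu$ upgrades $\mathrm e^{\alpha r}\phi\in\mathcal B_0^*$ to $\mathrm e^{(\alpha+\beta)r}\phi\in\mathcal B_0^*$, which is the contradiction. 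Second, the pairings you write down are purely formal for $\phi_\alpha\in\mathcal B_0^*\setminus L^2$; the dyadic cutoffs $\chi_{m,n}$ and the order of limits $n\to\infty$, $\nu\to\infty$ are not decoration but the device that makes the quadratic-form inequality applicable to the state $\chi_{m-2,n+2}\phi\in\mathcal D(H)$. As written, the proposal identifies the right ingredients but omits the mechanism that actually produces positivity and the gain $\beta$ in the exponent.
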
 

\begin{thm}\label{thm:priori-decay-b_0b} 
Suppose 
Condition~\ref{cond:smooth2wea3n1}.
Let $\phi\in \mathcal B_0^*\cap H^1_{0,\mathrm{loc}}(\Omega)$, $E\in\mathbb R$
and $\rho\ge 0$, and assume that 
\begin{align*}
(H-E)\phi(x)=0\ \ \text{for $|x|> \rho$ in the distributional sense}.
\end{align*}
Suppose 
$\sup\bigl\{\alpha\ge 0\,\big|\, \mathrm e^{\alpha|x|}\phi\in \mathcal B_0^*
\bigr\}=\infty$. 
  Then there exists $\rho'\ge 0$ such that 
\begin{align}
\phi(x)= 0\ \ \text{for }|x|> \rho'.
\label{eq:17111522}
\end{align}
\end{thm}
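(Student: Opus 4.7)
The plan is to combine a cut-off/conjugation device with the positive-commutator machinery underlying Theorem~\ref{thm:priori-decay-b_0} to turn super-exponential decay into compact support. Observe first that Theorem~\ref{thm:priori-decay-b_0} applied directly to the $\phi$ of Theorem~\ref{thm:priori-decay-b_0b} gives only the vacuous conclusion $\infty\in\vT(H)\cup\{\infty\}$, so extra input is needed. The idea is that for $\alpha$ large the weighted function $\mathrm e^{\alpha|x|}\phi$ behaves, outside a fixed compact set, like an eigenfunction at the \emph{effective energy} $E+\tfrac12\alpha^2$, which eventually lies far above every fixed threshold of $H$. A quantitative commutator estimate then bounds this weighted function in a Besov norm by a contribution from a fixed annulus, and a comparison of exponential growth rates forces vanishing in the far field.

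Concretely, fix $R>\rho$ and a smooth radial cutoff $\chi_R$ with $\chi_R=1$ on $\{|x|\ge R+1\}$ and $\chi_R=0$ on $\{|x|\le R\}$. Setting $\phi_R=\chi_R\phi\in H^1_0(\Omega)$, the hypothesis $(H-E)\phi=0$ for $|x|>\rho$ gives
\[
(H-E)\phi_R=[H_0,\chi_R]\phi,\qquad\text{supported in }\{R\le|x|\le R+1\}.
\]
For $\alpha>0$ set $\Phi_\alpha:=\mathrm e^{\alpha|x|}\phi_R$, which by assumption lies in $\mathcal B_0^*\cap H^1_0(\Omega)$ for every $\alpha\ge 0$. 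A direct conjugation computation yields
\[
\bigl(H-E-\tfrac12\alpha^2\bigr)\Phi_\alpha=\alpha P_\alpha\Phi_\alpha+\mathrm e^{\alpha|x|}[H_0,\chi_R]\phi,
\]
where $P_\alpha$ is a first-order differential operator with bounded coefficients (depending mildly on $\alpha$), so $\Phi_\alpha$ is an approximate eigenfunction at the effective energy $E_\alpha:=E+\tfrac12\alpha^2$ with source supported in the fixed annulus.

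I would then run the zeroth-order-commutator argument behind Theorem~\ref{thm:priori-decay-b_0} for $\Phi_\alpha$ at energy $E_\alpha$, along a sequence $\alpha_k\to\infty$ chosen so that $E_{\alpha_k}$ stays a fixed positive distance from the closed countable set $\vT(H)$. The Mourre estimate is then available uniformly along this sequence, and the commutator identity with $B$ should yield a bound of the schematic form
\[
c\,\|\Phi_{\alpha_k}\|_{\mathcal B^*}^2\;\le\;C\,\mathrm e^{2\alpha_k(R+1)}\,\|\phi\|_{L^2(\{R\le|x|\le R+1\})}^2,
\]
with $c,C$ independent of $k$. If $\phi$ did not vanish a.e.\ on $\{|x|>R+2\}$, the left-hand side would be of order at least $\mathrm e^{2\alpha_k(R+2)}$, contradicting the right-hand side as $\alpha_k\to\infty$. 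Hence $\phi\equiv 0$ on $\{|x|>R+2\}$, giving the conclusion with $\rho'=R+2$. The main obstacle is the uniformity in $\alpha_k$: running the commutator estimate in the conjugated, non-self-adjoint setting and simultaneously absorbing the first-order error $\alpha P_\alpha\Phi_\alpha$ so as not to spoil the exponential comparison requires the Mourre constant, and the constants arising from the weighted calculus with $B$, to be bounded from below along $\alpha_k\to\infty$ — which crucially relies on choosing $\alpha_k$ so that $E_{\alpha_k}$ stays away from the thresholds of $H$.
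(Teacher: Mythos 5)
Your overall strategy --- conjugate by $\mathrm e^{\alpha|x|}$, view the result as an approximate eigenfunction at the effective energy $E_\alpha=E+\tfrac12\alpha^2$, and extract a contradiction from exponential growth rates on a fixed annulus --- is the natural first attempt, and the final growth-rate comparison is indeed of the same flavor as the paper's. But the step you yourself flag as ``the main obstacle'' is a genuine gap, not a technicality, and it is precisely where this route breaks down. The Mourre positivity at energy $E_\alpha$ is of size $2d(E_\alpha)\sim\alpha^2$, while the conjugation error is a first-order operator of size $\alpha$ applied to $\Phi_\alpha$; since $\|B\Phi_\alpha\|\sim\alpha\|\Phi_\alpha\|$ (conjugation shifts $p$ by $\mathrm i\alpha\omega$), this error contributes at order $\alpha^2\|\Phi_\alpha\|^2$ to the commutator identity --- the \emph{same} order as the positivity, with no sign. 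For the same reason $\|(H-E_\alpha)\Phi_\alpha\|\sim\alpha^2\|\Phi_\alpha\|$, so $\Phi_\alpha$ is not energy-localized near $E_\alpha$ for $H$ on any scale on which Lemma~\ref{lemma:Mourre1_hard} applies: you cannot insert the cutoffs $f(H)$ with $f$ supported near $E_\alpha$ and control the remainder. Choosing $\alpha_k$ so that $E_{\alpha_k}$ avoids $\vT(H)$ addresses only the threshold issue, not this one, so the claimed uniform bound $c\|\Phi_{\alpha_k}\|_{\mathcal B^*}^2\le C\mathrm e^{2\alpha_k(R+1)}\|\phi\|^2_{L^2(\mathrm{ann})}$ is unobtainable by this argument.

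The paper's proof avoids the Mourre estimate entirely for this theorem. It replaces $|x|$ by $q=\chi(|x|/R)+(1-\chi(|x|/R))Y(x)$ built from Yafaev's homogeneous \emph{convex} function $Y$ (Lemma~\ref{lemma:another-vector-field}) and works with the conjugate operator $B_q=\mathop{\mathrm{Re}}(\omega_q\cdot p)$. Convexity gives $\mathop{\mathrm{Hess}}q\ge 0$ outside a compact set, whence the one-sided, energy-independent lower bound ${\bf D}B_q\ge -r^{-1/2-\delta}(CH+C')r^{-1/2-\delta}$ of Lemma~\ref{lem:17100620bb} --- no energy localization, no threshold condition. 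The $\alpha^2$-positivity in Lemma~\ref{lem:14.10.4.1.17ffaabbqqq} then comes from the weight $\theta=\alpha(q-q^{1-2\delta'})$ itself (the term $c\alpha^2q^{-1-2\delta'}\Theta^2$), uniformly in $\alpha\ge\beta_0$, while the annulus error grows only linearly in $\alpha$; the contradiction as $\alpha\to\infty$ then proceeds as you envisaged. If you want to salvage your approach, you would need to import exactly this convexity input in place of the Mourre estimate.
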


The combination of Theorems~\ref{thm:priori-decay-b_0} 
and \ref{thm:priori-decay-b_0b} extends the classical Rellich theorem
for  $N=1$ and $E>0$. We refer to \cite{Is} and the references therein for an account of the
history of Rellich's theorem. With the unique continuation property we
can extend the  classical Rellich theorem to any $N$, see  
\ref{item:clasReExt} below.

\begin{corollary}\label{cor:1710291201}
Suppose 
Conditions ~\ref{cond:smooth2wea3n1}  and \ref{cond:smooth2wea3n2}.
\begin{enumerate} [1)]
\item
Let $\phi\in \mathcal B_0^*\cap H^1_{0,\mathrm{loc}}(\Omega)$ 
be a generalized Dirichlet eigenfunction for $H$ 
with real eigenvalue $E\in\mathbb R$,
and set 
\begin{align*}
\alpha_0
=\sup\bigl\{\alpha\ge 0\,\big|\, \mathrm e^{\alpha|x|}\phi\in \mathcal B_0^* 
\bigr\}\in [0,\infty].
\end{align*} 
Then 
$E+\tfrac12\alpha_0^2\in \mathcal T(H)\cup\{\infty\}$, and if 
 $\alpha_0=\infty$ the function  $\phi=0$ on $\Omega$.
\item \label{item:absencePos}
There are no positive thresholds for $H$,
and there are no nonzero generalized Dirichlet eigenfunctions for $H$
in $\mathcal B_0^*$ with a  positive eigenvalue.
\item  \label{item:clasReExt} 
Let $\phi\in \mathcal  B_0^*\cap H^1_{0,\mathrm{loc}}(\Omega)$, $E>0$
and $\rho\ge 0$, and assume that 
\begin{align*}
(H-E)\phi(x)=0\ \ \text{for $|x|> \rho$ in the distributional sense}.
\end{align*}
  Then there exists $\rho'\ge 0$ such that 
$\phi(x)= 0\ \ \text{for }|x|> \rho'$.
\end{enumerate}
\end{corollary}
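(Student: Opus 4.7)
The plan is to deduce the three assertions in order from Theorems~\ref{thm:priori-decay-b_0} and~\ref{thm:priori-decay-b_0b}, combined with Condition~\ref{cond:smooth2wea3n2}. Part 1) is a one-step application, part 2) requires an induction over the lattice of sub-Hamiltonians, and part 3) is then a brief contradiction using parts 1) and 2).

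For part 1), I would exploit that a generalized Dirichlet eigenfunction satisfies the hypothesis of both Theorems~\ref{thm:priori-decay-b_0} and~\ref{thm:priori-decay-b_0b} with $\rho=0$. Thus Theorem~\ref{thm:priori-decay-b_0} immediately gives $E+\tfrac12\alpha_0^2\in\mathcal T(H)\cup\{\infty\}$. If additionally $\alpha_0=\infty$, Theorem~\ref{thm:priori-decay-b_0b} provides $\rho'\ge 0$ with $\phi$ vanishing on the open subset $U=\{x\in\Omega:|x|>\rho'\}$ of $\Omega$. This $U$ is non-empty because every $\mathbf X^a\setminus\Omega_a$ is compact, so $\Omega$ contains vectors of arbitrarily large norm, and Condition~\ref{cond:smooth2wea3n2} applied to $H=H^{a_{\max}}$ then forces $\phi\equiv 0$ on $\Omega$.

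For part 2), I would perform induction on $a\in\vA$ ordered by inclusion, proving at each step the slightly stronger claim that $H^a$ has no positive thresholds and no nonzero generalized Dirichlet eigenfunction in $\mathcal B_0^*\cap H^1_{0,\mathrm{loc}}(\Omega^a)$ with positive eigenvalue. The base case $a=a_{\min}$ is trivial, since $H^{a_{\min}}=0$ on $\mathbb C$ and $\mathcal T(H^{a_{\min}})=\emptyset$. For the inductive step at $a$, since $\mathcal T(H^a)=\bigcup_{b\subsetneq a}\sigma_{\pupo}(H^b)$, the inductive hypothesis (applied to each $b\subsetneq a$, using $L^2\subset\mathcal B_0^*$) yields $\mathcal T(H^a)\subset(-\infty,0]$. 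If $\phi\in\mathcal B_0^*\cap H^1_{0,\mathrm{loc}}(\Omega^a)$ were a generalized Dirichlet eigenfunction of $H^a$ with eigenvalue $E>0$, then applying the analog of part 1) to $H^a$ (itself a generalized $N$-body Hamiltonian of the same type) would give $E+\tfrac12\alpha_0^2\in\mathcal T(H^a)\cup\{\infty\}$; since $E+\tfrac12\alpha_0^2>0$ we conclude $\alpha_0=\infty$, and hence $\phi\equiv 0$. Part 3) is then immediate: Theorem~\ref{thm:priori-decay-b_0} gives $E+\tfrac12\alpha_0^2\in\mathcal T(H)\cup\{\infty\}$, part 2) excludes positive elements of $\mathcal T(H)$, so $\alpha_0=\infty$, and Theorem~\ref{thm:priori-decay-b_0b} yields the desired $\rho'$.

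The main obstacle is the inductive transfer in part 2): one must justify that Theorems~\ref{thm:priori-decay-b_0} and~\ref{thm:priori-decay-b_0b} as well as Condition~\ref{cond:smooth2wea3n2} apply to each sub-Hamiltonian $H^a$, not just to $H$. This is natural because $\{\mathbf X_b\cap\mathbf X^a\}_{b\subset a}$ together with $\{V_b\}_{b\subset a}$ and $\{\Omega_b\}_{b\subset a}$ form a generalized $(\#a-1)$-body system on $\mathbf X^a$ that inherits Condition~\ref{cond:smooth2wea3n1} or~\ref{cond:smooth2wea3n12}, and Condition~\ref{cond:smooth2wea3n2} is postulated uniformly over all $a\neq a_{\min}$. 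A minor bookkeeping point in part 1) is the non-emptiness of the vanishing set $U$, which is secured by the compactness of each $\mathbf X^a\setminus\Omega_a$.
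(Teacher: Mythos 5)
Your proposal is correct and follows essentially the same route as the paper's own (very terse) proof: both parts 1) and 3) are direct applications of Theorems~\ref{thm:priori-decay-b_0} and \ref{thm:priori-decay-b_0b} combined with Condition~\ref{cond:smooth2wea3n2}, and part 2) is obtained by exactly the induction over sub-Hamiltonians (equivalently on $N$) that the paper indicates, with the conditions inherited by each $H^a$. Your additional bookkeeping (non-emptiness of the vanishing set via compactness of the hard cores, and $L^2\subset\mathcal B_0^*$ in the inductive step) correctly fills in the details the paper leaves implicit.
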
 
\begin{proof}
The assertions follow from
Theorems~\ref{thm:priori-decay-b_0} 
and \ref{thm:priori-decay-b_0b}. 
Note that \ref{item:absencePos}  needs an 
induction argument on $N$, cf. \cite{IS1}.
\end{proof}

Using  the proof of Theorem~\ref{thm:priori-decay-b_0} 
we can  extend the result of \cite{Pe}, or \cite[Theorem~6.11]{HuS},
showing that the set of non-threshold eigenvalues of $H$ can accumulate only at points in $\mathcal T(H)$ from below,
see Lemma~\ref{lem:14.10.4.1.17ffaabb} and Remark~\ref{rem:180123}.

\subsubsection{LAP bounds}

Next we present the sharp \emph{LAP bounds}.
For any interval $I\subset \mathbb R$ let us set 
$$I_{\pm}:=\{z\in\mathbb C\,|\,\Re z \in I,\,0<\pm\Im z \leq 1\}.$$

\begin{thm}\label{thmlapBnd} 
Suppose 
{Condition~\ref{cond:smooth2wea3n1} or Condition \ref{cond:smooth2wea3n12}.
Let} $I\subset \mathbb R\setminus (\sigma_{\mathrm{pp}}(H)\cup\mathcal T(H))$ be a compact interval.
Then there
 exists $C>0$ such that for all $z\in I_\pm $ and $\psi\in \mathcal B$
\begin{align}\label{eq:lap-besov-spacebnd}
 \norm{R(z)\psi}_{\mathcal B^*}+\|pR(z)\psi\|_{\mathcal B^*}\leq C\norm{\psi}_{\mathcal B}.
\end{align}
\end{thm}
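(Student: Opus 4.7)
The plan is to argue by contradiction in the spirit of Agmon--H\"ormander \cite{AH,Ho}, replacing their $1$-body spherical estimates by the Rellich theorems (Theorems~\ref{thm:priori-decay-b_0}--\ref{thm:priori-decay-b_0b}) and invoking the Mourre estimate only as a static positivity input through commutators with the zeroth-order operator $B$.

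First I would set up the contradiction. Assuming \eqref{eq:lap-besov-spacebnd} fails, I extract sequences $(\psi_n)\subset\mathcal B$ and $(z_n)\subset I_+$ (the $I_-$ case being symmetric) with $\|\psi_n\|_{\mathcal B}\to 0$ while the normalisation $\|\phi_n\|_{\mathcal B^*}+\|p\phi_n\|_{\mathcal B^*}=1$ is enforced, $\phi_n=R(z_n)\psi_n$. After passing to subsequences $z_n\to E\in I$; Banach--Alaoglu yields weak-$*$ limits $\phi,p\phi\in\mathcal B^*$, and the uniform local $H^1$ bound yields strong $L^2_{\mathrm{loc}}$ convergence $\phi_n\to\phi$. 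Passing to the limit in $(H-z_n)\phi_n=\psi_n$ gives $(H-E)\phi=0$ on $\Omega$ in the distributional sense.

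Next, which I expect to be the main technical obstacle, I would establish $\phi\in\mathcal B_0^*$ by a uniform tail estimate. For a smooth radial cut-off $\chi_R$ supported in $\{|x|>R\}$, self-adjointness of $H$ gives
\begin{equation*}
\langle\phi_n,\mathrm i[H,\chi_R]\phi_n\rangle=2\,\mathrm{Im}\,\langle\chi_R\phi_n,\psi_n\rangle+2\,\mathrm{Im}\,z_n\,\langle\chi_R\phi_n,\phi_n\rangle.
\end{equation*}
Combining this with the Mourre estimate expressed via $\mathrm i[H,B]$, and exploiting the positive tail structure of $\mathrm i[H,\chi_R]$ outside a large ball for $E\notin\mathcal T(H)$, I expect to derive a bound
\begin{equation*}
\sup_{n:\,2^n>R}\bigl(2^{-n/2}\|F_n\phi_n\|_{\mathcal H}+2^{-n/2}\|F_n p\phi_n\|_{\mathcal H}\bigr)\le o_{R\to\infty}(1)+C\|\psi_n\|_{\mathcal B}
\end{equation*}
uniformly in $n$; letting $n\to\infty$ and then $R\to\infty$ would give $\phi,p\phi\in\mathcal B_0^*$. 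The difficulty here is that, since only finitely many static commutators with $B$ are available, the tail gain has to be extracted in one coercive step rather than through the regularisation/iteration scheme of Mourre's differential inequality method.

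Finally I would close the contradiction. Theorem~\ref{thm:priori-decay-b_0} applied to the generalized Dirichlet eigenfunction $\phi\in\mathcal B_0^*\cap H^1_{0,\mathrm{loc}}(\Omega)$ with $E\notin\mathcal T(H)$ forces $\alpha_0>0$, so $\phi$ decays exponentially; hence $\phi\in L^2(\Omega)\cap\mathcal D(H)$ is an honest eigenfunction of $H$ with $E\notin\sigma_{\mathrm{pp}}(H)$, whence $\phi=0$. A second coercive use of the $B$-commutator, localised now to a fixed ball $\{|x|<R_0\}$, should yield
\begin{equation*}
\|\phi_n\|_{\mathcal B^*}+\|p\phi_n\|_{\mathcal B^*}\le C\bigl(\|F(|x|<R_0)\phi_n\|_{\mathcal H}+\|\psi_n\|_{\mathcal B}\bigr),
\end{equation*}
whose right-hand side tends to $0$ by strong $L^2_{\mathrm{loc}}$ convergence $\phi_n\to\phi=0$ and $\|\psi_n\|_{\mathcal B}\to 0$. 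This contradicts the normalisation and completes the argument.
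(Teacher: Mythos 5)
Your overall architecture coincides with the paper's: a uniform a priori estimate reducing the $\mathcal B^*$ norm of $R(z)\psi$ to a local/weighted piece plus $\|\psi\|_{\mathcal B}$, then a compactness--contradiction argument in which the weak limit $\phi$ is shown to lie in $\mathcal B_0^*$, killed by Theorem~\ref{thm:priori-decay-b_0} together with $E\notin\sigma_{\mathrm{pp}}(H)$, after which the a priori estimate forces $\phi_n\to 0$ in $\mathcal B^*$. The treatment of $pR(z)$ via a final coercivity step also matches the paper, which disposes of it with the elementary bound $p^*\Theta'p\le C\mathop{\mathrm{Re}}(\Theta'(H-z))+C'\Theta'$.

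The genuine gap is the step you yourself flag as the ``main technical obstacle'': the uniform tail estimate. As written it would fail, because $\mathrm i[H,\chi_R]=\mathop{\mathrm{Re}}(\chi_R'\,B)$ has no sign, and no amount of ``combining with the Mourre estimate'' turns a first-order radial cut-off commutator into a coercive quantity. What the paper actually does (Lemma~\ref{lemma:12.7.2.7.9}) is take the propagation observable $P=\Theta^{1/2}f(H)\zeta(B)f(H)\Theta^{1/2}$, with $\Theta=\Theta_\nu=1-(1+r/2^\nu)^{-1}$ and $\zeta$ the smooth sign function of the radial momentum $B$. Positivity of $2\mathop{\mathrm{Im}}(P(H-z))$ then comes from two separate mechanisms: the term $\Theta'^{1/2}B\zeta(B)\Theta'^{1/2}\ge 0$ produced by differentiating the weight, and the term $f(H)(\mathbf D\zeta(B))f(H)$, which is bounded below by $\zeta'(B)^{1/2}f(H)(\mathbf DB)f(H)\zeta'(B)^{1/2}$ plus errors; since $\zeta'$ is supported near $B=0$, Corollary~\ref{cor:171007} (the Mourre estimate recast as $f(H)(\mathbf DB)f(H)\ge f(H)r^{-1/2}(\sigma^2-B^2)r^{-1/2}f(H)-Cr^{-2}$) makes this piece positive. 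Moreover the error incurred in commuting $\zeta(B)$ through $g(H)$ requires the second commutator $\mathrm i[\mathrm i[H,B],B]$ of Lemmas~\ref{lemma:sing} and~\ref{lem:17111515}, which is exactly where the $r^{-1-2\kappa}\Theta$ error (and hence the $2^{-\kappa\nu}\|\phi\|_{\mathcal B^*}^2$ tail gain, uniform in $\nu$) comes from. Your sketch contains neither the sign function $\zeta(B)$ nor any mention of the second commutator, so the ``one coercive step'' you hope for is not substantiated; supplying it is essentially the content of Sections~\ref{sec:17093015} and~\ref{sec:LAP} of the paper. A minor further remark: the estimate you aim for, with right-hand side $o_{R\to\infty}(1)+C\|\psi_n\|_{\mathcal B}$, is stronger than what the commutator method yields; the paper only obtains $\|\Theta'^{1/2}\phi\|^2\le C(\|\phi\|_{\mathcal B^*}\|\psi\|_{\mathcal B}+2^{-\kappa\nu}\|\phi\|_{\mathcal B^*}^2)$, i.e.\ an a priori bound quadratic in $\phi$ itself, and this is why the compactness argument (rather than a direct absorption) is needed.
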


The LAP bound stated in Theorem~\ref{thmlapBnd} 
is new in that they allow form-bounded local singularities (by Condition~\ref{cond:smooth2wea3n1}). 
To our knowledge the LAP bounds for form-bounded singularities
have been obtained only for $N=1$ and with $\Omega=\R^3$, see  \cite{BMP}. 
See also \cite[pp.~270--271]{ABG2} for a review of the LAP bounds. 

The weighted $L^2$ space version of the LAP bounds, proven in
\cite{ABG1,Ta}, are essentially consequences of \eqref{eq:nest} and
\eqref{eq:lap-besov-spacebnd} since 
{Condition~\ref{cond:smooth2wea3n12} 
is almost identical with the conditions of \cite{ABG1,Ta}.}
Similarly  our result includes 
the Besov space refinements of \cite{BGM1,BGM2} (at least
essentially).  In conclusion, 
Theorem~\ref{thmlapBnd} extends previous results for the usual
$N$-body Schr\"{o}dinger operators without {hard-cores.}

We also mention that the LAP bounds for hard-core models were considered 
in \cite {BGS} with some regularity conditions on the obstacles 
and with operator-bounded local singularities. 
Our result is more general, again.

\subsubsection{Rescaled Graf function}\label{subsec:resc-graf-funct}

To state  \textit{microlocal resolvent bounds} 
 we  introduce certain  (rescaled) operators $A_R, B_R$.
They are defined in terms of a function $r_1\in C^\infty (\bX)$ 
with the following properties. 
We will not verify the existence of such $r_1$,
but only enumerate the properties required in this paper.
Let us denote the gradient vector field and the Hessian of $r_1^2/2$ by 
$$\tilde\omega_1=\tfrac12\mathop{\mathrm{grad}} r_1^2,\quad 
\tilde h_1=\tfrac12\mathop{\mathrm{Hess}}r_1^2,$$
respectively. It is standard to identify 
the tangent space of $\bX$ at each $x\in\bX$ with $\bX$ itself.
Then we may consider $\tilde \omega_1(x)\in \bX$ for each $x\in\mathbf X$.
Similarly, using the inner product structure on $\bX$, 
we may consider $\tilde h_1(x)\colon \bX\to \bX$ as a   linear map for each $x\in \bX$. 
Let  $\mathbb N_0=\{0\}\cup\mathbb N$ with $\mathbb
N=\{1,2,\ldots\}$. We assume that there exist $r_1\in C^\infty(\mathbf X)$ and 
a smooth partition  of unity $\{\eta_{1,a}\}_{a\in \vA}$ on $\bX$ obeying:
 \begin{enumerate}[\normalfont (1)]
 \item\label{item:6k} 
There exist $c,C>0$ such that 
for any $a,b\in\mathcal A$ with $a\not\subset b$ and $x\in\mathop{\mathrm{supp}}\eta_{1,b}$
\begin{align}
|x^a|\ge c,\quad |x^b|\le C;
\label{eq:171018}
\end{align}

\item 
There exists $C'>0$ such that 
for any $x\in\mathbf X$
\begin{align}
r_1(x)\ge 1,\quad 
\bigl|r_1(x)-|x|\bigr|\le C';
\label{eq:17121}
\end{align}

 \item \label{item:5k}
There exists $c'>0$ such that 
for any $a\in\mathcal A$ and $x\in\bX$ with $|x^a|\le c'$ 
\begin{align}
\tilde \omega_1^a(x)=0;
\label{eq:17100721}
\end{align}

\item \label{item:4k}
For any $x,y\in\bX$
$$\bigl\langle y,\tilde h_1(x)y\bigr\rangle\ge \sum_{a\in\mathcal A}\eta_{1,a}(x) |y_a|^2;$$ 

\item\label{item:7ik} 
For any $\alpha\in \N_0^{\dim \bX}$ and $k\in
 \N_0$ there exists $C_{\alpha k}>0$ such that for any $x\in\bX$ 
 \begin{equation}\label{eq:boun_constr}
 \sum_{a\in\mathcal A}\bigl|\partial ^\alpha\eta_{1,a}(x)\bigr|
+\bigl|\partial^\alpha(x\cdot\nabla)^k\parb{\tilde \omega_1(x)-x}\bigr|\leq C_{\alpha k}.
 \end{equation}
 \end{enumerate}
For a construction of such $r_1$ and $\{\eta_{1,a}\}_{a\in\mathcal A}$ we refer to \cite{Gr},
see also \cite{De, Sk}.
We note that \eqref{eq:17121} in fact is verified from the other properties.
The former bound of \eqref{eq:17121} always holds if we add a large positive constant to $r_1$,
and the latter follows by integrating \eqref{eq:boun_constr} with $\alpha=0$ and $k=0$.

Now we set for large $R\ge 1$
$$
r_R(x)=Rr_1(x /R),\quad 
\tilde\omega_R=\tfrac12\mathop{\mathrm{grad}}r_R^2,\quad
\omega_R=\mathop{\mathrm{grad}}r_R,\quad
\eta_R(x)=\eta_1(x/R),
$$
and define the self-adjoint operators $A_R,B_R$ on $\mathcal H$ as 
\begin{align}
A_R=\tfrac12\bigl(\tilde\omega_R\cdot p+p\cdot \tilde \omega_R\bigr)
,\quad
B_R=\tfrac12\parb{\omega_R\cdot p+p\cdot \omega_R};\quad 
p=-\i\nabla,
\label{eq:1710220}
\end{align} 
respectively. 
The vector field $\tilde\omega_R$ is called the \textit{rescaled Graf vector field}, 
and $A_R$ is a \textit{conjugate operator} in the Mourre theory,
cf. \cite {Sk, IS1}. 
We remark that, however, the operator $A_R$ is only auxiliarily  used in this paper, 
or we can actually remove it completely from this paper. 
Instead, $B_R$ plays a central role in our theory. 
We will investigate its properties in Section~\ref{sec:17093015}.

In the sequel we will often suppress the dependence on the parameter $R\ge 1$ of the above quantities,
writing simply 
$$r=r_R,\quad 
\tilde\omega=\tilde\omega_R,\quad \omega=\omega_R,\quad A=A_R,\quad B=B_R,\quad \eta=\eta_R.$$

\subsubsection{Microlocal resolvent bounds and applications}

Now we present some  microlocal resolvent bounds.
Define a function $d:\R\to\R$ as
\begin{equation*}
d(\lambda)=\begin{cases}
\inf\bigl\{\lambda-\tau\,\big|\,\tau\in \vT(H)\cap (-\infty,\lambda]\bigr\}&
\text{if }\vT(H)\cap (-\infty,\lambda]\neq \emptyset,\\
1&\text{if }\vT(H)\cap (-\infty,\lambda]=\emptyset,
\end{cases}
\end{equation*} 
and introduce for any $\lambda\in\mathbb R$ and $I\subset \mathbb R$
\begin{align}
\gamma(\lambda)=\sqrt{2d(\lambda)},\quad 
\gamma_-(I)=\inf_{\lambda\in I}\gamma(\lambda)=\inf_{\lambda\in I}\sqrt{2d(\lambda)}.
\label{eq:17111910}
\end{align}
With $\delta\in (0,1/2]$ from Condition~\ref{cond:smooth2wea3n1} or  
\ref{cond:smooth2wea3n12} let 
\begin{align}
\kappa=\delta/(1+2\delta)\in (0,1/4].
\label{eq:171114}
\end{align}

\begin{thm}\label{microLoc} 
Suppose 
Conditions~\ref{cond:smooth2wea3n1} or \ref{cond:smooth2wea3n12}. 
Let $I\subset \mathbb R\setminus (\sigma_{\mathrm{pp}}(H)\cup\mathcal T(H))$
be a compact interval, and take both $R\ge 1$ and $\tilde\gamma>0$ sufficiently large.
Then for any $\beta\in (0,\kappa)$ and 
$F\in C^\infty(\mathbb R)$ with 
$$\mathop{\mathrm{supp}}F\subset (-\infty,\gamma_-(I))\cup
(\tilde\gamma,\infty) \mand 
F'\in C^\infty_{\mathrm c}(\mathbb R),$$
there exists $C>0$ such that 
for all $z\in I_\pm$ and $\psi\in L^2_{1/2+\beta}$
 \begin{align*}
 \begin{split}
\|F(\pm B)R(z)\psi\|_{L^2_{-1/2+\beta}}
\le C\|\psi\|_{L^2_{1/2+\beta}},
\end{split}
 \end{align*} 
respectively.
\end{thm}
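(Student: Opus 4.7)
The plan is to derive these bounds from the Besov-space LAP bound of Theorem~\ref{thmlapBnd} by a stationary positive-commutator argument that uses $B$ as the microlocal observable, in the spirit of the systematic use of $B$ announced in the introduction. For definiteness I work with the ``$+$'' sign; the ``$-$'' case follows by complex conjugation.

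The argument begins with two reductions. Choose $\chi\in C_{\mathrm c}^{\infty}(\R\setminus(\sigma_{\pp}(H)\cup\vT(H)))$ with $\chi=1$ on a neighborhood of $I$. The splitting $R(z)=\chi(H)R(z)+(1-\chi(H))R(z)$ reduces the task to estimating $F(B)\chi(H)R(z)\psi$ in $L^2_{-1/2+\beta}$, since $(1-\chi(H))R(z)$ is uniformly bounded on $\mathcal H$ for $z\in I_+$. Next, the infinitesimal form-boundedness of $V$ relative to $H_0$ combined with the boundedness of $\omega_R$ yields $B^2\le C(H_0+1)\le C'(H+1)$ as quadratic forms, so $\|B\chi(H)\|_{\mathcal L(\mathcal H)}\le C_I$ uniformly in $R\ge 1$; choosing $\tilde\gamma>C_I$ makes the contribution from $\supp F\subset(\tilde\gamma,\infty)$ vanish identically. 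What remains is the genuine case $\supp F\subset(-\infty,\gamma_-(I))$.

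For this main case I would pick a nondecreasing $G\in C^{\infty}(\R)$ with $G'$ compactly supported in $(-\infty,\gamma_-(I))$, equal to $1$ on a neighborhood of $\supp F|_{(-\infty,\gamma_-(I))}$ and vanishing on a neighborhood of $[\gamma_-(I),\infty)$, and introduce (after a radial truncation that will be removed at the end) the propagation observable
\begin{equation*}
 \Phi=\chi(H)\,G(B)\,\langle x\rangle^{2\beta}\,G(B)\,\chi(H).
\end{equation*}
Writing $\phi=R(z)\psi$ and using the Heisenberg-type identity
\begin{equation*}
 \langle\phi,i[H,\Phi]\phi\rangle=2\Im\langle\phi,\Phi\psi\rangle-2(\Im z)\langle\phi,\Phi\phi\rangle,
\end{equation*}
the main positive contribution to $i[H,\Phi]$ comes from $[H_0,\langle x\rangle^{2\beta}]$ symmetrised through $G(B)$. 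On the spectral window $\chi(H)\mathcal H$ the Mourre-like identity $B^2\approx 2(H-H^a)$ within each cluster $a$, together with the definition \eqref{eq:17111910} of $\gamma_-(I)$, forces $|B|\ge\gamma(\lambda)$ classically, so where $G'\ne 0$ the commutator produces a lower bound of the shape
\begin{equation*}
 i[H,\Phi]\ge c_I\,\chi(H)F(B)\langle x\rangle^{2\beta-1}F(B)\chi(H)-\mathcal E,
\end{equation*}
and pairing with $\phi$ while controlling $\mathcal E$ via Theorem~\ref{thmlapBnd} and Cauchy--Schwarz yields the desired bound.

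The hard part will be the control of the error $\mathcal E$ produced by the commutators $[V,G(B)]$, $[V,\langle x\rangle^{2\beta}]$ and $[H_0,G(B)]$. Since $G(B)$ is only a zeroth-order operator defined through a functional calculus (e.g.\ via an almost-analytic extension of $G$), its commutator with $V$ cannot be computed pointwise and must be unfolded through the resolvents $(B-\zeta)^{-1}$ viewed as maps $H^1_0(\Omega)\to H^1_0(\Omega)^*$ in the quadratic-form framework of \eqref{eq:100516}; here the localisation \eqref{eq:17100721} of $\tilde\omega$ away from sub-cluster singularities, together with the uniform bounds \eqref{eq:boun_constr}, is essential. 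This is also the step where the upper bound $\beta<\kappa=\delta/(1+2\delta)$ arises: it is the optimal range within which the weighted commutator expansion against the $(1+2\delta)$-decay of $V^{\mathrm{sr}}_a$ and $\nabla V^{\mathrm{lr}}_a$ closes, so that $\mathcal E$ can be absorbed into the principal positive term.
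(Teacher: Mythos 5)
Your overall architecture (split into a high radial momentum regime and a regime $B<\gamma_-(I)$, treat each by a stationary commutator argument with $B$ feeding off Theorem~\ref{thmlapBnd}) is the paper's architecture, but two of your key steps do not hold as stated.

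First, the high-momentum regime does not ``vanish identically''. From $B^2\le C'(H+1)$ you get that $B\chi(H)$ is bounded, but since $B$ and $\chi(H)$ do not commute this does not imply $F(B)\chi(H)=0$ for $F$ supported in $(\tilde\gamma,\infty)$ with $\tilde\gamma>\|B\chi(H)\|$; the spectral projection of $B$ onto $(\tilde\gamma,\infty)$ does not annihilate $\mathrm{Ran}\,\chi(H)$. At best one gets $\|F(B)\chi(H)\|\le C_I/\tilde\gamma$, which is small but useless here: $R(z)\psi$ is only uniformly controlled in $\mathcal B^*$, and the theorem asks for the strictly stronger weight $L^2_{-1/2+\beta}$, a gain that cannot come from operator-norm smallness. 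The paper must therefore prove a genuine estimate for this piece (Lemma~\ref{lem:17112122} and Corollary~\ref{cor:1711241720}): one writes $F_m(B)r^{2s}F_m(B)\le C2^{-2m}r^sF_m(B)B^2F_m(B)r^s+\dots$ and absorbs $B^2$ into $\mathop{\mathrm{Re}}(H-z)$ on the energy window, closing the estimate for $m$ large. This step is missing from your proposal.

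Second, in the main regime your claimed lower bound
\begin{equation*}
 i[H,\Phi]\ \ge\ c_I\,\chi(H)F(B)\langle x\rangle^{2\beta-1}F(B)\chi(H)-\mathcal E
\end{equation*}
does not follow from the mechanism you describe. The symmetrised commutator $[H_0,\langle x\rangle^{2\beta}]$ produces essentially $2\beta\,G(B)\langle x\rangle^{2\beta-1}B\,G(B)$, which has no definite sign on $\mathop{\mathrm{supp}}G$ (that support necessarily contains $B\le 0$, where the term is negative), while the contribution of $G'(B)$ through the Mourre-type bound of Corollary~\ref{cor:171007} is only favourable where $\sigma^2-B^2\ge 0$, i.e.\ not for $B$ very negative. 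The paper resolves this by building an extra factor $(\sigma'+3\epsilon-b)^{2\kappa'}$ into the $B$-localiser (Lemma~\ref{microLocb}, \eqref{eq:171118v}): its derivative combines algebraically with the weight commutator to give $2\kappa'\bigl(\sigma^2-B(\sigma'+3\epsilon)\bigr)$, which is uniformly positive on $\{B\le\sigma'+2\epsilon\}$ including $B\to-\infty$. Without this (or an equivalent) cancellation your principal term is not sign-definite and cannot dominate the errors. Your diagnosis of where $\beta<\kappa$ enters (the limited decay of the second commutator with $V$) is, by contrast, consistent with the paper's Lemma~\ref{lemma:sing}.
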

\begin{remarks*}
\begin{enumerate}
\item
 The critical exponent $\kappa$ in Theorem~\ref{microLoc}
and Corollaries~\ref{cor:holdCon} and \ref{cor:Som} below,
which is worse than $\delta$,
comes from an optimization in the proof of Lemma~\ref{lemma:sing}
under Condition~\ref{cond:smooth2wea3n1}. 
Under Condition~\ref{cond:smooth2wea3n12} 
we can actually choose $\kappa=\delta$.
\item
If higher commutators were available like in \cite{GIS}, 
similar bounds would hold for powers of the resolvent. 
However, we have at most  second commutators available 
under 
Conditions~\ref{cond:smooth2wea3n1} or \ref{cond:smooth2wea3n12},
cf.\ Lemma~\ref{lemma:sing}. For the same reason we need $\beta$ to be small.
\end{enumerate}
\end{remarks*}

The first application of the above results is H\"older continuity of the resolvent, 
and in particular the {LAP.}

\begin{corollary} \label{cor:holdCon} 
Suppose 
 Conditions~\ref{cond:smooth2wea3n1} or \ref{cond:smooth2wea3n12}.
Let $I\subset \mathbb R\setminus (\sigma_{\mathrm{pp}}(H)\cup\mathcal T(H))$
be a compact interval, 
and let $s>1/2$ and $\beta\in (0, \min\{\kappa, s-1/2\})$.
Then there 
exists $C>0$ such that 
for all $k\in\{0,1\}$, $z\in I_\pm$ and $z'\in I_\pm$, respectively,
\begin{align}
\|p^kR(z)-p^kR(z')\|_{\mathcal L(L^2_s,L^2_{-s})}
\le C|z-z'|^\beta.
\label{eq:171125}
\end{align}
In particular, for any $E\in I$ and $s>1/2$ the following boundary values exist:
\begin{align*}
p^kR(E\pm \i 0):=\lim_{\epsilon\to 0_+}p^kR(E\pm\i \epsilon)
\text{\ \ in }\mathcal L(L^2_s,L^2_{-s}),
\end{align*} 
respectively. The same boundary values are realized (in an extended
form) as 
\begin{align*}
p^kR(E\pm \i 0)= \swslim_{\epsilon\to 0_+} p^kR(E\pm\i \epsilon)
\text{\ \ in }\vL(\mathcal B,\mathcal B^*),
\end{align*}
respectively.
\end{corollary}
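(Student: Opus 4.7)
My plan is to deduce the Hölder bound \eqref{eq:171125} from the first resolvent identity
\[
R(z)-R(z')=(z-z')R(z)R(z'),
\]
and then obtain the existence of boundary values and the strong--weak* identification as standard consequences.

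For \eqref{eq:171125}, I would restrict to $z,z'\in I_+$ (the $I_-$ case being symmetric), and estimate $R(z)R(z')$ by a microlocal decomposition in $B$. Choose a smooth partition $1=F(B)+\chi(B)$ with $\chi\in C^\infty_{\mathrm c}(\R)$ supported in a small thickening of the union of the two ``bad shells'' $[-\tilde\gamma,-\gamma_-(I)]\cup[\gamma_-(I),\tilde\gamma]$, and $F=1-\chi$ symmetric under $B\mapsto -B$ and supported in the intersection of the good regions of Theorem~\ref{microLoc} for $R(z)$ (with $z\in I_+$) and $R(\bar z)$ (with $\bar z\in I_-$). For the good part $R(z)F(B)R(z')$, factor $F=F_1\cdot F_2$ with each $F_j$ still in the good region (e.g.\ $F_2=F$ and $F_1$ a smooth thickening with $F_1=1$ on $\supp F_2$) and move each $F_j(B)$ past the adjacent resolvent by self-adjointness of $B$, producing
\[
\langle g,R(z)F(B)R(z')f\rangle = \langle F_1(B)R(\bar z)g,\,F_2(B)R(z')f\rangle.
\]
Two applications of Theorem~\ref{microLoc} (with the appropriate $\pm$ sign and cover of supports by the reflection $B\mapsto -B$) give weight gains in the $L^2$-scale that, combined with the factor $(z-z')$ from the resolvent identity and a final real-interpolation step, produce the Hölder factor $|z-z'|^\beta$ on this piece. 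For the bad part $R(z)\chi(B)R(z')$, I would commute $\chi(B)$ through the resolvents and expand $[H,\chi(B)]$ via a Helffer--Sj\"ostrand representation in terms of $[H,B]$ and $[B,[H,B]]$; these first- and second-order commutators are the maximal regularity available under Conditions~\ref{cond:smooth2wea3n1} or \ref{cond:smooth2wea3n12}, and this is precisely the step which enforces the restriction $\beta<\kappa$. The LAP of Theorem~\ref{thmlapBnd} is then used to combine the pieces.

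Given \eqref{eq:171125}, the existence of $p^kR(E\pm\i 0)$ in $\vL(L^2_s,L^2_{-s})$ follows from Cauchy completeness of the target space. For the extended form in $\vL(\mathcal B,\mathcal B^*)$, I fix $\psi\in\mathcal B$ and approximate it by $\psi_n\in L^2_s$ with $\psi_n\to\psi$ in $\mathcal B$ (using density of $L^2_s$ in $\mathcal B$); then $p^kR(E\pm\i\epsilon)\psi_n\to p^kR(E\pm\i 0)\psi_n$ in $L^2_{-s}\supset\mathcal B^*$, in particular weakly in $\mathcal B^*$. The uniform bound $\|p^kR(E\pm\i\epsilon)\psi\|_{\mathcal B^*}\le C\|\psi\|_{\mathcal B}$ from Theorem~\ref{thmlapBnd} lets me pass to the $n\to\infty$ limit and extract strong--weak* convergence of $p^kR(E\pm\i\epsilon)\psi$ to the previously constructed $p^kR(E\pm\i 0)\psi$.

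The main obstacle is the bad-part estimate: extracting a Hölder improvement of order $|z-z'|^{\beta-1}$ from $R(z)\chi(B)R(z')$ beyond the trivial Hilbert-space bound $(\Im z)^{-1}$, using only up to the second commutator $[B,[B,H]]$. The sharp exponent $\beta<\kappa=\delta/(1+2\delta)$ originates precisely here, as anticipated by the remark following Theorem~\ref{microLoc}.
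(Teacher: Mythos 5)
Your ingredients are the right ones (first resolvent identity, a partition of unity in $B$, Theorem~\ref{microLoc} applied to each factor, Theorem~\ref{thmlapBnd} to close), and your last two steps --- existence of the boundary values by completeness and the $s$-$w^*$ realization in $\vL(\mathcal B,\mathcal B^*)$ by density plus the uniform LAP bound --- are essentially the standard arguments and are fine. The core H\"older estimate, however, has a genuine gap. For the ``good part'' $R(z)F(B)R(z')$, after moving $F_1(B)$ and $F_2(B)$ onto the adjacent resolvents, Theorem~\ref{microLoc} places $F_1(B)R(\bar z)g$ and $F_2(B)R(z')f$ each in $L^2_{-1/2+\beta}$; the pairing of two elements of $L^2_{-1/2+\beta}$ is not absolutely convergent (the combined weight is $r^{-1+2\beta}$ with $\beta<1/2$), so the identity you write down does not define a bounded form and no uniform bound on $R(z)F(B)R(z')$ in $\vL(L^2_s,L^2_{-s})$ follows; the unspecified ``real-interpolation step'' has no second endpoint to interpolate with. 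The missing device is a dyadic spatial localization between the two resolvents: the paper writes $R(z)-R(z')=\chi_mR(z)\chi_m-\chi_mR(z')\chi_m+(\text{tails})$, bounds the tails by $C2^{-\beta m}$ using Theorem~\ref{thmlapBnd}, rewrites the localized difference as $\chi_mR(z)\bigl((z-z')\chi_{m+1}-[H,\chi_{m+1}]\bigr)R(z')\chi_m$, inserts the $B$-partition there, and uses that the compactly supported factor $\chi_{m+1}$ maps $L^2_{-1/2+\beta}$ into $\mathcal B$ with norm $O(2^{(1-\beta)m})$; this makes the composition well defined, gives $C2^{(1-\beta)m}|z-z'|+C2^{-\beta m}$, and the choice $2^m\sim|z-z'|^{-1}$ produces exactly the exponent $\beta$. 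Without such a mechanism converting spatial decay into powers of $|z-z'|$, the H\"older factor cannot be extracted from the resolvent identity.

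Your ``bad part'' is also misidentified and, as described, untreatable. The outgoing bad shell $[\gamma_-(I),\tilde\gamma]$ (relevant for $R(z')$ on the right, $z'\in I_+$) and the incoming one $[-\tilde\gamma,-\gamma_-(I)]$ (relevant for $R(\bar z)$ on the left) are disjoint, so a two-piece partition $F_-+F_+=1$ with $\supp F_-\subset(-\infty,\gamma_-(I))$ and $\supp F_+\subset(-\gamma_-(I),\infty)$ already covers all of $\mathbb R$, each piece being attached to the resolvent for which it lies in the good region; no region of $B$-spectrum requires both resolvents to be bad simultaneously, and no expansion of $[H,\chi(B)]$ is needed. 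Nor could such an expansion help: commutators with $B$ yield gains in spatial decay, not in $|z-z'|$, so they cannot by themselves produce the required $|z-z'|^{\beta-1}$ improvement over the trivial bound. Finally, the restriction $\beta<\kappa$ enters already through Theorem~\ref{microLoc} (and through the tail estimate requiring $\beta\le s-1/2$), not through a residual bad-shell term.
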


The second application is a  \emph{microlocal Sommerfeld uniqueness result},
which characterizes the limiting resolvents $R(E\pm\mathrm i0)$
by the \emph{Helmholtz equation} and  \emph{microlocal radiation conditions}.
Given $\psi\in L^2_{\mathrm{loc}}(\Omega)$,
we say a function $\phi\in H^1_{0,\mathrm{loc}}(\Omega)$ is 
a {\it generalized Dirichlet solution to $(H-E)\phi=\psi$}, 
if it satisfies 
$$(H-E)\phi=\psi\ \ \text{in the distributional sense}.$$

\begin{corollary} \label{cor:Som} 
Suppose 
Conditions~\ref{cond:smooth2wea3n1} or \ref{cond:smooth2wea3n12}.
Let $E\in \mathbb R\setminus (\sigma_{\mathrm{pp}}(H)\cup\mathcal T(H))$,
and take $R\ge 1$ sufficiently large.
Let $\psi\in r^{-\beta}\mathcal B$ with $\beta\in [0,\kappa)$.
Then $\phi=R(E\pm \i 0)\psi\in \mathcal B^*\cap H^1_{0,\mathrm{loc}}(\Omega)$
satisfies 
\begin{enumerate}[(1)]
\item \label{item:13.7.29.0.28}
$\phi$ is a generalized Dirichlet solution to $(H-E)\phi=\psi$,
\item \label{item:13.7.29.0.29}
there exists $\tilde\gamma>0$ such that 
for any $F\in C^\infty(\mathbb R)$ with
$$\mathop{\mathrm{supp}}F\subset (-\infty,\gamma(E))\cup (\tilde\gamma,\infty)\mand
F'\in C^\infty_{\mathrm c}(\mathbb R),$$
the functions 
$F(\pm B)\phi$ belong to $r^{-\beta}\mathcal B^*_0$,
\end{enumerate}
respectively.
Conversely, if $\phi'\in L^2_{-\infty}\cap H^1_{0,\mathrm{loc}}(\Omega)$ satisfies 
\begin{enumerate}[(1${}'$)]
\item \label{item:13.7.29.0.28b}
$\phi'$ is a generalized Dirichlet solution to $(H-E)\phi'=\psi$,
\item \label{item:13.7.29.0.29b}
there exists $\gamma>0$ such that 
for any $F\in C^\infty(\mathbb R)$ with
$$\mathop{\mathrm{supp}}F\subset (-\infty,\gamma) \mand
F'\in C^\infty_{\mathrm c}(\mathbb R),$$
the functions $F(\pm B)\phi'$ belong to $\mathcal B^*_0$,
\end{enumerate}
then $\phi'=R(E\pm \i 0 )\psi$, respectively.
\end{corollary}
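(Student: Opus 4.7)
\emph{Plan.} I treat the $+\mathrm i 0$ case; the $-\mathrm i 0$ case is completely symmetric.

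\emph{Forward direction.} Since $r\asymp|x|$ at infinity by \eqref{eq:17121}, one has the continuous inclusion $r^{-\beta}\mathcal B\subset\mathcal B$ for $\beta\geq 0$. Corollary~\ref{cor:holdCon} then realizes $\phi:=R(E+\mathrm i 0)\psi$ as $\swslim_{\epsilon\to 0_+}R(E+\mathrm i\epsilon)\psi$ in $\vL(\mathcal B,\mathcal B^*)$, giving $\phi\in\mathcal B^*\cap H^1_{0,\mathrm{loc}}(\Omega)$. Property (1) is immediate from $(H-E)R(E+\mathrm i\epsilon)\psi=\psi+\mathrm i\epsilon R(E+\mathrm i\epsilon)\psi\to\psi$ in $\mathcal D'(\Omega)$ as $\epsilon\to 0_+$. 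For property (2), Theorem~\ref{microLoc} supplies the uniform bound $\|F(+B)R(E+\mathrm i\epsilon)\psi\|_{L^2_{-1/2+\beta}}\leq C\|\psi\|_{L^2_{1/2+\beta}}$ whenever $\psi\in L^2_{1/2+\beta}$. Passing to $\epsilon\to 0_+$ and approximating a general $\psi\in r^{-\beta}\mathcal B$ by compactly supported $\psi_k\in L^2_\infty$ (for which $F(+B)R(E+\mathrm i 0)\psi_k\in L^2_{-1/2+\beta'}\subset r^{-\beta}\mathcal B^*_0$ at some $\beta'>\beta$), the uniform bound allows the $r^{-\beta}\mathcal B^*_0$-decay to be inherited by the limit $F(+B)\phi$.

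\emph{Converse direction.} Set $\phi=\phi'-R(E+\mathrm i 0)\psi$. Combining (1$'$) with property (1) gives $(H-E)\phi=0$ distributionally, and combining (2$'$) with property (2) gives $F(+B)\phi\in\mathcal B^*_0$ for every admissible $F$ with $\mathop{\mathrm{supp}}F\subset(-\infty,\gamma_0)$, where $\gamma_0:=\min\{\gamma,\gamma(E)\}>0$. If this forces $\phi\in\mathcal B^*_0$, then Corollary~\ref{cor:1710291201} identifies $\phi$ as an $L^2$-eigenfunction of $H$ at energy $E$, and since $E\notin\sigma_{\mathrm{pp}}(H)$ we conclude $\phi=0$.

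\emph{Main obstacle.} The remaining implication---that a distributional solution $\phi\in L^2_{-\infty}$ of $(H-E)\phi=0$ satisfying $F(+B)\phi\in\mathcal B^*_0$ for all admissible $F$ supported in $(-\infty,\gamma_0)$ must itself lie in $\mathcal B^*_0$---is the hard part. I would prove it by a positive commutator argument mirroring the proof of the LAP bounds (Theorem~\ref{thmlapBnd}) in the source-free setting: take $\chi\in C^\infty_{\mathrm c}(\mathbb R)$ localizing near $E$ (so that $(1-\chi(H))\phi$ is a lower-order term by $(H-E)\phi=0$) and a smooth nondecreasing $f$ with $f'\in C^\infty_{\mathrm c}(\mathbb R)$ and $\mathop{\mathrm{supp}}f'\subset[\gamma_0-\varepsilon,\tilde\gamma]$; form the bounded self-adjoint weight $M=f(B)\chi(H)$. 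The source-free identity $\langle\phi,\mathrm i[H-E,M]\phi\rangle=0$, together with the factorization $\mathrm i[H,f(B)]=f'(B)^{1/2}\mathrm i[H,B]f'(B)^{1/2}$ modulo $r^{-1-\kappa}$-errors and the Mourre-type positivity of $\mathrm i[H,B]$ on $\mathop{\mathrm{supp}}\chi$ (Lemma~\ref{lemma:Mourre1_hard}), yields an a priori control of the $\mathcal B^*$-profile of $f'(B)^{1/2}\phi$ by the radiation data on $\{B<\gamma_0\}$. Sliding $\mathop{\mathrm{supp}}f'$ upward exhausts the $B$-axis, producing $\phi\in\mathcal B^*_0$. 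The delicate point is to execute this at the critical value $B\approx\gamma(E)$, where the Mourre inequality is only marginally positive---precisely the same subtlety that drives the sharpness of Theorem~\ref{thmlapBnd}.
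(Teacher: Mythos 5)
Your forward direction and your reduction of the converse to the source-free case (setting $\phi''=\phi'-R(E+\i 0)\psi$, which solves $(H-E)\phi''=0$ and satisfies the radiation condition with $\psi=0$, and then invoking the Rellich theorem and $E\notin\sigma_{\mathrm{pp}}(H)$) agree with the paper. The gap is exactly the step you flag as the ``main obstacle'': you only sketch a strategy there, the strategy is not the one that closes the argument, and a necessary preliminary step is missing altogether. First, $\phi'$ is only assumed to lie in $L^2_{-\infty}$, i.e.\ in $L^2_s$ for some possibly very negative $s$, so before one can even run a Besov-type estimate one must upgrade $\phi''$ to $L^2_{-1}$. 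The paper does this by a bootstrap: it evaluates the identity coming from $2\mathop{\mathrm{Im}}\bigl(\chi_m r^t(H-\lambda)\bigr)$ on $\phi''=f(H)\phi''$ with $t=2s+2<0$, gaining half a power of $r$ at each step until $\phi''\in L^2_{-1}$. Your proposal never addresses this regularization in the weight.

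Second, for the final step ($t=0$, yielding $\phi''\in\mathcal B^*_0$) the paper does \emph{not} use the Mourre estimate at all. The Heisenberg derivative of the decreasing radial weight $\chi_m r^t$ is $-\bigl|(\chi_m r^t)'\bigr|^{1/2}B\bigl|(\chi_m r^t)'\bigr|^{1/2}$ up to lower-order terms; inserting $1=F_-(B)+F_+(B)$ with $\supp F_-\subset(-\infty,\gamma')$ and $\supp F_+\subset(\gamma'/2,\infty)$, the high-momentum piece automatically carries the favourable sign because $B\ge\gamma'/2>0$ on $\supp F_+$, while the low-momentum piece $F_-(B)\phi''$ is precisely what the radiation condition (2$'$) controls in $\mathcal B^*_0$. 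Your plan instead tries to extract positivity from $\i[H,B]\gtrsim r^{-1/2}(\sigma^2-B^2)r^{-1/2}$ on sliding windows in $B$; that term is negative for $B>\sigma$, so ``sliding $\supp f'$ upward to exhaust the $B$-axis'' loses positivity beyond the Mourre threshold, and the ``delicate point at $B\approx\gamma(E)$'' you mention is a genuine obstruction to your scheme rather than a technicality. The elementary sign of $B$ on the support of $F_+$, not the Mourre inequality, is the mechanism that makes the converse work.
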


\section{Preliminaries: Operator $B$}\label{sec:17093015}

In this section we provide various preliminaries needed for the proofs of our main results. 
In Section~\ref{subsubsec:Smooth sign function}
we introduce notation frequently used in the later arguments.
Section~\ref{subsec:Functional calculus} 
includes the Helffer--Sj\"ostrand formula and its direct application to compute commutators.
In Section~\ref{subsec:Operatat B} we formulate the self-adjoint realization of 
the conjugate operator $B$ from \eqref{eq:1710220}. 
We investigate the first commutator $\mathrm i[H,B]$ in Section~\ref{subsec:17111117}, 
and the second commutator $\mathrm i[\mathrm i[H,B],B]$ in Section~\ref{subsec:17111119}. 
The control of the second commutator $\mathrm i[\mathrm i[H,B],B]$ is a key,
and Section~\ref{subsec:17111119} is one of the most technical parts
of  the paper. 
Remark that for the absense of positive $L^2$-eigenvalues \cite{IS1} only
 a  first commutator  was needed.

\subsection{Notation}\label{subsubsec:Smooth sign function}

This is a short subsection devoted to  some notation only.

Let $T$ be an linear operator on $\mathcal H=L^2(\Omega)$ such that
$T,T^*:L^2_\infty\to L^2_\infty$, and let $t\in\mathbb R$.  Then we
say that   $T$ is an {\emph{operator of order $t$}}, if 
 for each $s\in\mathbb R$  the restriction  $T_{|L^2_\infty}$ extends to
 an operator $T_s\in\vL(L^2_{s}, L^2_{s-t})$. Alternatively stated,
 for  any $R\geq 1$ and with $r=r_R$
\begin{align*}
\|r^{s-t}Tr^{-s}f\|\le C_s\|f\| \text{ for all }f\in L^2_\infty.
\end{align*} Note  (for consistency) that  $T_s$ extends the restriction $T_{|
\vD(T)\cap L^2_{s}}$. 
 If   $T$ is of {order $t$}, we write 
\begin{align}
T=O(r^t).
\label{eq:1712022}
\end{align} 
Note also that, if $T=O(r^t)$ and $S=O(r^s)$, then $T^*=O(r^t)$ and $TS=O(r^{t+s})$.

Define the \emph{Sobolev spaces $\mathcal H^s$ of order $s\in\mathbb R$ 
associated with $H$} as
\begin{align}
\mathcal H^s=(H+E_{\mathrm{min}}+1)^{-s/2}\mathcal H;\quad
E_{\mathrm{min}}=\min\sigma(H).
\label{eq:17111317}
\end{align}
We note that
$$\vH^1=Q(H)=H^1_0(\Omega),\quad 
\mathcal H^2=\mathcal D(H),\quad 
\mathcal H^{-1}=(H^1_0(\Omega))^*,\quad 
\mathcal H^{-2}=(\mathcal H^2)^*.$$

Let $\chi\in C^\infty(\R)$ be a real-valued function such that 
\begin{align}
\chi(t)
=\left\{\begin{array}{ll}
1 &\mbox{ for } t \le 1, \\
0 &\mbox{ for } t \ge 2,
\end{array}
\right.
\quad
\chi'\le 0,
\ \mand\ 
\chi^{1/2},|\chi'|^{1/2}\in C^\infty(\mathbb R).
\label{eq:14.1.7.23.24}
\end{align}
 We then define \emph{smooth cut-off functions} 
$\chi_m,\bar\chi_m,\chi_{m,n}\in C^\infty(\mathbf X)$ for $n>m\ge 0$ and $R\ge 1$ as 
\begin{align}
\chi_m=\chi(r/2^m),\quad \bar \chi_m=1-\chi_m,\quad \chi_{m,n}=\bar\chi_m\chi_n.
\label{eq:1711021}
\end{align} 
Here $r=r_R$ in fact depends on $R\ge 1$, but the dependence on $R$ is suppressed.

Next we construct a \emph{smooth sign function} $\zeta\in C^\infty(\mathbb R)$.
The approach of \cite{IS2} was based largely 
on such a function,
but our construction below is a slightly simplified one.
Choose a function $\zeta_1\in C^\infty(\mathbb R)$
such that 
\begin{align*}
\zeta_1'(b)\ge 0 \ \ \text{for }b\in\mathbb R,
\quad
\sqrt{\zeta_1'}\in C^\infty_{\mathrm c}(\mathbb R),\quad 
\zeta_1(b)=\left\{
\begin{array}{ll}
-1& \text{for }b\le -1,\\
2b&\text{for }-1/4\le b\le 1/4,\\
1& \text{for }b\ge 1,
\end{array}
\right.
\end{align*}
and let $\zeta\in C^\infty(\mathbb R)$ be defined as 
\begin{align*}
\zeta(b)=\zeta_\epsilon(b)=\zeta_1(b/\epsilon)
\ \ \text{for }
\epsilon\in(0,1),\ b\in\mathbb R 
.
\end{align*}
We will use the following elementary properties of $\zeta$ (we omit
the proof).

\begin{lemma}\label{lem:170928}
For any $\epsilon\in(0,1)$
the above function $\zeta\in C^\infty(\mathbb R)$ satisfies that 
\begin{align*}
\zeta'(b)\ge 0\ \ \text{for }b\in\mathbb R,\quad 
\zeta'(b)=0\text{\ \ for }|b|\ge \epsilon,\quad
\sqrt{\zeta'}\in C^\infty_{\mathrm c}(\mathbb R),
\end{align*}
and, in addition, that for any $c>0$ and $b\in\mathbb R$
\begin{align*}
b\zeta(b)+c\zeta'(b)\ge \min\{\epsilon/8,2c/\epsilon\}.
\end{align*}
\end{lemma}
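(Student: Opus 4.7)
The plan is to verify the listed smoothness and support properties directly from the scaling identity $\zeta(b)=\zeta_1(b/\epsilon)$, and then to prove the main inequality by a case analysis on the magnitude of $|b|$ relative to $\epsilon$, matching the three piecewise regions in the definition of $\zeta_1$.

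For the first three properties, the chain rule gives $\zeta'(b)=\epsilon^{-1}\zeta_1'(b/\epsilon)\ge 0$, which vanishes exactly when $|b/\epsilon|\ge 1$, that is, for $|b|\ge \epsilon$, since $\zeta_1$ is constant on $(-\infty,-1]$ and $[1,\infty)$. Smoothness of $\sqrt{\zeta'}$ follows from writing $\sqrt{\zeta'(b)}=\epsilon^{-1/2}\sqrt{\zeta_1'}(b/\epsilon)$ and invoking $\sqrt{\zeta_1'}\in C^\infty_{\mathrm c}(\mathbb R)$.

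For the main inequality $b\zeta(b)+c\zeta'(b)\ge \min\{\epsilon/8,2c/\epsilon\}$, I would split $\mathbb R$ into three regimes. In the regime $|b|\ge \epsilon$, one has $\zeta(b)=\operatorname{sgn}(b)$ and $\zeta'(b)=0$, so $b\zeta(b)+c\zeta'(b)=|b|\ge \epsilon>\epsilon/8$. In the regime $|b|\le \epsilon/4$, the linear formula $\zeta(b)=2b/\epsilon$ gives $\zeta'(b)=2/\epsilon$, whence $b\zeta(b)+c\zeta'(b)=2b^2/\epsilon+2c/\epsilon\ge 2c/\epsilon$. In the intermediate regime $\epsilon/4\le |b|\le \epsilon$, monotonicity of $\zeta_1$ implies that $b$ and $\zeta(b)$ share sign and that $|\zeta(b)|\ge |\zeta_1(\pm 1/4)|=1/2$; hence $b\zeta(b)=|b|\,|\zeta(b)|\ge (\epsilon/4)(1/2)=\epsilon/8$, while $c\zeta'(b)\ge 0$ only helps.

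Taking the infimum of the three lower bounds $\epsilon$, $2c/\epsilon$, and $\epsilon/8$ yields $\min\{\epsilon/8,2c/\epsilon\}$, completing the argument. There is no genuine obstacle here; the only point worth flagging is that neither term alone suffices uniformly in $c$: the $c\zeta'$ contribution provides the bound near $b=0$ where $b\zeta(b)$ vanishes quadratically, while the geometric lower bound on $|\zeta|$ in the transition region rescues $b\zeta(b)$ where $\zeta'$ may be small. The two regimes combine cleanly precisely because $\zeta_1(\pm 1/4)=\pm 1/2$ matches $b=\pm \epsilon/4$, so the proof is essentially a transparent bookkeeping of the construction of $\zeta_1$.
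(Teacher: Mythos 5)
Your proof is correct, and since the paper explicitly omits the proof of this lemma, your three-regime case analysis ($|b|\le\epsilon/4$ using $\zeta'=2/\epsilon$, $\epsilon/4\le|b|\le\epsilon$ using $|\zeta(b)|\ge 1/2$ by monotonicity, and $|b|\ge\epsilon$ using $b\zeta(b)=|b|$) is exactly the elementary verification the authors intend. All cases are covered and each lower bound dominates $\min\{\epsilon/8,2c/\epsilon\}$, so nothing is missing.
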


\subsection{Functional calculus}\label{subsec:Functional calculus}

Here we present the Helffer--Sj\"ostrand formula to write down 
functions of self-adjoint operators, 
and its application to commutators. 
The following results are abstract and time-independent versions of similar results 
in the literature, typically more sophisticated ones.
We omit most of the proofs, 
and only refer e.g.\ to \cite[Lemma 3.5]{HeS}, \cite[Section 2]{GIS} or \cite[Appendix C]{DG}.

For any $t\in\mathbb R$ we set 
\begin{align*}
\mathcal F^t=\bigl\{f\in C^\infty(\mathbb R)\,\big|\, 
|f^{(k)}(x)|\le C_k\langle x\rangle^{t-k}\text{ for any }k\in\mathbb N_0\text{ and }x\in\mathbb R\bigr\}
.
\end{align*}
It is known that for any $f\in\mathcal F^t$, $t\in\mathbb R$,
there always exists an \textit{almost analytic extension} $\tilde f\in C^\infty (\C)$ such that 
\begin{equation*} 
\tilde f_{|\mathbb R}=f, 
\quad 
|\tilde f(z)|\le C\langle z\rangle^t,\quad 
\bigl|(\bar{\partial}\tilde{f})(z)\bigr| 
\leq C_k|\Im z|^{k}\left\langle z\right\rangle^{t-k-1} 
\ \ \text{for any }k\in\mathbb N_0.
\end{equation*} 
Here one can choose $\tilde f\in C^\infty_{\mathrm c}(\mathbb C)$
if $f\in C^\infty_{\mathrm c}(\mathbb R)$.

\begin{subequations}
\begin{lemma}\label{lem:A1} 
Let $T$ be a self-adjoint operator on $\vH$, and let $ f\in \mathcal F^t$ with $t\in\mathbb R$.
Take an almost analytic extension $\tilde f\in C^{\infty }(\C)$ of $f$,
and set 
\begin{align}
\mathrm d\mu_f(z)=\pi^{-1}(\bar\partial\tilde f)(z)\,\mathrm du\mathrm dv;\quad 
z=u+\i v.
\label{eq:17092920}
\end{align}
Then for any $k\in\mathbb N_0$ with $k>t$ 
the operator $f^{(k)}(T)\in\mathcal L(\mathcal H)$ is expressed as 
\begin{equation}\label{82a0} 
f^{(k)}(T) 
=
(-1)^kk!\int _{\C}(T -z)^{-k-1}\,\mathrm d\mu_f(z)
.
\end{equation}
\end{lemma}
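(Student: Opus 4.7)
The plan is to reduce the operator identity to a pointwise scalar identity by the spectral theorem and then invoke the Cauchy--Pompeiu (Stokes) formula. By functional calculus one can write $(T-z)^{-k-1} = \int_{\mathbb R}(\lambda-z)^{-k-1}\,\mathrm dE_T(\lambda)$ where $E_T$ is the spectral resolution of $T$, so \eqref{82a0} will follow from the scalar identity
\[
f^{(k)}(\lambda) = (-1)^kk!\int_{\mathbb C}(\lambda-z)^{-k-1}\,\mathrm d\mu_f(z),\quad \lambda\in\mathbb R,
\]
provided one can interchange the integrations. Both the norm convergence of the right-hand side of \eqref{82a0} and the Fubini step rely on the basic estimate $\|(T-z)^{-k-1}\|\le |\Im z|^{-k-1}$ combined with the almost analyticity bound $|\bar\partial\tilde f(z)|\le C_m|\Im z|^m\langle z\rangle^{t-m-1}$; choosing $m$ with $m\ge k+1$ and $m>t$ (possible since $k>t$) makes $\|(T-z)^{-k-1}\bar\partial\tilde f\|$ integrable uniformly up to the real axis and at infinity.

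First I would establish the $k=0$ case. The starting point is that $(\pi(z-\lambda))^{-1}$ is a fundamental solution of $\bar\partial$ on $\mathbb C$, so that for any $g\in C^\infty_{\mathrm c}(\mathbb C)$ the Cauchy--Pompeiu formula gives $g(\lambda) = \pi^{-1}\int_{\mathbb C}\bar\partial g(z)\,(\lambda-z)^{-1}\,\mathrm du\,\mathrm dv$. Since $f\in \mathcal F^t$ may be unbounded, I would apply this to $\tilde f\cdot \chi_N$, where $\chi_N$ is a smooth cutoff with $\chi_N=1$ on $|z|\le N$ and $\mathop{\mathrm{supp}}\chi_N\subset \{|z|<2N\}$; this introduces a commutator term $(\bar\partial\chi_N)\tilde f$ which, by the growth bound on $\tilde f$ and the decay of $\bar\partial\chi_N$, tends to zero as $N\to\infty$. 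Dominated convergence then gives the scalar $k=0$ identity, and reinserting the spectral measure yields the operator version.

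To pass from $k=0$ to arbitrary $k>t$ the cleanest move is integration by parts in the complex plane. Since $\bar\partial$ and $\partial_z$ commute, $\partial_z^k\tilde f$ is again an almost analytic extension, now of $f^{(k)}\in\mathcal F^{t-k}$. Applying the $k=0$ formula to $f^{(k)}$ with this extension and invoking
\[
\partial_z^k(T-z)^{-1} = k!\,(T-z)^{-(k+1)},
\]
followed by $k$ integrations by parts in $z$, produces the claimed identity; boundary contributions in the integration by parts vanish by the combined decay of $\bar\partial\tilde f$ in $|\Im z|$ and in $\langle z\rangle$, and the sign $(-1)^k$ arises from transferring $k$ factors of $\partial_z$ from $\tilde f$ to the resolvent.

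The main obstacle I expect is the careful justification of the cutoff argument in the $k=0$ step for $f\in\mathcal F^t$ with $t\ge 0$, where $\tilde f$ has polynomial growth and Pompeiu is not directly applicable; one must verify that the artificial boundary term at infinity decays uniformly enough in $\lambda$ to allow interchange with the spectral integral. Once this decay bookkeeping is in place, the remaining steps (commutation of $\partial_z$ with $\bar\partial$, Fubini, and the integration by parts) are routine.
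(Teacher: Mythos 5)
The paper gives no proof of this lemma: it is quoted as the ``well known Helffer--Sj\"ostrand formula'' with the verification explicitly omitted and delegated to the references \cite{HeS,GIS,DG}. Your argument is the standard one found there (Cauchy--Pompeiu for the case $k=0$, spectral theorem plus Fubini, then a reduction to that case for general $k$), and it is essentially correct. Two remarks. First, the obstacle you flag at the end is not actually present in your own proof structure: you only ever invoke the $k=0$ identity for $f^{(k)}\in\mathcal F^{t-k}$ with $t-k<0$, so $\tilde f$ (respectively $\partial_z^k\tilde f$) decays at infinity and the cutoff boundary term $(\bar\partial\chi_N)\cdot(\text{extension})\cdot(\lambda-z)^{-1}$ is $O(N^{t-k})\to0$; the case $t\ge0$ of the $k=0$ formula is never needed. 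Second, the step where you declare $\partial_z^k\tilde f$ to be an almost analytic extension of $f^{(k)}$, and the subsequent integrations by parts in $z$, require bounds on $\partial_z^j\bar\partial\tilde f$ that do not follow from the two estimates the paper lists for an almost analytic extension (which only control $\tilde f$ and $\bar\partial\tilde f$ themselves); they do hold for the standard construction, but a cleaner route avoiding this issue is to differentiate the scalar $k=0$ identity $k$ times in $\lambda$, using $\frac{\mathrm d^k}{\mathrm d\lambda^k}(\lambda-z)^{-1}=(-1)^kk!\,(\lambda-z)^{-k-1}$, which yields \eqref{82a0} directly with the stated sign and with differentiation under the integral justified by the bound $|\Im z|^{m-k-1}\langle z\rangle^{t-m-1}$ for $m$ large.
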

\end{subequations}

The expression \eqref{82a0} is the well known Helffer--Sj\"ostrand formula.
We omit its verification. The formula 
   is  useful when we compute and bound commutators. 
In general there are  several variations of  the  definition of a commutator,
and in this paper we do not fix a particular one. 
It will be  clear from the context 
in what sense we will be considering a commutator.
Typically, for symmetric operators $T,S$,  we first define $\mathrm
i[T,S]$ as the  symmetric quadratic form 
\begin{align*}
\langle \mathrm i[T,S]\rangle_\psi
=2\langle \mathop{\mathrm{Im}}(ST)\rangle_\psi
=\mathrm i\langle T\psi,S\psi\rangle-\mathrm i\langle S\psi,T\psi\rangle
\ \ \text{for }\psi\in \mathcal D(T)\cap\mathcal D(S),
\end{align*}
and then extend it to a larger space. 

Let us provide an example of a commutator formula derived from Lemma~\ref{lem:A1}.

\begin{corollary}\label{cor:A2} 
Let $T$ be a self-adjoint operator on $\vH$, 
 $S$ be a  symmetric  relatively $T$-bounded operator,
and assume that there exists a bounded extension
$$(|T|+1)^{-\epsilon/2}\bigl(\mathrm i[T,S]\bigr) (|T|+1)^{-\epsilon/2} \in\mathcal L(\mathcal H)
\ \ \text{for some }\epsilon\in [0,2].$$
Let a   real-valued $f\in \mathcal F^t$  with $t<1-\epsilon$ be given, and let $\mathrm d\mu_f$
be given by \eqref{eq:17092920}.
Then, as a quadratic form on $\mathcal D(f(T))\cap \mathcal D(S)$,
\begin{equation*}
\mathrm i[f(T),S] 
=-
\int _{\C}
(T-z)^{-1}\bigl(\mathrm i[T,S]\bigr) (T-z)^{-1}\,\mathrm d\mu_f(z),
\end{equation*}
and it extends to  a bounded self-adjoint operator on $\mathcal H$.
\end{corollary}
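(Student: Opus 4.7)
The plan is to begin with the Helffer--Sj\"ostrand representation from Lemma~\ref{lem:A1},
$$f(T)=\int_{\C}(T-z)^{-1}\,\d\mu_f(z),$$
understood sesquilinearly on $\vD(f(T))\times\vH$ via the spectral theorem and the scalar identity $\int_\C(\lambda-z)^{-1}\d\mu_f(z)=f(\lambda)$. Taking the commutator with $S$ and formally exchanging it with the integral yields the claimed formula once the resolvent commutator identity
$$\mathrm i[(T-z)^{-1},S]=-(T-z)^{-1}(\mathrm i[T,S])(T-z)^{-1}$$
is in hand; this identity I would verify on the dense subspace $\vD(T)\subset\vD(S)$ (using relative $T$-boundedness) by the standard manipulation $S\phi=(T-z)^{-1}(T-z)S\phi$ and rearrangement, and extend by duality, with $\mathrm i[T,S]$ throughout interpreted via its hypothesized bounded extension from $\vD((|T|+1)^{\epsilon/2})$ to its dual.

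For norm-convergence of the integral I would estimate
$$\|(T-z)^{-1}(\mathrm i[T,S])(T-z)^{-1}\|_{\vL(\vH)}\le \|(T-z)^{-1}(|T|+1)^{\epsilon/2}\|^2\,\|(|T|+1)^{-\epsilon/2}(\mathrm i[T,S])(|T|+1)^{-\epsilon/2}\|\le C\,\frac{\langle z\rangle^\epsilon}{|\Im z|^2},$$
using the spectral bound $\|(|T|+1)^{\epsilon/2}(T-z)^{-1}\|\le C\langle z\rangle^{\epsilon/2}/|\Im z|$ (by interpolation between $\epsilon=0$ and $\epsilon=2$) together with the hypothesis in the middle factor. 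Selecting an almost analytic extension with $|\bar\partial\tilde f(z)|\le C_N|\Im z|^N\langle z\rangle^{t-N-1}$ at $N=2$, the integrand against $\d\mu_f$ is dominated by $C\langle z\rangle^{t+\epsilon-3}$, and the resulting $du\,dv$-integral converges precisely when $t+\epsilon<1$, which is our hypothesis. Dominated convergence identifies $\mathrm i[f(T),S]$ with the integral as a bounded sesquilinear form on $\vD(f(T))\cap\vD(S)$, hence it extends to a bounded operator on $\vH$. Self-adjointness then follows from $(\mathrm i[T,S])^*=\mathrm i[T,S]$, $((T-z)^{-1})^*=(T-\bar z)^{-1}$, and a choice of $\tilde f$ compatible with the reality of $f$ so that $\d\mu_f$ respects the involution $z\mapsto\bar z$.

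The main obstacle I expect is giving rigorous meaning to the resolvent commutator identity: since $(\mathrm i[T,S])\phi$ for $\phi\in\vD(T)$ generally does not lie in $\vH$ but in the dual of $\vD((|T|+1)^{\epsilon/2})$, the outer factor $(T-z)^{-1}$ must first be extended by duality to that dual space before composition. Balancing the $\epsilon$-weights across the two resolvent factors --- each absorbing a power $\epsilon/2$ --- is exactly what dictates the sharp threshold $t<1-\epsilon$ in the hypothesis.
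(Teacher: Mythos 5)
Your handling of the right-hand side is correct and is in fact the substantive estimate the paper leaves implicit: the interpolated bound $\|(|T|+1)^{\epsilon/2}(T-z)^{-1}\|\le C\langle z\rangle^{\epsilon/2}/|\Im z|$, the resulting domination of the integrand by $C\langle z\rangle^{t+\epsilon-3}$ (taking $N=2$ in the almost analytic estimates), the threshold $t+\epsilon<1$, and the duality interpretation of $\mathrm i[(T-z)^{-1},S]=-(T-z)^{-1}(\mathrm i[T,S])(T-z)^{-1}$ are all fine. The obstacle you flag at the end is the routine part.

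The genuine gap is at your starting point. The $k=0$ Helffer--Sj\"ostrand formula $f(T)=\int_{\C}(T-z)^{-1}\,\d\mu_f(z)$ is only valid for $t<0$ (this is the constraint $k>t$ in Lemma~\ref{lem:A1}), and the sesquilinear rescue you propose via the scalar identity fails for $0\le t<1-\epsilon$: for fixed $\lambda$ the integrand $|\lambda-z|^{-1}|\bar\partial\tilde f(z)|$ is of size $|\Im z|^{N}\langle z\rangle^{t-N-2}$ at infinity, and its area integral over the support of $\tilde f$ behaves like $\int\langle u\rangle^{t-1}\,\d u$, which diverges already at $t=0$ (e.g.\ for $F=\zeta\in\mathcal F^0$, a case the paper actually needs). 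So for $t\ge 0$ there is no convergent representation of $f(T)$ through which to commute $S$, and "dominated convergence" cannot by itself identify the left-hand form with your (correctly convergent) integral. The missing step --- and the paper's entire second sentence of proof --- is an approximation: set $f_n=f\,\chi(\cdot/n)\in C^\infty_{\mathrm c}(\R)$, apply the $t<0$ case to each $f_n$, and pass to the limit, using $f_n(T)\psi\to f(T)\psi$ for $\psi\in\vD(f(T))$ on the left and almost analytic extensions $\tilde f_n$ with $n$-uniform bounds and $\bar\partial\tilde f_n\to\bar\partial\tilde f$ pointwise on the right, where your bound $C\langle z\rangle^{t+\epsilon-3}$ serves as the dominating function. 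With that inserted, your argument closes; without it, the identity is unproved precisely on the range $0\le t<1-\epsilon$ (nonempty whenever $\epsilon<1$) where the corollary is actually applied.
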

\begin{proof}
By \eqref{82a0} the assertion for $t<0$ is obvious. 
Then the general case $t<1-\epsilon$ follows by approximating $f$ by functions from $C^\infty_{\mathrm c}(\mathbb R)$. 
We omit the details.
\end{proof}

Another example is the commutator $[f(H),r^s]$ treated below.
As in \cite{GIS} `phase-space localizations' stated in terms of 
functions of $H$, $r$ and $B$ will be  important.
The other two commutators of this triple of operators will be discussed later in 
Lemmas~\ref{lem:171113b} and \ref{lem:17111416}.
In the proofs of the main theorems 
we will repeatedly use Lemmas~\ref{lem:171113}, \ref{lem:171113b},
\ref{lem:17111416} and \ref{lem:17111515}.

\begin{lemma}\label{lem:171113}
Suppose 
{Condition~\ref{cond:smooth2wea3n1} or Condition \ref{cond:smooth2wea3n12}}. 
\begin{enumerate}
\item 
For any  $f\in \mathcal F^t$ with $t<0$ the operator   $f(H)$ is of order $0$.
\item\label{item:180128}
Let $f\in \mathcal F^t$ with $t<1/2$, $R\ge 1$ and $s\in\mathbb R$.
Then $\mathrm i[f(H),r^s]$ 
has an expression, 
as a sesquilinear  form on $\mathcal D(f(H))\cap L^2_{\max\{0,s\}}$,
\begin{align}\label{eq:fComR}
\mathrm i[f(H),r^s] 
=
-s
\int _{\C}
(H-z)^{-1}\bigl(\mathop{\mathrm{Re}}(r^{s-1}\omega\cdot p)\bigr) (H-z)^{-1}\,\mathrm d\mu_f(z).
\end{align} 
In particular  $\mathrm i[f(H),r^s]$ is of order $s-1$.
\end{enumerate}
\end{lemma}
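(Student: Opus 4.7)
Both claims rest on a direct computation, carried out first on the dense subspace $\vD(H)\cap L^2_\infty$ and extended by density, that
\[
 \mathrm{i}[H,r^s]=s\mathop{\mathrm{Re}}(r^{s-1}\omega\cdot p)
\]
as a quadratic form on $Q(H)\cap L^2_{\max\{0,s\}}$. The essential inputs are that each $V_a$ is a multiplication operator and hence commutes with $r^s$, the elementary identity $[p_j,r^s]=-\mathrm{i}sr^{s-1}\omega_j$ together with $\nabla\cdot(r^{s-1}\omega)=s^{-1}\Delta r^s$, and property~\eqref{eq:boun_constr}, which controls $\omega$ and its derivatives uniformly in $R\geq 1$. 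Writing $R_z:=(H-z)^{-1}$, the resolvent identity $[R_z,B]=-R_z[H,B]R_z$ then yields
\[
 r^{-s}R_z\,r^s=R_z+R_z\,[H,r^{-s}]\,R_z\,r^s,
\]
which is the backbone of both items.

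For item~(1), I would apply Lemma~\ref{lem:A1} with $k=0$ to write $f(H)=\int_{\mathbb C}R_z\,d\mu_f(z)$. The claim reduces to the weighted resolvent bound $\|r^{-s}R_z\,r^s\|_{\vL(\vH)}\le C_s\langle z\rangle^{M_s}|\Im z|^{-N_s}$, read off the resolvent identity above: the correction term is controlled by the commutator formula together with the standard bounds $\|R_z\|_{\vL(\vH)}\le|\Im z|^{-1}$ and $\|pR_z\|_{\vL(\vH)}\le C\langle z\rangle^{1/2}|\Im z|^{-1}$ (the latter coming from form boundedness of $V$ relative to $H_0$). For $|s|<1$ one application suffices; for $|s|\ge1$ I would iterate $\lceil|s|\rceil$ times, or argue by analytic interpolation in the weight parameter after reducing to $s\ge 0$ by duality. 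Integration against $d\mu_f(z)$ with an almost analytic extension for which $|\bar\partial\tilde f(z)|\le C_k|\Im z|^{k}\langle z\rangle^{t-k-1}$ with $k$ large enough then gives absolute convergence in $\vL(L^2_s)$ and hence $f(H)=O(r^0)$.

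For item~(2), I would invoke Corollary~\ref{cor:A2} with $T=H$, $S=r^s$, producing \eqref{eq:fComR} at once after substituting the symbol identity above. When $s>0$ this requires truncating $r^s\rightsquigarrow\chi_n r^s$ with $\chi_n$ from \eqref{eq:1711021}, applying Corollary~\ref{cor:A2} to each truncate, and passing to the limit $n\to\infty$ by dominated convergence inside the Helffer--Sj\"ostrand integral. The boundedness hypothesis of Corollary~\ref{cor:A2}, i.e.\ $(|H|+1)^{-\epsilon/2}\,\mathrm{i}[H,\chi_n r^s]\,(|H|+1)^{-\epsilon/2}\in\vL(\vH)$, is checked using that $(H+E_{\mathrm{min}}+1)^{-1/2}\in\vL(\vH,\vH^1)$ and $\omega\cdot p\in\vL(\vH^1,\vH)$; absorbing the compactly supported weight $r^{s-1}\chi_n$ on one side permits any $\epsilon>1/2$, and hence any $t<1/2$. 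The order-$(s-1)$ claim then follows from bounding the integrand in \eqref{eq:fComR}: $\mathop{\mathrm{Re}}(r^{s-1}\omega\cdot p)R_z$ maps $L^2_\sigma$ into $L^2_{\sigma-s+1}$ once item~(1) delivers the order-$0$ property of $R_z$, and a further $R_z$ is neutral on weights, the combined $z$-dependence being polynomial in $\langle z\rangle|\Im z|^{-1}$; integrating against $d\mu_f(z)$ with $k$ large concludes.

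\textbf{Main obstacle.} The principal subtlety is a mild circularity: item~(2) invokes weighted resolvent bounds (item~(1)), while item~(1) rests on the commutator identity underlying item~(2). I would resolve this by first proving the unweighted form identity for $\mathrm{i}[H,r^s]$ separately (requiring no abstract machinery), then deducing item~(1) for small $|s|$ and bootstrapping to arbitrary $s$. A secondary issue is that under Condition~\ref{cond:smooth2wea3n1} one has $\vD(H)\not\supset C^\infty_{\mathrm c}$ in general, so all quadratic-form manipulations must live on $Q(H)=H^1_0(\Omega)$; this is exactly why the restriction $t<1/2$ in item~(2) appears sharp, since a single resolvent produces only $\vH^1$-regularity and further regularity (needed to accommodate larger $t$) is unavailable under the form-bounded setup.
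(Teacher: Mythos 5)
Your proposal is correct and follows essentially the same route as the paper: reduce item (1) to weighted resolvent bounds via the Helffer--Sj\"ostrand formula, obtain them from the commutator identity $\mathrm i[H,r^s]=s\mathop{\mathrm{Re}}(r^{s-1}\omega\cdot p)$ together with $\|pR(z)\|\lesssim \langle z\rangle^{1/2}|\Im z|^{-1}$ and an iteration for $|s|\ge 1$, and deduce item (2) from Corollary~\ref{cor:A2} plus a limiting argument for $s>0$. The only cosmetic differences are that the paper rigorously produces the commutator of the resolvent with $r^s$ as the strong limit $t^{-1}(R(z)\mathrm e^{\mathrm itr^s}-\mathrm e^{\mathrm itr^s}R(z))$, exploiting that $\mathrm e^{\mathrm itr^s}$ preserves $\mathcal H^1$ (where you instead work directly with the quadratic form on $H^1_0(\Omega)$, which is equally legitimate), and that for $s>0$ in item (2) the paper regularizes via $r/(1+\epsilon r)$ where you truncate with $\chi_n r^s$.
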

\begin{proof}\begin{subequations}
\textit{(1)}\quad
Fix any $R\ge 1$, which defines $r=r_R$.
By \eqref{82a0}  it suffices to show that
for each $s\in\mathbb R$ there exist $C(s)>0$  such that 
\begin{align}
\|(H-z)^{-1}\|_{\mathcal L(L^2_s)}\le C(s)|\mathop{\mathrm{Im}}z|^{-1}\bigl(\inp{z}/|\mathop{\mathrm{Im}}z|\bigr)^{|s|+1}
\ \ \text{for }z\in\mathbb C\setminus\mathbb R.
\label{eq:171202}
\end{align}
By considering the adjoint we may assume $s\ge 0$ without  loss of generality. 
We first let $s\in[0,1]$. 
Then as an operator in $\mathcal L(L^2_s,L^2_{-s})$ 
we can calculate 
\begin{align}
\begin{split}
&\mathrm i\bigl[(H-z)^{-1},r^s\bigr]
\\&
=\slim_{t\to 0}t^{-1}\bigl((H-z)^{-1}\mathrm e^{\mathrm itr^s}-\mathrm e^{\mathrm itr^s}(H-z)^{-1}\bigr)
\\&
=\slim_{t\to 0}t^{-1}(H-z)^{-1}\bigl(\mathrm e^{\mathrm itr^s}H_0-H_0\mathrm e^{\mathrm itr^s}\bigr)(H-z)^{-1}
\\&
=\slim_{t\to 0}(H-z)^{-1}
\bigl[-\tfrac s2\bigl(r^{s-1}\mathrm e^{\mathrm itr^s}\omega\cdot p
+p\cdot\omega \mathrm e^{\mathrm itr^s}r^{s-1}\bigr)\bigr](H-z)^{-1}
\\&
=-s(H-z)^{-1}\bigl(\mathop{\mathrm{Re}}(r^{s-1}\omega\cdot p)\bigr)(H-z)^{-1}.
\end{split}
\label{eq:171202b}
\end{align}
Here, noting that $\mathcal H^2\subset \mathcal H^1$ and that 
$\mathrm e^{\mathrm itr^s}$ preserves $\mathcal H^1$, 
we could safely implement integrations by parts for  the third equality of \eqref{eq:171202b}
without boundary contributions
 {even under Condition~\ref{cond:smooth2wea3n1}}.
The last expression of \eqref{eq:171202b} is obviously bounded on
$\mathcal H$. Whence by the formula
\begin{align}\label{eq:1comForm}
r^s(H-z)^{-1}r^{-s}
&=
(H-z)^{-1}
-\mathrm is (H-z)^{-1}\bigl(\mathop{\mathrm{Re}}(r^{s-1}\omega\cdot p)\bigr)(H-z)^{-1}r^{-s},
\end{align}
 \eqref{eq:171202} for $s\in[0,1]$ follows; here we use the elemenatry bound
 \begin{align}\label{eq:pDis}
   \norm{(H-z)^{-1}p}\leq C
   (|\mathop{\mathrm{Im}}z|^{-1/2}+\inp{z}^{1/2}|\mathop{\mathrm{Im}}z|^{-1})\leq 2C\inp{z}/|\mathop{\mathrm{Im}}z|.
 \end{align}
 \end{subequations}

For $s\ge 1$ we write $s=s'+[s]$ in terms of the integer part $[s]$ of
$s$, and first use \eqref{eq:1comForm} with $s$
replaced by $s'$. Next we move each of the  $[s]$ factors of $r$
though the resolvents to the right using the formula
\eqref{eq:1comForm} repeatedly. This procedure yields an expansion of
$r^s(H-z)^{-1}r^{-s}$ 
into  a sum of terms having at most  $2+[s]$ factors of resolvents with at most
$1+[s]$ factors of $p$ distributed so that \eqref{eq:pDis} applies.

\smallskip
\noindent
\textit{(2)}\quad 
For $s\le 0$ the formula  \eqref{eq:fComR} follows from
Corollary~\ref{cor:A2}, and by the proof of (1) we see that indeed the right-hand side of \eqref{eq:fComR} is of order $s-1$.
For $s>0$ we apply \eqref{eq:fComR} to $r$ replaced by $r/(1+\epsilon
 r)$ for $\epsilon>0$. We let $\epsilon\to 0$ and obtain (2) in that case
 also.
\end{proof}

\subsection{Self-adjoint realization}\label{subsec:Operatat B}

Now we provide the self-adjoint realization of the  operator $B$.
Recall the definition \eqref{eq:17111317} of $\mathcal H^s$.

\begin{lemma}\label{lemma:Flow} 
Let $R\ge 1$ be sufficiently large. Then the operator $B$ defined as \eqref{eq:1710220}
is essentially self-adjoint on $C^\infty_{\mathrm c}(\Omega)$,
and the self-adjoint extension, denoted by $B$ again,
satisfies that for some $C>0$
\begin{align}
\mathcal D(B)\supset \mathcal H^1,\quad 
\|B\psi\|_{\mathcal H}\le C\|\psi\|_{\mathcal H^1}\ \ \text{for any }\psi\in\mathcal H^1.
\label{eq:171005}
\end{align}
In addition, $\mathrm e^{\mathrm itB}$ for each $t\in\mathbb R$
naturally restricts/extends
as bounded operators $\mathrm e^{\mathrm itB}\colon \mathcal H^{\pm1}\to\mathcal H^{\pm1}$, and 
they satisfy
\begin{align}
\sup_{t\in [-1,1]}\|\mathrm e^{\mathrm itB}\|_{\mathcal L(\mathcal H^{\pm 1})}<\infty,
\label{eq:17100722}
\end{align}
respectively. Moreover, the restriction $\mathrm e^{\mathrm itB}\in \mathcal L(\mathcal H^1)$ is 
strongly continuous in $t\in\mathbb R$.
\end{lemma}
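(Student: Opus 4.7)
The plan is to realize $B$ as the generator of the unitary group induced by the flow of $\omega=\mathop{\mathrm{grad}} r$, and then to deduce the $\mathcal H^{\pm 1}$ mapping properties by commuting with $H$.

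\emph{Step 1 (flow preserves $\Omega$).} Rescaling \eqref{eq:boun_constr} to $r=r_R$ shows that $\omega$ together with $\mathop{\mathrm{div}}\omega$ and all their derivatives is uniformly bounded on $\bX$. Rescaling property \eqref{item:5k} gives $\omega^a(x)=0$ whenever $|x^a|\le c'R$. Since the hard-core part of $\partial\Omega$ is localised near the collision planes $\bX_a$, integral curves of $\omega$ starting in $\Omega$ cannot cross $\partial\Omega$, and the global flow $\Phi_t\colon\bX\to\bX$ restricts to a smooth diffeomorphism of $\Omega$ onto itself for every $t\in\R$. Boundedness of $D\omega$ makes the Jacobian $J_t$ and its inverse uniformly bounded on compact time intervals.

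\emph{Step 2 (essential self-adjointness).} Define $U_t\psi(x)=J_{-t}(x)^{1/2}\psi(\Phi_{-t}(x))$ for $\psi\in\mathcal H$. A change of variables shows that $\{U_t\}_{t\in\R}$ is a strongly continuous unitary group on $\mathcal H$, and since $\Phi_t$ is a diffeomorphism of $\Omega$ the subspace $C^\infty_{\mathrm c}(\Omega)$ is $U_t$-invariant. Differentiating at $t=0$ on $C^\infty_{\mathrm c}(\Omega)$ gives
$$\tfrac{1}{\mathrm i}\tfrac{\mathrm d}{\mathrm dt}U_t\psi\big|_{t=0}=\bigl(\omega\cdot p-\tfrac{\mathrm i}{2}\mathop{\mathrm{div}}\omega\bigr)\psi=B\psi.$$
The core theorem for strongly continuous groups with an invariant dense subspace then yields essential self-adjointness of $B|_{C^\infty_{\mathrm c}(\Omega)}$ and $U_t=\mathrm e^{\mathrm itB}$. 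The identity $B=\omega\cdot p-\tfrac{\mathrm i}{2}\mathop{\mathrm{div}}\omega$ with bounded coefficients also yields $\|B\psi\|_{\mathcal H}\le C\|\psi\|_{\mathcal H^1}$ on $C^\infty_{\mathrm c}(\Omega)$, which extends by density to \eqref{eq:171005}.

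\emph{Step 3 ($\mathcal H^{\pm 1}$ bounds).} Write $K=H+E_{\mathrm{min}}+1$. For $\psi\in\mathcal D(H)\cap\mathcal D(B)$ the function $t\mapsto\langle K\mathrm e^{\mathrm itB}\psi,\mathrm e^{\mathrm itB}\psi\rangle$ is differentiable with derivative $\langle\mathrm i[K,B]\mathrm e^{\mathrm itB}\psi,\mathrm e^{\mathrm itB}\psi\rangle$. The key input is that $\mathrm i[H,B]$ extends to a bounded operator $\mathcal H^1\to\mathcal H^{-1}$, which is the content of the forthcoming Lemma~\ref{lem:171113b} and rests again on property \eqref{item:5k} together with Condition~\ref{cond:smooth2wea3n1} or \ref{cond:smooth2wea3n12}. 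Consequently
$$\bigl|\tfrac{\mathrm d}{\mathrm dt}\|\mathrm e^{\mathrm itB}\psi\|_{\mathcal H^1}^2\bigr|\le C\|\mathrm e^{\mathrm itB}\psi\|_{\mathcal H^1}^2,$$
and Gronwall's inequality gives the $\mathcal H^1$ part of \eqref{eq:17100722}; the $\mathcal H^{-1}$ part follows by duality. Strong continuity on $\mathcal H^1$ then results from the uniform operator-norm bound, strong continuity on $\mathcal H$, and density of $\mathcal H^2$ in $\mathcal H^1$.

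The principal obstacle is the compatibility between the flow and the Dirichlet realisation at the hard cores: one has to show simultaneously that $\Phi_t$ preserves $\Omega$ (which yields the correct self-adjoint realisation) and that $\mathrm e^{\mathrm itB}$ preserves $H^1_0(\Omega)$, not merely $H^1_{\mathrm{loc}}$. Both hinge on the tangency property \eqref{item:5k} of the Graf function, the unique place where the combinatorial $N$-body structure enters. The boundedness of the commutator $\mathrm i[H,B]\colon\mathcal H^1\to\mathcal H^{-1}$, which drives the $\mathcal H^{\pm 1}$ estimate, is essentially the first-commutator half of the work postponed to Section~\ref{subsec:17111117} and is taken here as a black box.
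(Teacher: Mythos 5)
Your Steps 1 and 2 follow the paper's proof: the flow of $\omega$ preserves $\Omega$ thanks to the rescaled versions of \eqref{eq:17100721} and \eqref{eq:boun_constr}, the induced unitary group leaves $C^\infty_{\mathrm c}(\Omega)$ invariant, and the core theorem gives essential self-adjointness together with \eqref{eq:171005}.

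Step 3, however, contains a genuine circularity. You take as a black box that $\mathrm i[H,B]$ extends to a bounded operator $\mathcal H^1\to\mathcal H^{-1}$ and cite a forthcoming lemma for it (incidentally, Lemma~\ref{lem:171113b} concerns $F(B)$ and $[F(B),r^s]$; the statement you want is Lemma~\ref{lem:17100620}). But the proof of Lemma~\ref{lem:17100620} is built on the present lemma: under Condition~\ref{cond:smooth2wea3n1} the commutator $\mathrm i[H,B]$ is \emph{defined} there as the strong limit $\slim_{t\to0}t^{-1}(H\e^{\i tB}-\e^{\i tB}H)$, and making sense of $H\e^{\i tB}-\e^{\i tB}H$ on $\mathcal H^1$ uses precisely the invariance and uniform boundedness \eqref{eq:17100722} that you are trying to prove. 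There is no independent way to give $\mathrm i[V^{\mathrm{sr}}+V^{\mathrm{si}},B]$ a meaning as a form on $\mathcal H^1$ at this stage. A second, smaller gap: for general $\psi\in\mathcal D(H)\cap\mathcal D(B)$ the differentiability of $t\mapsto\langle K\e^{\i tB}\psi,\e^{\i tB}\psi\rangle$ presupposes that the orbit stays in $Q(H)$, which is again the conclusion.

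The paper's way out is to run the Gronwall argument with $f(t)=\langle (H_0+1)\rangle_{U(t)\psi}$ for $\psi\in C^\infty_{\mathrm c}(\Omega)$ rather than with $H+E_{\mathrm{min}}+1$. Since $U(t)$ preserves $C^\infty_{\mathrm c}(\Omega)$, the derivative can be computed classically and equals $\langle p\cdot hp-\tfrac14(\Delta^2 r)\rangle_{U(t)\psi}$: only the \emph{free} commutator appears, the potential never enters, and the bound $|f'(t)|\le C_1 f(t)$ is immediate from the boundedness of $h$ and $\Delta^2 r$ guaranteed by \eqref{eq:boun_constr}. Since $V$ is infinitesimally $H_0$-small in the form sense, the $(H_0+1)$-form norm is equivalent to the $\mathcal H^1$-norm, so Gronwall and density give \eqref{eq:17100722}. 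If you replace your $K$ by $H_0+1$ and restrict the computation to $C^\infty_{\mathrm c}(\Omega)$, your argument closes without any forward reference.
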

\begin{remarks}\label{remark:self-adjo-realss}
\begin{enumerate}
\item
The same assertions except for \eqref{eq:171005} hold true also for the operator $A$ from \eqref{eq:1710220},
but we do not state it since we do not use it.
\item
For related results in more
general geometric
settings, see \cite[Lemma A.8]{IS1} and \cite[Lemma 2.8]{IS3}.
\item \label{item:self-adjo-real2} The condition $R\ge 1$ being large
  is only needed under Condition ~\ref{cond:smooth2wea3n1} (to
  guarantee that $\omega$ is complete). Under Condition
  \ref{cond:smooth2wea3n12} the completeness holds for any $R\ge
  1$. Moreover in that case we also have invariance of $\vH^{\pm 2}$
  and in fact
\begin{align}
\sup_{t\in [-1,1]}\|\mathrm e^{\mathrm itB}\|_{\mathcal L(\mathcal H^{\pm 2})}<\infty;
\label{eq:17100722b}
\end{align} the proof is similar.
\end{enumerate}
\end{remarks}
\begin{proof}
By the properties \eqref{eq:17100721} and \eqref{eq:boun_constr} 
we can find large $R\ge 1$ such that 
the rescaled vector field 
$\omega=\omega_R$ is complete on $\Omega$. 
Then there exists a globally defined flow
\begin{align}
 y\colon \mathbb R\times \Omega\to \Omega,\ \ 
(t,x)\mapsto y(t,x)=\exp(t\omega)(x),
\label{eq:14.12.10.6.32}
\end{align}
generated by $\omega$.
In other words, $y$ is a solution to the equation
\begin{align*}
\partial_ty(t,x)=\omega(y(t,x)),\quad
y(0,x)=x.
\end{align*}
We introduce the associated one-parameter group $\{U(t)\}_{t\in\mathbb R}$ of unitary operators
on ${\mathcal H}$ by 
\begin{align}
\begin{split}
(U(t)\psi)(x)
&=J(t,x)^{1/2}\psi(y(t,x))
\\&
=\exp \left(\int_0^t\tfrac12(\mathop{\mathrm{div}}\omega)(y(s,x))\,\mathrm{d}s\right)\psi(y(t,x)),
\end{split}
\label{eq:12.6.7.1.10c}
\end{align}
where $J(t,\cdot)$ is the Jacobian of the mapping 
$y(t,\cdot)\colon \Omega\to \Omega$.

Now we define $B$ as the generator of the group $\{U(t)\}_{t\in\mathbb R}$:
\begin{align*}
\mathcal D(B)&=\bigl\{\psi\in\mathcal H\,\big|\, 
\lim_{t\to 0}{}(\mathrm it)^{-1}(U(t)\psi-\psi)\text{\ \ exists in }\mathcal H\bigr\},\\
B\psi&=\lim_{t\to 0}{}(\mathrm it)^{-1}(U(t)\psi-\psi)\ \ 
\text{for }\psi\in\mathcal D(B).
\end{align*}
Since $U(t)$ is unitary, 
the generator $B$ is self-adjoint on $\mathcal H$. 
In addition, since $C^\infty_\c(\Omega)$ is preserved under $U(t)$, 
the space $C^\infty_\c(\Omega)$ is a core for $B$ by \cite[Theorem X.49]{RS}. 
By the last expression from \eqref{eq:12.6.7.1.10c}
the generator $B$ takes the form 
\begin{align}
B=\tfrac12(\omega\cdot p+p\cdot \omega)\ \ \text{on }C^\infty_\c(\Omega),
\label{eq:180116}
\end{align} 
which actually coincides with \eqref{eq:1710220}.
Then \eqref{eq:171005} follows by extension from 
the dense subspace $C^\infty_\c(\Omega)\subset \mathcal H^1$.

To prove \eqref{eq:17100722} it suffices to discuss the upper case by taking the adjoint.
For any $\psi\in C_\c^\infty(\Omega)\subset\vH^1$ consider the quantity  
$$f(t)=\inp{H_0+1}_{U(t)\psi}=\bigl\langle U(t)\psi,(H_0+1)U(t)\psi\bigr\rangle.$$ 
By using \eqref{eq:180116} we can compute and bound its derivative as 
\begin{align*}
\pm f'(t)=\pm \bigl\langle p\cdot h p-\tfrac14(\Delta^2 r)\bigr\rangle_{U(t)\psi}
\le C_1f(t),
\end{align*}
where $h=\mathop{\mathrm{Hess}}r$, and $C_1$ is a constant independent of 
$\psi\in C^\infty_{\mathrm c}(\Omega)$, cf.\ \eqref{eq:formulasyb} and \eqref{eq:formulasybb}
below.
Then by the Gronwall lemma and a  density argument 
we obtain the uniform boundedness of $U(t)=\mathrm e^{\mathrm itB}\colon \mathcal H^1\to\mathcal H^1$
for $t\in[-1,1]$.

Finally by a  density argument, \eqref{eq:17100722}
and the regularity of the flow \eqref{eq:14.12.10.6.32}
it is easy to see that $\e^{\i tB}\in \mathcal L(\mathcal H^1)$ is  strongly continuous 
in $t\in\mathbb R$.
\end{proof}

We remark that Lemma~\ref{lemma:Flow} has the following generalization
(proved in the same way).
We will use this generalized version when we compute a  second
commutator in the proof of Lemma~\ref{lemma:sing}.

\begin{lemma}\label{lemma:Flowb}
Let $v\in\mathfrak X(\Omega)$ be a smooth and complete
vector field on $\Omega$, and assume that 
there exists $C>0$ such that for any $x\in\bX$ 
\begin{align*}
|v(x)|\le C,\quad \abs{v'(x)}\le C,\quad 
\abs{\mathop{\mathrm{grad}}(\mathop{\mathrm{div}} v)(x)}\leq C.
\end{align*}
Then the differential operator 
$$B_v=\mathop{\mathrm{Re}}(v\cdot p)=\tfrac12(v\cdot p+p\cdot v)$$ 
is essentially self-adjoint on $C_\c^\infty(\Omega)$, 
and the self-adjoint extension, denoted by $B_v$ again, 
satisfies for some $C'>0$
\begin{align*}
\mathcal D(B_v)\supset\mathcal H^1,\quad 
\|B_v\psi\|\le C'\|\psi\|_{\mathcal H^1}\ \ \text{for any }\psi\in\mathcal H^1.
\end{align*}
In addition  the operators $\mathrm e^{\mathrm itB_v}$, $t\in\mathbb
R$,  naturally restrict/extend
to bounded operators $\mathrm e^{\mathrm itB_v}\colon \mathcal H^{\pm1}\to\mathcal H^{\pm1}$, and 
they satisfy
\begin{align*}
\sup_{t\in [-1,1]}\|\mathrm e^{\mathrm itB_v}\|_{\mathcal L(\mathcal H^{\pm 1})}<\infty,
\end{align*}
respectively.
Moreover, the restriction $\mathrm e^{\mathrm itB_v}\in \mathcal L(\mathcal H^1)$ is 
strongly continuous in $t\in\mathbb R$.
\end{lemma}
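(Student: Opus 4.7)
The plan is to follow the proof of Lemma~\ref{lemma:Flow} with only straightforward modifications, observing that the specific structure $\omega=\mathop{\mathrm{grad}} r$ was used there only through the boundedness properties of $\omega$, $\omega'=\mathop{\mathrm{Hess}} r$, and $\mathop{\mathrm{grad}}\mathop{\mathrm{div}}\omega$, which are now replaced by the hypotheses on $v$, $v'$, and $\mathop{\mathrm{grad}}\mathop{\mathrm{div}} v$. First I would use completeness of $v$ to produce the global flow $y(t,x)=\exp(tv)(x)\colon\mathbb R\times\Omega\to\Omega$ and define the associated family of operators
\begin{align*}
(U_v(t)\psi)(x)=\exp\Bigl(\int_0^t\tfrac12(\mathop{\mathrm{div}}v)(y(s,x))\,\mathrm ds\Bigr)\,\psi(y(t,x)).
\end{align*}
A change of variables shows that each $U_v(t)$ is unitary on $\mathcal H$, and the usual group law together with strong continuity on $C^\infty_{\mathrm c}(\Omega)$ (which is preserved by $U_v(t)$ since the flow maps compact subsets of $\Omega$ to compact subsets) yields a strongly continuous one-parameter unitary group. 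By Stone's theorem its generator $B_v$ is self-adjoint, and by \cite[Theorem X.49]{RS} the invariant dense subspace $C^\infty_{\mathrm c}(\Omega)$ is a core. Differentiating $U_v(t)\psi$ at $t=0$ for $\psi\in C^\infty_{\mathrm c}(\Omega)$ gives $B_v\psi=\tfrac12(v\cdot p+p\cdot v)\psi$, matching the stated expression.

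Next I would obtain the bound $\mathcal D(B_v)\supset \mathcal H^1$ with $\|B_v\psi\|\le C'\|\psi\|_{\mathcal H^1}$: on $C^\infty_{\mathrm c}(\Omega)$ one has
\begin{align*}
\|B_v\psi\|\le \|v\|_\infty\|p\psi\|+\tfrac12\|\mathop{\mathrm{div}}v\|_\infty\|\psi\|,
\end{align*}
where $\|\mathop{\mathrm{div}}v\|_\infty\le (\dim\mathbf X)^{1/2}\|v'\|_\infty$, and then density extends the estimate to $\mathcal H^1$.

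The main (and only genuinely non-trivial) step is the $\mathcal H^{\pm1}$-invariance \eqref{eq:17100722} with $B$ replaced by $B_v$. As in Lemma~\ref{lemma:Flow} I would consider, for $\psi\in C^\infty_{\mathrm c}(\Omega)$, the quantity $f(t)=\langle H_0+1\rangle_{U_v(t)\psi}$, whose derivative equals $\langle\i[H_0,B_v]\rangle_{U_v(t)\psi}$. A direct computation on $C^\infty_{\mathrm c}(\Omega)$ yields
\begin{align*}
\i[H_0,B_v]=p\cdot h_v\, p-\tfrac14\Delta(\mathop{\mathrm{div}}v),\qquad h_v=\tfrac12\bigl(v'+(v')^{\mathrm T}\bigr).
\end{align*}
The first term is bounded in absolute value by $\|v'\|_\infty\langle p^2\rangle_{U_v(t)\psi}\le 2\|v'\|_\infty f(t)$. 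For the second term I would integrate by parts once, writing
\begin{align*}
\bigl|\bigl\langle \Delta(\mathop{\mathrm{div}}v)\bigr\rangle_\varphi\bigr|
=\bigl|2\mathop{\mathrm{Re}}\bigl\langle \mathop{\mathrm{grad}}(\mathop{\mathrm{div}}v)\cdot\nabla\varphi,\varphi\bigr\rangle\bigr|
\le \|\mathop{\mathrm{grad}}(\mathop{\mathrm{div}}v)\|_\infty\bigl(\|p\varphi\|^2+\|\varphi\|^2\bigr),
\end{align*}
which is again bounded by a constant times $f(t)$. Thus $|f'(t)|\le C_1 f(t)$ and Gronwall plus density give \eqref{eq:17100722} for $U_v(t)$ on $\mathcal H^1$; the bound on $\mathcal H^{-1}$ follows by duality. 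Strong continuity of $\mathrm e^{\i tB_v}$ in $\mathcal L(\mathcal H^1)$ is then deduced from strong continuity on $\mathcal H$, the uniform bound just established, and density of $\mathcal H^1$ in itself via an $\varepsilon/3$-argument. The main obstacle is really the identification and estimation of $\i[H_0,B_v]$ above; once the integration-by-parts trick that converts $\Delta(\mathop{\mathrm{div}}v)$ into $\mathop{\mathrm{grad}}(\mathop{\mathrm{div}}v)$ is noted, the hypotheses on $v$ fit exactly.
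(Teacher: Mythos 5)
Your proposal is correct and follows essentially the same route as the paper, which simply declares Lemma~\ref{lemma:Flowb} to be ``proved in the same way'' as Lemma~\ref{lemma:Flow}: flow, unitary group, core via \cite[Theorem X.49]{RS}, Gronwall on $\langle H_0+1\rangle_{U_v(t)\psi}$, and duality for $\mathcal H^{-1}$. You also correctly identify the one genuine adaptation needed, namely integrating by parts once to replace $\Delta(\mathop{\mathrm{div}}v)$ by $\mathop{\mathrm{grad}}(\mathop{\mathrm{div}}v)\cdot\nabla$, which is precisely why the hypothesis is stated in terms of $\mathop{\mathrm{grad}}(\mathop{\mathrm{div}}v)$.
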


Finally in this subsection we present some basic  properties of $B$ (proved in the same manner as we proved  Lemma~\ref{lem:171113}).

\begin{lemma}\label{lem:171113b}
{Let} $R\ge 1$ be sufficiently large.
\begin{enumerate}
\item For any  $F\in \mathcal F^t$ with $t<0$ the operator   $F(B)$ is of order $0$.
\item Let $F\in \mathcal F^t$ with $t<1$, and let $s\in\mathbb R$.
Then $\mathrm i[F(B),r^s]$ 
has an expression, 
as a sesquilinear  form on $\mathcal D(F(B))\cap L^2_{\max\{0,s\}}$,
\begin{align*}
\mathrm i[F(B),r^s] 
= -s
\int _{\C}
(B-z)^{-1}\bigl(\omega^2r^{s-1}\bigr) (B-z)^{-1}\,\mathrm d\mu_F(z).
\end{align*}  

In particular  $\mathrm i[F(B),r^s]$ is of order $s-1$.
\end{enumerate}
\end{lemma}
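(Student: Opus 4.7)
The plan is to mirror the proof of Lemma~\ref{lem:171113} closely, replacing $H$ by $B$ and using the self-adjoint realization of $B$ from Lemma~\ref{lemma:Flow}. The key simplification is that $B$ is a first-order differential operator, so the commutator $[B, r^s]$ is an explicit multiplication operator of order $s-1$, whereas the analogous computation for $H$ produced the genuinely first-order object $\mathop{\mathrm{Re}}(r^{s-1}\omega\cdot p)$.

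First I would compute, on $C^\infty_{\mathrm c}(\Omega)$,
$$\mathrm i[B, r^s] = \tfrac{\mathrm i}{2}\bigl(\omega\cdot[p,r^s] + [p,r^s]\cdot\omega\bigr) = s\,r^{s-1}\omega^2,$$
using $[p, r^s] = -\mathrm is\,r^{s-1}\omega$ and the fact that $\omega = \tilde\omega_R/r_R$ is smooth and bounded by \eqref{eq:boun_constr}. In particular $\mathrm i[B, r^s]$ extends to a bounded operator when $s\le 1$. Next I would establish the weighted resolvent bound
$$\|(B-z)^{-1}\|_{\mathcal L(L^2_s)} \le C(s)|\mathop{\mathrm{Im}}z|^{-1}\bigl(\langle z\rangle/|\mathop{\mathrm{Im}}z|\bigr)^{|s|+1},\quad z\in\mathbb C\setminus\mathbb R,$$
in direct analogy with \eqref{eq:171202}. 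For $s\in[0,1]$ I would imitate \eqref{eq:171202b}: since $\mathrm e^{\mathrm it r^s}$ preserves $\mathcal H^1$ and $\mathrm e^{\mathrm itB}$ preserves $\mathcal H^{\pm 1}$ by Lemma~\ref{lemma:Flow}, a strong-limit computation yields
$$\mathrm i\bigl[(B-z)^{-1}, r^s\bigr] = -s(B-z)^{-1}\bigl(r^{s-1}\omega^2\bigr)(B-z)^{-1},$$
and hence
$$r^s(B-z)^{-1}r^{-s} = (B-z)^{-1} - \mathrm is(B-z)^{-1}\bigl(r^{s-1}\omega^2\bigr)(B-z)^{-1}r^{-s},$$
from which the bound follows because $r^{s-1}\omega^2$ is bounded for $s\le 1$. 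For $s\ge 1$ I would write $s = s'+[s]$ with $s'\in[0,1)$ and commute $r$ past resolvents one factor at a time, exactly as in the proof of Lemma~\ref{lem:171113}(1); the case $s<0$ follows by taking adjoints.

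With these tools, Part~(1) is immediate from the Helffer--Sj\"ostrand formula \eqref{82a0}: for $F\in\mathcal F^t$ with $t<0$, the polynomial growth in $\langle z\rangle/|\mathop{\mathrm{Im}}z|$ of the weighted resolvent is integrable against $\mathrm d\mu_F$, so $F(B)\in\mathcal L(L^2_s)$ for every $s\in\mathbb R$. For Part~(2), since $\mathrm i[B, r^s]$ is bounded for $s\le 1$, Corollary~\ref{cor:A2} applied with $T=B$ and $\epsilon=0$ yields the stated integral representation whenever $s\le 0$. For $s>0$ I would apply the identity with $r^s$ replaced by $r^s/(1+\epsilon r)^s$ and send $\epsilon\to 0$, just as in the proof of Lemma~\ref{lem:171113}(2). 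The order claim $\mathrm i[F(B), r^s] = O(r^{s-1})$ then follows by inserting the Step-2 bound into each of the two resolvent factors and noting that the middle multiplication operator $\omega^2 r^{s-1}$ is itself of order $s-1$.

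The only real obstacle is the rigorous justification of the strong-limit manipulation producing $\mathrm i[(B-z)^{-1}, r^s]$ and the ensuing weighted bound, but this is strictly easier than the corresponding step in the proof of Lemma~\ref{lem:171113}: the first-order nature of $B$, the absence of singular potentials in $B$, and the already-established invariance $\mathrm e^{\mathrm itB}\colon \mathcal H^{\pm 1}\to\mathcal H^{\pm 1}$ (Lemma~\ref{lemma:Flow}) all eliminate the boundary and form-domain subtleties encountered for $H$; no integrations by parts with potential boundary contributions are needed here.
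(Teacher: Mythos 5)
Your proposal is correct and follows exactly the route the paper intends: the paper gives no separate proof of this lemma but states that it is ``proved in the same manner as we proved Lemma~\ref{lem:171113}'', which is precisely your strategy of transplanting the weighted resolvent bound, the strong-limit commutator computation, and the $\epsilon$-regularization $r^s/(1+\epsilon r)^s$ from $H$ to $B$. The one genuinely new ingredient you need --- that $\mathrm i[B,r^s]=s\,r^{s-1}\omega^2$ is a bounded multiplication operator for $s\le 1$, which is what upgrades the admissible range to $t<1$ and removes the form-domain subtleties --- is correctly identified and computed.
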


\subsection{First commutator}\label{subsec:17111117}
Here we are going to compute the commutator $\mathrm i[H,B]$,
and bound it below.
To be rigorous about (form) domains we define 
$\mathrm i[H,B]$ first as a (bounded) quadratic form on $\mathcal H^2$:
\begin{align}
\langle \mathrm i[H,B]\rangle_\psi
=2\langle \mathop{\mathrm{Im}}(BH)\rangle_\psi
=\mathrm i\langle H\psi,B\psi\rangle-\mathrm i\langle B\psi,H\psi\rangle
\ \ \text{for }\psi\in \mathcal H^2.
\label{eq:17100614}
\end{align}
Let us set 
$$
\tilde\omega=\tfrac12\mathop{\mathrm{grad}} r^2,\quad 
\tilde h=\tfrac12\mathop{\mathrm{Hess}}r^2,\quad
\omega=\mathop{\mathrm{grad}} r,\quad 
h=\mathop{\mathrm{Hess}}r.
$$
Then \textit{formal} computations would suggest that 
\begin{align*}
A=\mathrm i[H,r^2],\quad
B=\mathrm i[H,r] ,\quad
A=r^{1/2}Br^{1/2},
\end{align*}
and hence that 
\begin{align}\label{eq:formulasyb}
\i [H,A]&=p\cdot \tilde h p-\tfrac18\parb{\Delta^2r^2}-\tilde \omega \cdot (\nabla V),\\
\mathrm i[H,B]&=r^{-1/2}\bigr(\mathrm i[H,A] -B^2\bigr)r^{-1/2} +\tfrac14r^{-2} \omega\cdot h\omega.
\label{eq:formulasybb}
\end{align} 
Thus we {could expect that $\mathrm i[H,B]$  extends  continuously onto larger spaces,
and this is partly justified in the following lemma, which provides a
formula for the commutator. 
We note that in the case where Condition~\ref{cond:smooth2wea3n1} is adopted
direct  integration  by parts in \eqref{eq:17100614}
would possibly need additional smoothness of  $\psi\in\mathcal H^2$
and the hard-core boundaries, and in 
general it is not obvious how  to consider $\mathrm i[H,B]$ 
 as the   unique extension of the corresponding form on $C^\infty_{\mathrm c}(\Omega)$.
Instead, we  shall compute $\mathrm i[H,B] \in \mathcal L(\mathcal H^2,\mathcal H^{-2})$ by combining  the
realization \eqref{eq:limit00} below with a density argument, 
 cf.\ \eqref{eq:171202b}.}

\begin{lemma}\label{lem:17100620}
Suppose 
{Condition~\ref{cond:smooth2wea3n1} or Condition \ref{cond:smooth2wea3n12}}, 
 and let $R\ge 1$ be sufficiently large.
Denoting  the extension of the quadratic form ${\bf D}B:=\mathrm
i[H,B]$ given in   \eqref{eq:17100614}  by the same notation, 
it has expressions
\begin{align}
{\bf D}B
&=\slim_{t\to 0} t^{-1}\parb{H\e^{\i tB}-\e^{\i tB}H}\ \ 
\text{in }
{\mathcal L(\mathcal H^2,\mathcal H^{-1})
\cap \mathcal L(\mathcal H^1,\mathcal H^{-2})}
\label{eq:limit00b}
\end{align}
and, more explicitly, 
\begin{align}\label{eq:formulasy}
\begin{split}
{\bf D}B
=r^{-1/2}\parb{L -B^2}r^{-1/2}
, 
\end{split}
\end{align} 
where $L\in \mathcal L(\mathcal H^2,\mathcal H^{-1})
\cap \mathcal L(\mathcal H^1,\mathcal H^{-2})$ 
is defined as 
\begin{align}\label{eq:mourre comm}
\begin{split}
L&=p\cdot \tilde h p
-\tfrac18(\Delta^2 r^2)
+\tfrac14r^{-1} \omega\cdot h\omega
\\&\phantom{{}={}}{}
+\sum_{a\in\mathcal A}
\Bigl(-\tilde\omega^a\cdot \bigl(\nabla^a V_a^{\mathrm{lr}}\bigr)
+\bigl(V_a^{\mathrm{sr}}\tilde\omega^a\bigr)\cdot \nabla^a
-\nabla^a\cdot
\bigl(V_a^{\mathrm{sr}}\tilde\omega^a\bigr)
+V_a^{\mathrm{sr}}\mathop{\mathrm{div}} \tilde\omega^a
\Bigr).
\end{split}
\end{align}
Here $\tilde\omega^a$ and $\nabla^a$ denote  the projection 
onto the internal components of $\tilde\omega$ and $\nabla$,
respectively, for any  cluster decomposition $a\in\mathcal A$.
\end{lemma}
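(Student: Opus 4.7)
The plan is to establish both expressions for $\mathbf DB$ — the strong limit realization \eqref{eq:limit00b} and the explicit formula \eqref{eq:formulasy}--\eqref{eq:mourre comm} — and verify that each agrees with the quadratic form defined by \eqref{eq:17100614}. The strategy, following the hint in the text, is to carry out the formal computation rigorously on the core $C^\infty_{\mathrm c}(\Omega)$, and then use Lemma~\ref{lemma:Flow} together with a density argument to upgrade the identity to $\mathcal L(\mathcal H^2,\mathcal H^{-1})\cap \mathcal L(\mathcal H^1,\mathcal H^{-2})$, thereby avoiding any direct integration by parts against rough boundary data or form-bounded singularities.

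First, I would address \eqref{eq:limit00b}. For $\psi\in\mathcal H^2$ and $\varphi\in\mathcal H^1$, Lemma~\ref{lemma:Flow} guarantees that $\mathrm e^{\mathrm itB}\varphi\in\mathcal H^1$ uniformly in $t\in[-1,1]$, so the quantity
\begin{equation*}
t^{-1}\bigl\langle \varphi,(H\mathrm e^{\mathrm itB}-\mathrm e^{\mathrm itB}H)\psi\bigr\rangle
= t^{-1}\bigl\langle H\mathrm e^{-\mathrm itB}\varphi-\mathrm e^{-\mathrm itB}H\varphi,\psi\bigr\rangle
\end{equation*}
is meaningful (with $H$ interpreted as a form on $\mathcal H^1$). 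Its limit as $t\to 0$ exists for $\varphi,\psi\in C^\infty_{\mathrm c}(\Omega)$ and equals $\langle\varphi,\mathrm i[H,B]\psi\rangle$ by direct differentiation. Uniform bounds coming from \eqref{eq:17100722} and polarization then give the strong limit in $\mathcal L(\mathcal H^2,\mathcal H^{-1})$, and swapping the roles of $\varphi$ and $\psi$ yields the strong limit in $\mathcal L(\mathcal H^1,\mathcal H^{-2})$ as well.

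Next, I would compute $\mathbf DB$ explicitly on $C^\infty_{\mathrm c}(\Omega)$. For such $\psi$ one may perform unhindered integrations by parts; using $B=\tfrac12(\omega\cdot p+p\cdot\omega)$, the kinetic contribution $\mathrm i[{-}\tfrac12\Delta,B]$ becomes
\begin{equation*}
p\cdot h'p-\tfrac18(\Delta^2 r)\cdot(\text{rescaled prefactor})+\ldots,
\end{equation*}
which after the substitution $A=r^{1/2}Br^{1/2}$ and reorganization via the formal identities \eqref{eq:formulasyb}, \eqref{eq:formulasybb} produces precisely the kinetic part of $L$ plus the correction $-r^{-1/2}B^2r^{-1/2}$. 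For the potential part, the formal term $-\tilde\omega\cdot\nabla V$ splits over pair-potentials into the three types. The long-range contribution gives $-\sum_a\tilde\omega^a\cdot(\nabla^a V_a^{\mathrm{lr}})$. For the short-range piece, one writes $-\tilde\omega^a\cdot \nabla^a V_a^{\mathrm{sr}}$ in symmetrized form as $V_a^{\mathrm{sr}}\tilde\omega^a\cdot\nabla^a-\nabla^a\cdot(V_a^{\mathrm{sr}}\tilde\omega^a)+V_a^{\mathrm{sr}}\mathop{\mathrm{div}}\tilde\omega^a$, which is the form appearing in \eqref{eq:mourre comm} and, crucially, only requires $V_a^{\mathrm{sr}}\in\mathcal L(H^1_0(\Omega_a),L^2(\Omega_a))$, i.e.\ Condition~\ref{cond:smooth2wea3n1}\ref{item:cond122same3n2}. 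The singular piece $V_a^{\mathrm{si}}$ disappears: since $V_a^{\mathrm{si}}$ is supported in a bounded subset of $\Omega_a$ and property \eqref{eq:17100721} applied to the rescaled field gives $\tilde\omega^a_R=0$ on $\{|x^a|\le Rc'\}$, we have $V_a^{\mathrm{si}}\tilde\omega^a\equiv 0$ for $R$ sufficiently large.

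The main obstacle will be the last step: verifying that $L$ as written in \eqref{eq:mourre comm} really lies in $\mathcal L(\mathcal H^2,\mathcal H^{-1})\cap \mathcal L(\mathcal H^1,\mathcal H^{-2})$ under Condition~\ref{cond:smooth2wea3n1}, so that the identity on $C^\infty_{\mathrm c}(\Omega)$ extends by density. The kinetic terms are easily handled by the derivative bounds \eqref{eq:boun_constr}, which ensure $\tilde h$, $\Delta^2 r^2$, and $r^{-1}\omega\cdot h\omega$ are uniformly bounded. For the long-range term one invokes Condition~\ref{cond:smooth2wea3n1}\ref{item:cond12bweak3n2}, noting that $\tilde\omega^a$ is bounded on its support. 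The symmetrized short-range block contributes at most one derivative against $V_a^{\mathrm{sr}}\tilde\omega^a$, which is $\mathcal L(H^1_0,L^2)$ by Condition~\ref{cond:smooth2wea3n1}\ref{item:cond122same3n2}; combining with $\mathop{\mathrm{div}}\tilde\omega^a$ bounded (by \eqref{eq:boun_constr}), this block maps $\mathcal H^1\to\mathcal H^{-1}$, hence a fortiori into $\mathcal H^{-1}$ from $\mathcal H^2$ and into $\mathcal H^{-2}$ from $\mathcal H^1$. Density of $C^\infty_{\mathrm c}(\Omega)$ in $\mathcal H^2$ with respect to the graph norm of $H$ (in the sense of the Friedrichs construction \eqref{eq:100516}) then completes the extension and matches both realizations \eqref{eq:limit00b} and \eqref{eq:formulasy} with the form-theoretic commutator \eqref{eq:17100614}.
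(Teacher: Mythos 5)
Your computation of the commutator on $C^\infty_{\mathrm c}(\Omega)$ and your verification of the mapping properties of $L$ are in line with the paper, but the bridge from the core identity to the strong limit \eqref{eq:limit00b} is missing, and that bridge is the heart of the paper's proof. You claim that pointwise convergence of the matrix elements on the core together with ``uniform bounds coming from \eqref{eq:17100722}'' yields the strong limit. But \eqref{eq:17100722} only bounds $\|H\e^{\i tB}\psi-\e^{\i tB}H\psi\|_{\mathcal H^{-1}}$ by $C\|\psi\|_{\mathcal H^1}$; after dividing by $t$ this is $O(t^{-1})$, not $O(1)$, so the difference quotients are not shown to be uniformly bounded, and convergence of matrix elements on a dense set would in any case only give a weak limit, not a strong one. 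The paper's device is the Duhamel identity
\begin{align*}
H\e^{\i tB}-\e^{\i tB}H=\int_0^t\e^{\i(t-s)B}\,r^{-1/2}\bigl(L-B^2\bigr)r^{-1/2}\,\e^{\i sB}\,\mathrm ds,
\end{align*}
first as a quadratic form on $C^\infty_{\mathrm c}(\Omega)$: once the integrand is identified as an element of $\mathcal L(\mathcal H^1,\mathcal H^{-1})$ (under Condition~\ref{cond:smooth2wea3n1}), both sides extend continuously to $\mathcal H^1$, the averaged difference quotient is uniformly bounded, and the strong limit follows from strong continuity of $s\mapsto\e^{\i sB}$. Without this (or an equivalent) identity the strong convergence asserted in \eqref{eq:limit00b} is not established.

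Two further points. First, your closing density argument --- density of $C^\infty_{\mathrm c}(\Omega)$ in $\mathcal H^2$ for the graph norm --- is not available under Condition~\ref{cond:smooth2wea3n1}: for a merely form-bounded $V$ one has $V\psi\notin L^2$ in general even for $\psi\in C^\infty_{\mathrm c}(\Omega)$, so $C^\infty_{\mathrm c}(\Omega)\not\subset\mathcal D(H)$; the correct extension is by density in $\mathcal H^1$, using that both sides of the identity are continuous $\mathcal H^1\to\mathcal H^{-1}$. Second, the case of Condition~\ref{cond:smooth2wea3n12} is not treated: there $V_a^{\mathrm{sr}}$ and $\partial^\alpha V_a^{\mathrm{lr}}$ are only bounded relative to $H^2(\bX^a)$, so $L$ lands merely in $\mathcal L(\mathcal H^2,\mathcal H^{-1})\cap\mathcal L(\mathcal H^1,\mathcal H^{-2})$ and not in $\mathcal L(\mathcal H^1,\mathcal H^{-1})$, and extending the Duhamel identity to $\mathcal H^2$ then requires the invariance \eqref{eq:17100722b} of $\mathcal H^{\pm2}$ under $\e^{\i tB}$, which holds in that case by Remarks~\ref{remark:self-adjo-realss}.
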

{\begin{remarks*}
\begin{enumerate}
\item
Under Condition~\ref{cond:smooth2wea3n1} the quantities 
$\mathbf DB$ and $L$ actually belong to 
$\mathcal L(\mathcal H^1,\mathcal H^{-1})$, and  in this case the
limit \eqref{eq:limit00b} may be taken in $\mathcal L(\mathcal
H^1,\mathcal H^{-1})$, cf.  \eqref{eq:1710061628}.
\item
If we consider $\mathbf DA=\mathrm i[H,A]$ in some extended sense, we can also write 
\begin{align}
L=\mathbf DA+\tfrac14r^{-1} \omega\cdot h\omega,
\label{eq:mourre commb}
\end{align}
cf.\ \eqref{eq:formulasyb} and \eqref{eq:mourre comm}. 
However, since we will not `undo' the commutator $\mathrm i[H,A]$  
or in other ways use the operator $A$ itself, 
we have suppressed $A$ from \eqref{eq:mourre comm}. 
We emphasize that our theory does not depend on $A$ but on $B$. 
\end{enumerate}
\end{remarks*}}
\begin{proof} 
 We may consider $\mathrm i[H,B]\in \mathcal L(\mathcal H^2,\mathcal
H^{-2})$  since $\mathcal H^1\subset \mathcal D(B)$.
By Lemma~\ref{lemma:Flow}  
it follows that 
\begin{align}\label{eq:limit00}
\begin{split}
\mathrm i[H,B]
=\slim_{t\to 0} t^{-1}\parb{H\e^{\i tB}-\e^{\i tB}H}\ \ 
\text{in }\mathcal L(\mathcal H^2,\mathcal H^{-2}).
\end{split}
\end{align}   Let us write 
$$H=H_0+V^{\mathrm{lr}}+V^{\mathrm{sr}}+V^{\mathrm{si}};\quad 
V^*=\sum_{a\in\mathcal A}V^*_a\ \ \text{for }*=\mathrm{lr},\mathrm{sr},\mathrm{si}.
$$

We first consider Condition~\ref{cond:smooth2wea3n1}.
 Then by Lemma~\ref{lemma:Flow} in fact  
$H\e^{\i tB}-\e^{\i tB}H\in \mathcal L(\mathcal H^1,\mathcal H^{-1})$
for each $t\in \mathbb R$.   
For sufficiently large $R\ge 1$ we compute as a quadratic form on $C^\infty_{\mathrm c}(\Omega)\subset\mathcal H^1$ 
\begin{align*}
H\e^{\i tB}-\e^{\i tB}H
&=\int_0^t\frac{\mathrm d}{\mathrm ds}\mathrm e^{\mathrm i(t-s)B}H \mathrm e^{\mathrm i sB}\,\mathrm ds
\\&
=\int_0^t\mathrm e^{\mathrm i(t-s)B}\bigl(\mathrm i\bigl[H_0+V^{\mathrm{lr}}+V^{\mathrm{sr}},B\bigr]\bigr)
\mathrm e^{\mathrm i sB}\,\mathrm ds
\\&
=\int_0^t\mathrm e^{\mathrm i(t-s)B}r^{-1/2}\bigl(L-B^2\bigr)r^{-1/2}
\mathrm e^{\mathrm i sB}\,\mathrm ds.
\end{align*} 
Hence for each $t\in\mathbb R\setminus \{0\}$ we obtain 
\begin{align}
t^{-1}\bigl(H\e^{\i tB}-\e^{\i tB}H\bigr)
&=
t^{-1}
\int_0^t\mathrm e^{\mathrm i(t-s)B}r^{-1/2}\bigl(L-B^2\bigr)r^{-1/2}\mathrm e^{\mathrm i sB}\,\mathrm ds 
\label{eq:1710061626}
\end{align}
as a quadratic form on $C^\infty_{\mathrm c}(\Omega)$, 
but  both sides of \eqref{eq:1710061626} obviously extend continuously
to $\mathcal H^1$. 
 The right-hand side of \eqref{eq:1710061626}
has a strong limit $r^{-1/2}\bigl(L-B^2\bigr)r^{-1/2}$  for  $t\to 0$,
and consequently 
it follows that 
\begin{align}
\slim_{t\to 0} t^{-1}\parb{H\e^{\i tB}-\e^{\i tB}H}=r^{-1/2}\bigl(L-B^2\bigr)r^{-1/2}
\ \ \text{in }\mathcal L(\mathcal H^1,\mathcal H^{-1}).
\label{eq:1710061628}
\end{align}
Hence the lemma is proven {under Condition~\ref{cond:smooth2wea3n1}}. 

If Condition~\ref{cond:smooth2wea3n12} holds we  compute  as a
  quadratic form on $C^\infty_{\mathrm c}(\bX)$
  \begin{align*}
    \mathrm i\bigl[H_0+V^{\mathrm{lr}}+V^{\mathrm{sr}},B\bigr]=r^{-1/2}\bigl(L-B^2\bigr)r^{-1/2},
  \end{align*} leading to \eqref{eq:1710061626}
 as a quadratic form on $C^\infty_{\mathrm c}(\bX)$. Due to Remark
 \ref {remark:self-adjo-realss}~\eqref{item:self-adjo-real2}
we can extend \eqref{eq:1710061626}  (uniquely) to $\vH^2$. The right-hand side
has the  strong limit $r^{-1/2}\bigl(L-B^2\bigr)r^{-1/2}$  in ${\mathcal L(\mathcal H^2,\mathcal H^{-1})
\cap \mathcal L(\mathcal H^1,\mathcal H^{-2})}$. The proof of the
lemma is complete under  Condition~\ref{cond:smooth2wea3n12} also.

\end{proof}
\begin{remark*}
The completeness of the vector field $\omega$
comes in handy in giving an interpretation of the formal commutator $\mathrm i[H,B]$. 
If $\omega$ is incomplete in $\Omega$, 
we can not freely `do and undo' the commutator $\mathrm i[H,B]$ due to 
the boundary contribution coming from integration  by parts.
See \cite [Proposition 6.2]{BGS} for an explicit
 formula for this boundary contribution under regularity conditions on the boundary of $\Omega$.
We also note that for $N=1$ only the \textit{forward} completeness suffices, see \cite{IS3}.
\end{remark*}

We quote the following result without proof.
It is the so-called \emph{Mourre estimate} for the $N$-body Schr\"odinger operator. 
Here again we suppress the usual conjugate operator $A$,
since in this paper it suffices to have this estimate for 
the quadratic form $L$ defined by  \eqref{eq:mourre comm};
the expression in terms of $A$ is not needed.

\begin{lemma}\label{lemma:Mourre1_hard} 
Suppose 
{Condition \ref{cond:smooth2wea3n1} or Condition \ref{cond:smooth2wea3n12}}. 
Let $I\subset\R\setminus \vT(H)$ be a compact interval, let $\epsilon>0$, 
and take $R\ge 1$ large enough.
Then for any $\lambda\in I$
there exist a
neighbourhood $\vU$ of $\lambda$ and a compact operator $K$ on
$\vH$ such that for all  real-valued $f\in C^\infty_{\c}(\vU)$
\begin{equation*}
f(H)^*Lf(H)\geq f(H)^*\bigl(2d(\lambda)-\epsilon
-K\bigr)f(H).
\end{equation*}
\end{lemma}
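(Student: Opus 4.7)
The plan is to proceed by induction on $N = \#a_{\min} - 1$. The base case $N=0$ is trivial, since then $\vA = \{a_{\max}\}$, $V=0$, $\vT(H)=\emptyset$, $d(\lambda)=1$, and inspection of \eqref{eq:mourre comm} with $\tilde h = \mathrm{id}$ for large $R$ yields $L \ge 2 - \epsilon$ modulo $H_0$-compact corrections after $f(H)$-localization. For the inductive step, I would combine a geometric partition of unity with spectral decomposition of sub-Hamiltonians.

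First, I would localize using the partition $\{\eta_a\}_{a\in\vA}$ from Section~\ref{subsec:resc-graf-funct}. By an IMS-type identity, with the commutators $[\eta_a^{1/2},L]$ being $O(r^{-1})$ thanks to \eqref{eq:boun_constr} (hence relatively $H$-compact after sandwiching by $f(H)$), one obtains
\begin{align*}
f(H)^* L f(H) = \sum_{a\in\vA} f(H)^* \eta_a^{1/2} L \eta_a^{1/2} f(H) + K_0,
\end{align*}
with $K_0$ compact. On $\supp \eta_a$, the bound \eqref{eq:171018} gives $|x^c|\ge c>0$ for every $c\not\subset a$, so the terms of \eqref{eq:mourre comm} involving $V_c$ or $\nabla^c V_c$ for $c\not\subset a$ decay in $r$, by items \ref{item:cond12bweak3n2}--\ref{item:cond122same3n2} of Condition~\ref{cond:smooth2wea3n1} (or of Condition~\ref{cond:smooth2wea3n12}), and thus contribute only $H$-compact operators. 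The Hessian bound \ref{item:4k} yields $\tilde h \ge \eta_a\,\mathrm{id}_{\bX_a}$ on $\supp\eta_a$, so $p\cdot \tilde h p$ dominates $p_a \eta_a p_a$. Writing $p_a^2 = 2(H - H^a) - 2I^a$ with intercluster interaction $I^a = \sum_{b\not\subset a} V_b$ (again compact on $\supp\eta_a$), and exploiting $f\in C^\infty_{\c}(\vU)$ with $\vU$ a small neighbourhood of $\lambda$ to replace $H$ by $\lambda$, one reaches
\begin{align*}
f(H)^* \eta_a^{1/2} L \eta_a^{1/2} f(H) \ge f(H)^*\eta_a^{1/2}\bigl(2(\lambda - H^a) + L^a\bigr)\eta_a^{1/2} f(H) - K_a,
\end{align*}
with $K_a$ compact (modulo an $\epsilon/3$ error absorbed by shrinking $\vU$), where $L^a$ is the commutator analogue for $H^a$ built via property \ref{item:5k} from the internal component $\tilde\omega^a$.

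Second, I apply the inductive hypothesis to each $H^a$, $a\neq a_{\max}$, which has strictly smaller particle number. Decomposing $\eta_a^{1/2} f(H)$ via the spectral theorem for $H^a$, on the spectral piece near each $\mu \in \sigma_{\pupo}(H^a)\cap(-\infty,\lambda]$ one has $L^a \ge 2 d^a(\mu) - \epsilon/(3|\vA|)$ modulo compact, while on the continuous spectrum one uses $\sigma_{\ess}(H^a) \subset [\min\vT(H^a),\infty)$. In either case $2(\lambda-\mu) + 2d^a(\mu) \ge 2(\lambda - \tau)$ for $\tau$ the supremum of $\vT(H)\cap(-\infty,\lambda]$, which equals $2d(\lambda)$ by \eqref{eq:17111910} and the definition of $\vT(H)$. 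Summing over $a$ and collecting all error terms into a single compact $K$ finishes the inductive step.

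The principal technical obstacle is the $H$-compactness of the singular and first-derivative contributions to $L$ in \eqref{eq:mourre comm} after the geometric localization, under the weakest form-bounded hypothesis Condition~\ref{cond:smooth2wea3n1}. For $c\not\subset a$ one combines the weighted bound $\langle x^c\rangle^{1+2\delta} \nabla V_c^{\mathrm{lr}}\in \vL(H^1_0,H^{-1})$ and its short-range analogue with the lower bound $|x^c|\ge c>0$ on $\supp \eta_a$ to obtain compactness via a Rellich-type argument applied to the resolvent. For $c\subset a$ these terms reconstruct precisely $L^a$ acting in the internal variables $x^a$. All steps must be performed as sesquilinear forms on $H^1_0(\Omega)$, consistent with the form realization \eqref{eq:100516} of $H$ in presence of hard-cores, which makes the bookkeeping the most delicate part of the argument.
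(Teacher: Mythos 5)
First, note that the paper itself does not prove this lemma: it is quoted as an input, with the argument deferred to \cite{IS1} (``We omit the details''). Your sketch follows exactly the route of \cite{IS1} and its predecessors (Froese--Herbst, Perry--Sigal--Simon): IMS localization with the Graf partition $\{\eta_a\}$, the identity $p_a^2=2(H-H^a-I^a)$, induction over cluster decompositions, and the threshold arithmetic $2(\lambda-\mu)+2d^a(\mu)\ge 2d(\lambda)$. So there is no divergence of method to report; the question is whether the sketch closes.

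It does not close as written, and the gap is at the heart of the inductive step. The inductive hypothesis for $H^a$ produces, near an energy $\mu$, an estimate with a compact error $K^a$ on $\vH^a=L^2(\Omega^a)$. In the full problem this operator acts as $K^a\otimes 1$ on $L^2(\Omega^a)\otimes L^2(\bX_a)$, which is \emph{not} compact on $\vH$ when $\bX_a\neq\{0\}$. Hence ``$L^a\ge 2d^a(\mu)-\epsilon/(3|\vA|)$ modulo compact'' does not propagate through the tensor product, and ``collecting all error terms into a single compact $K$'' fails. The standard repair, which is precisely why the final constant is $2d(\lambda)$ and not better, distinguishes two cases on $\sigma(H^a)\cap(-\infty,\lambda]$. (i) At $\mu\notin\sigma_{\pupo}(H^a)$ one uses $E_{H^a}(\{\mu\})=0$ to shrink the spectral window $J\ni\mu$ until $\|E_{H^a}(J)K^aE_{H^a}(J)\|<\epsilon'$; a small-norm error, unlike a compact one, does survive tensoring, and a compactness argument in $\mu$ gives a finite cover. (ii) At $\mu\in\sigma_{\pupo}(H^a)\subset\vT(H)$ one does not use the inductive Mourre estimate at all: the virial theorem for $H^a$ gives $P^a_\mu L^a P^a_\mu\ge 0$ on the (finite-rank in $\vH^a$) eigenprojection, and the lower bound $2d(\lambda)$ comes entirely from the inter-cluster kinetic term $2(\lambda-\mu)$, which is exactly where the definition of $d$ via thresholds enters. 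Only the top-level errors (IMS commutators, inter-cluster potentials on $\supp\eta_a$, replacement of $H$ by $\lambda$ using $f\in C^\infty_{\c}(\vU)$) are genuinely compact on $\vH$ and may be absorbed into $K$. Without this case distinction the induction does not go through. A secondary, minor point: your base case $N=0$ forces $\bX=\{0\}$ and is degenerate; the induction is more naturally anchored at $H^{a_{\min}}=0$ on $\C$, with the free-Laplacian computation you describe appearing as the $a=a_{\min}$ term of the partition rather than as a separate base case.
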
 
\begin{proof}
We  proved this version of the Mourre estimate in \cite{IS1} for a
class of more
regular pair-potentials 
using the properties of the rescaled Graf function stated in
Subsection \ref{subsec:resc-graf-funct}.  
Note that although there is an extra term $\tfrac14r^{-1} \omega\cdot h\omega$
in  \eqref{eq:mourre commb}, it is obviously harmless.
We omit the details. 
\end{proof}

We will always implement Lemma~\ref{lemma:Mourre1_hard}
in combination with Lemma~\ref{lem:17100620} in the following form.

\begin{corollary}\label{cor:171007}
Suppose 
{Conditions~\ref{cond:smooth2wea3n1} or \ref{cond:smooth2wea3n12}}. 
Let $\lambda\in\mathbb R\setminus\mathcal T(H)$ and $\sigma\in (0,\gamma(\lambda))$,
and take $R\ge 1$ large enough
and a neighborhood $\mathcal U\subset\mathbb R$ of $\lambda$ small enough.
Then for any real-valued function 
$f\in C^\infty_{\mathrm c}(\mathcal U)$ there exists $C>0$ 
such that, as quadratic forms on $\mathcal H$,
 \begin{align*}
 \begin{split}
 f(H)(\mathbf DB)f(H)
\ge 
f(H)r^{-1/2}\bigl(\sigma^2-B^2\bigr)r^{-1/2}f(H)
-Cr^{-2}. 
 \end{split}
 \end{align*} 
\end{corollary}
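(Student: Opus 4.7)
The plan is to combine the commutator identity of Lemma~\ref{lem:17100620}, namely $\mathbf{D}B = r^{-1/2}(L-B^{2})r^{-1/2}$, with the Mourre estimate of Lemma~\ref{lemma:Mourre1_hard}, using the commutator bound of Lemma~\ref{lem:171113} to shuffle factors of $r^{-1/2}$ across $f(H)$. Sandwiching the formula for $\mathbf{D}B$ between $f(H)$, the $B^{2}$-contributions on both sides of the desired inequality match identically, so the problem reduces to establishing
\begin{equation*}
f(H)\,r^{-1/2}L\,r^{-1/2}f(H) \;\geq\; \sigma^{2} f(H)\,r^{-1}\,f(H) \;-\; Cr^{-2}
\end{equation*}
as quadratic forms on $\mathcal{H}$.

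For the first step I would move one factor of $r^{-1/2}$ past $f(H)$. By Lemma~\ref{lem:171113} the commutator $[f(H),r^{-1/2}]$ is of order $-3/2$, so the cross terms arising in
\begin{equation*}
f(H)\,r^{-1/2}L\,r^{-1/2}f(H) = r^{-1/2}\,f(H)\,L\,f(H)\,r^{-1/2} + \mathcal{E}
\end{equation*}
are each a product of an operator of order $-3/2$, the sandwiched factor $L$ absorbed in bounded combinations $f(H)L$ or $Lf(H)$, and a further factor of order $-1/2$; hence $\mathcal{E}$ is absorbable into the allotted $-Cr^{-2}$ error. Next I would choose $\epsilon>0$ so that $2d(\lambda)-\epsilon>\sigma^{2}$, which is possible because $\sigma<\gamma(\lambda)=\sqrt{2d(\lambda)}$, and extract from Lemma~\ref{lemma:Mourre1_hard} a strict excess $\mu:=2d(\lambda)-\epsilon-\sigma^{2}>0$. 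Conjugating the resulting Mourre inequality by $r^{-1/2}$ and commuting the outer $r^{-1/2}$ back through $f(H)^{2}$, once more at the cost of an $O(r^{-2})$ error, produces $\sigma^{2} f(H) r^{-1} f(H)$ plus the positive residue $\mu\, r^{-1/2} f(H)^{2} r^{-1/2}$ minus the compact defect $r^{-1/2}f(H)Kf(H)r^{-1/2}$.

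The main obstacle is controlling this compact defect. Writing $K' := f(H)Kf(H)$, for any $\epsilon_{1}>0$ the norm-convergence $\chi_{m}K'\chi_{m}\to K'$ as $m\to\infty$, valid since $K'$ is compact, lets me pick $m$ with $\|K'-\chi_{m}K'\chi_{m}\|<\epsilon_{1}$. Since $\chi_{m}$ is supported where $r\leq 2^{m+1}$ and $r\geq 1$ by \eqref{eq:17121}, one has $\chi_{m}^{2} \leq 2^{2(m+1)}r^{-2}$ globally, so the truncated piece $r^{-1/2}\chi_{m}K'\chi_{m}r^{-1/2}$ is dominated by $C_{m}r^{-2}$. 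The norm-$\epsilon_{1}$ residual is to be absorbed into the Mourre surplus $\mu\, r^{-1/2}f(H)^{2}r^{-1/2}$, where the freedom in choosing $\epsilon_{1}$ small relative to $\mu$ (after commuting factors using Lemma~\ref{lem:171113} once more to generate an $r^{-1}$ comparison) is essential. Carrying out this absorption cleanly, while bookkeeping all commutator and truncation errors simultaneously as uniform form-bounds of order $r^{-2}$, is the technical crux of the argument.
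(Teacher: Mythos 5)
Your overall route is the paper's: substitute $\mathbf{D}B=r^{-1/2}(L-B^2)r^{-1/2}$ from Lemma~\ref{lem:17100620}, cancel the $B^2$-terms, commute $r^{-1/2}$ past $f(H)$ using Lemma~\ref{lem:171113} at the cost of order-$(-2)$ errors, insert the Mourre estimate of Lemma~\ref{lemma:Mourre1_hard}, and dispose of the compact defect by a spatial cutoff $\chi_m$. Up to and including the Mourre step this matches the paper's argument essentially verbatim.

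However, the final absorption step fails as you have set it up. You truncate $K':=f(H)Kf(H)$ and propose to absorb the operator-norm-small residual $K'-\chi_mK'\chi_m$ into the surplus $\mu\,r^{-1/2}f(H)^2r^{-1/2}$. The norm bound $\|K'-\chi_mK'\chi_m\|<\epsilon_1$ only yields the form bound $-r^{-1/2}(K'-\chi_mK'\chi_m)r^{-1/2}\ge-\epsilon_1 r^{-1}$, with \emph{no} $f(H)$ factors remaining; and $\mu\,r^{-1/2}f(H)^2r^{-1/2}-\epsilon_1 r^{-1}$ is not bounded below by $-Cr^{-2}$ for any choice of $\epsilon_1>0$, since on states spectrally localized away from $\supp f$ and spatially localized near $|x|\sim \rho\to\infty$ the first term is $O(\rho^{-3})$ while the second is of size $\epsilon_1\rho^{-1}$. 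The "$r^{-1}$ comparison" you invoke goes the wrong way: $r^{-1/2}f(H)^2r^{-1/2}\le r^{-1}$, not $\ge c\,r^{-1}$. The repair is to keep the $f(H)$'s attached when truncating, i.e.\ to use $\mu f(H)^2-K'=f(H)(\mu-K)f(H)$ and choose $m$ with $\|K-\chi_mK\chi_m\|\le\mu$, so that $f(H)(\mu-K)f(H)\ge -f(H)\chi_mK\chi_mf(H)$; the remaining term, sandwiched by $r^{-1/2}$ and after commuting with $f(H)$ via Lemma~\ref{lem:171113}, is of order $-2$ by the compact support of $\chi_m$ from \eqref{eq:1711021}. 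This is exactly the order of operations in the paper: the cutoff is applied to $K$ \emph{inside} the $f(H)$-sandwich, not to $f(H)Kf(H)$ from the outside.
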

\begin{proof}
In order to apply Lemma~\ref{lemma:Mourre1_hard} we fix variables in the following order: 
First fix any $\lambda\in\mathbb R\setminus\mathcal T(H)$ and $\sigma\in (0,\gamma(\lambda))$ 
as in the assertion.
We then let $I=\set{\lambda}$ and take  any $\epsilon\in (0,2d(\lambda)-\sigma^2)$. 
Wtih  these quantities $I$ and $\epsilon$ fixed we consider
 any large $R\ge 1$ in agreement with 
both Lemmas~\ref{lem:17100620} and \ref{lemma:Mourre1_hard}. 
We fix a neighbourhood $\mathcal U$ of $\lambda$ 
and a compact operator $K$ on $\mathcal H$ as in Lemma~\ref{lemma:Mourre1_hard},
and let $f\in C^\infty_{\mathrm c}(\mathcal U)$ be any real-valued
function. We need to show the quadratic form bound. 

First,  by Lemma~\ref{lem:17100620} we  have 
\begin{align}
\begin{split}
f(H)(\mathbf DB)f(H)
&=f(H)r^{-1/2}\parb{L -B^2}r^{-1/2}f(H)
\\&\ge 
r^{-1/2}f(H)Lf(H)r^{-1/2}
-f(H)r^{-1/2}B^2r^{-1/2}f(H)
\\&\phantom{{}={}}{}
+\bigl[f(H),r^{-1/2}\bigr]L\bigl[r^{-1/2},f(H)\bigr]
\\&\phantom{{}={}}{}
+2\mathop{\mathrm{Re}}\bigl(r^{-1/2}f(H)L\bigl[r^{-1/2},f(H)\bigr]\bigr)
.
\end{split}\label{eq:1710083}
\end{align}
By Lemma~\ref{lemma:Mourre1_hard} we can bound the first term 
on the right-hand side of \eqref{eq:1710083} as 
\begin{align}
r^{-1/2}f(H)Lf(H)r^{-1/2}
\ge 
r^{-1/2}f(H)\bigl(2d(\lambda)-\epsilon-K\bigr)f(H)r^{-1/2}.
\label{eq:171008341}
\end{align}
Since $K$ is compact on $\mathcal H$, we can choose $m\in\mathbb N_0$ large enough that
\begin{align}\label{eq:Kbnd}
2d(\lambda)-\epsilon-\|K-\chi_{m}K\chi_{m}\|\geq \sigma^2, 
\end{align}
where $\chi_m$ is the smooth cut-off function from \eqref{eq:1711021}.
The bounds \eqref{eq:171008341} and \eqref{eq:Kbnd} imply that 
\begin{align}
\begin{split}
r^{-1/2}f(H)Lf(H)r^{-1/2}
&\ge 
\sigma^2 r^{-1/2}f(H)^2r^{-1/2}
-r^{-1/2}f(H)\chi_{m}K\chi_{m}f(H)r^{-1/2}
\\&
=
\sigma^2 f(H)r^{-1}f(H)
-\sigma^2 \bigl[r^{-1/2},f(H)\bigr]\bigl[f(H),r^{-1/2}\bigr]
\\&\phantom{{}={}}{}
+2\sigma^2 \mathop{\mathrm{Re}}\bigl(r^{-1/2}f(H)\bigl[f(H),r^{-1/2}\bigr]\bigr)
\\&\phantom{{}={}}{}
-r^{-1/2}f(H)\chi_{m}K\chi_{m}f(H)r^{-1/2}
.
\end{split} 
\label{eq:171008350} 
\end{align} 
By \eqref{eq:1710083}, \eqref{eq:171008350} it remains to bound the
third and fourth terms of \eqref{eq:1710083} and the second to fourth
terms of \eqref{eq:171008350}, but all of them can be treated by using
Lemma~\ref{lem:171113}. In fact one easily checks using the explicit
representations \eqref{eq:fComR} and \eqref{eq:mourre comm} that all
of these terms are of order $-2$ in the sense of \eqref{eq:1712022}.
\end{proof}

Finally we compute and bound commutators of functions of $H$ and $B$.

\begin{lemma}\label{lem:17111416}
Suppose 
 {Condition~\ref{cond:smooth2wea3n1} or Condition \ref{cond:smooth2wea3n12}},
and let $R\ge 1$ be sufficiently large. 
For any $f\in \mathcal F^t$ and $F\in \mathcal F^{t'}$ with $t<-1/2$ and $t'<1$ 
the commutators $\mathrm i[f(H),B]$, $\mathrm i[f(H),F(B)]$ extend to  bounded sesquilinear  forms 
on $\mathcal H$ from $\mathcal D(B)$, $\mathcal D(F(B))$, 
and they have expressions
\begin{align*}
\mathrm i[f(H),B]
&=
-\int _{\C}(H-z)^{-1}\bigl(\mathbf DB\bigr) (H-z)^{-1}\,\mathrm d\mu_f(z)
,\\
\mathrm i[f(H),F(B)]
&=
-\int _{\mathbb C}(B-z)^{-1}\bigl(\mathrm i[f(H),B]\bigr)(B-z)^{-1}\,\mathrm d\mu_F(z),
\end{align*}
respectively. 
Moreover, with  the notation   \eqref{eq:1712022}
\begin{align*}
\mathrm i[f(H),B]=O(r^{-1})
,\quad 
\mathrm i[f(H),F(B)]=O(r^{-1})
.
\end{align*}
\end{lemma}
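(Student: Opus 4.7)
The plan is to derive both commutator formulas by the Helffer--Sj\"ostrand calculus of Lemma~\ref{lem:A1}, and then to extract the order $-1$ decay by the same weighted bootstrap used in the proof of Lemma~\ref{lem:171113}. For the first commutator, since $t<-1/2<0$ the representation $f(H)=\int_{\C}(H-z)^{-1}\,\mathrm d\mu_f(z)$ holds via \eqref{82a0} with $k=0$. Combining this with the resolvent identity
\begin{align*}
\mathrm i[(H-z)^{-1},B]=-(H-z)^{-1}(\mathbf DB)(H-z)^{-1},
\end{align*}
which is justified on $\mathcal H^1\subset\mathcal D(B)$ via the strong limit realization \eqref{eq:limit00b} of Lemma~\ref{lem:17100620} together with the difference quotient $(\mathrm it)^{-1}(\mathrm e^{\mathrm itB}-1)\to B$ in $\mathcal L(\mathcal H^1,\mathcal H)$, one obtains formally
\begin{align*}
\mathrm i[f(H),B]=-\int_{\C}(H-z)^{-1}(\mathbf DB)(H-z)^{-1}\,\mathrm d\mu_f(z).
\end{align*}
Convergence is verified by noting that $\mathbf DB\in\mathcal L(\mathcal H^2,\mathcal H^{-1})\cap\mathcal L(\mathcal H^1,\mathcal H^{-2})$ and that the sandwich $(H-z)^{-1}(\mathbf DB)(H-z)^{-1}$ on $\mathcal H$ grows at most polynomially in $\langle z\rangle/|\Im z|$ by \eqref{eq:pDis}; this is absorbed by the bound $|\bar\partial\tilde f(z)|\le C_k|\Im z|^k\langle z\rangle^{t-k-1}$ for $k$ chosen sufficiently large.

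To upgrade this to $\mathrm i[f(H),B]=O(r^{-1})$ I would insert the explicit form $\mathbf DB=r^{-1/2}(L-B^2)r^{-1/2}$ from \eqref{eq:formulasy}, which isolates the two factors $r^{-1/2}$ carrying the order $-1$. Any additional weight $r^s$ required by the definition \eqref{eq:1712022} is then transported through the flanking resolvents $(H-z)^{-1}$ by repeated use of \eqref{eq:1comForm}, exactly as in the proof of Lemma~\ref{lem:171113}(2); each commutation produces bounded factors and at most additional polynomial growth in $\langle z\rangle/|\Im z|$, again absorbed by $\mathrm d\mu_f$.

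The second formula then follows by applying Corollary~\ref{cor:A2} with the roles $T=B$ and $S=f(H)$. The first part furnishes precisely the $\epsilon=0$ case of its hypothesis, namely that $\mathrm i[f(H),B]=\mathrm i[T,S]$ is bounded on $\mathcal H$. Hence for $F\in\mathcal F^{t'}$ with $t'<1$ one obtains
\begin{align*}
\mathrm i[f(H),F(B)]=-\int_{\C}(B-z)^{-1}(\mathrm i[f(H),B])(B-z)^{-1}\,\mathrm d\mu_F(z),
\end{align*}
first as a sesquilinear form on $\mathcal D(F(B))$ and, the right-hand side being bounded on $\mathcal H$, by extension on all of $\mathcal H$. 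The estimate $\mathrm i[f(H),F(B)]=O(r^{-1})$ is obtained by the same bootstrap, now moving $r^s$ past the resolvents $(B-z)^{-1}$ via Lemma~\ref{lem:171113b}(2) and the $B$-analog of \eqref{eq:171202}.

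The main technical point to monitor is the rigorous interpretation of each intermediate identity under Condition~\ref{cond:smooth2wea3n1}, where form-bounded singularities in the pair-potentials forbid naive computations on $C^\infty_{\mathrm c}(\Omega)$. This is handled by tracking the Sobolev scale in the spirit of Lemmas~\ref{lemma:Flow}--\ref{lem:17100620}: the resolvents $(H-z)^{-1}$, $(B-z)^{-1}$ together with the weights $r^s$ and $r^{-1/2}$ all preserve the relevant $\mathcal H^{\pm 1}$-regularity, so every factor in the integrand acts between well-defined Sobolev spaces and the approximation by $C^\infty_{\mathrm c}(\R)$ functions in the spirit of the proof of Corollary~\ref{cor:A2} carries through without further difficulty.
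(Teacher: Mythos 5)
Your argument is correct and follows essentially the same route as the paper: the paper simply notes that \eqref{eq:limit00b} and interpolation give $\mathrm i[H,B]\in\mathcal L(\mathcal H^{3/2},\mathcal H^{-3/2})$ and then invokes Corollary~\ref{cor:A2} twice (first with $T=H$, $S=B$ and $\epsilon=3/2$, then with $T=B$, $S=f(H)$ and $\epsilon=0$), while you unpack the first application into a direct Helffer--Sj\"ostrand computation; the order~$-1$ bootstrap via \eqref{eq:formulasy}, \eqref{eq:1comForm} and Lemma~\ref{lem:171113b} is exactly the paper's. The only point to tighten is your convergence remark: taking $k$ large in $|\bar\partial\tilde f(z)|\le C_k|\Im z|^k\langle z\rangle^{t-k-1}$ only tames the singularity at the real axis, whereas integrability at infinity is what forces $t<-1/2$ (this is precisely the $\epsilon=3/2$ threshold $t<1-\epsilon$ of Corollary~\ref{cor:A2}).
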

\begin{proof}
By \eqref{eq:limit00b} and interpolation
$\i[H,B]\in\vL(\vH^{3/2},\vH^{-3/2})$. This yields the first formula
 by Corollary~\ref{cor:A2}. The second follows from the first and Corollary~\ref{cor:A2}.
 Using the expression for $\bD B$ from Lemma~\ref{lem:17100620} one
 easily checks the last assertions, cf.   the proofs of 
Lemmas~\ref{lem:171113} and \ref{lem:171113b}. 
\end{proof}

\subsection{Second commutator}\label{subsec:17111119}

Here we provide a realization of the 
second commutator $\mathrm i[\mathbf DB,B]$, 
and bound it in some operator space. 
Under {Condition~\ref{cond:smooth2wea3n12}
one can naturally define and interpret this second commutator. 
On the other hand, under Condition~\ref{cond:smooth2wea3n1}} there is a `domain problem',
and this prevents us from directly defining $\mathrm i[\mathbf DB,B]$ as a quadratic form even on $\mathcal H^2$, 
like we first did for $\mathbf DB=\mathrm i[H,B]$.
However, for our application it suffices to consider 
an alternative strong limit of the form \eqref{eq:17100813} below, cf.\ \eqref{eq:171202b}
and \eqref{eq:limit00b}.
We note that only the first commutator does not suffice 
for the LAP bounds either in the abstract Mourre theory, see \cite{ABG2}.
 
Note that by Lemmas~\ref{lemma:Flow} and \ref{lem:17100620} we may consider 
$${(\mathbf DB)\e^{\i tB}-\e^{\i tB}(\mathbf DB)\in 
\mathcal L(\mathcal H^2,\mathcal H^{-2}).}$$

\begin{lemma}\label{lemma:sing}
Suppose 
{Condition~\ref{cond:smooth2wea3n1} or Condition \ref{cond:smooth2wea3n12}}, 
and let $R\ge 1$ be sufficiently large. 
Then there exists the strong limit 
\begin{align} 
\mathrm i[\mathbf DB,B]&:=\slim_{t\to 0}t^{-1}
\bigl((\mathbf DB)\mathrm e^{\mathrm i tB}-\e^{\mathrm i tB}(\mathbf DB)\bigr)
\ \ \text{in }\mathcal L(\mathcal H^2,\mathcal H^{-2}).
\label{eq:17100813}
\end{align} 
Moreover, with the notation  \eqref{eq:1712022}
\begin{align}
(H-\mathrm i)^{-1}\bigl(\mathrm i[\mathbf DB,B]\bigr)(H+\mathrm i)^{-1}
=O(r^{-1-2\kappa}),
\label{eq:1710081345}
\end{align}
where $\kappa=\delta/(1+2\delta)$ as in  \eqref{eq:171114}.
\end{lemma}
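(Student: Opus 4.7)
My plan starts from the explicit factorization $\mathbf DB = r^{-1/2}(L - B^2)r^{-1/2}$ provided by Lemma~\ref{lem:17100620}, with $L$ as in \eqref{eq:mourre comm}. Distributing the commutator with $B$ across this product, I note that $B$ commutes with $B^2$ and that the outer factors contribute only $[r^{-1/2}, B]$, which by Lemma~\ref{lem:171113b} is of order $-3/2$ and produces a harmless remainder of order at most $-2$. The problem therefore reduces to defining and bounding $\mathrm i[L, B]$ as an element of $\mathcal L(\mathcal H^2,\mathcal H^{-2})$ and checking that after the outer sandwich by $(H\pm \mathrm i)^{-1}$ one obtains an operator of order $-1-2\kappa$.

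For the strong-limit part I would imitate the Duhamel argument that produced \eqref{eq:1710061626}: express $t^{-1}((\mathbf DB)\mathrm e^{\mathrm itB} - \mathrm e^{\mathrm itB}(\mathbf DB))$ as an average $t^{-1}\int_0^t \mathrm e^{\mathrm i(t-s)B}(\cdot)\mathrm e^{\mathrm isB}\,\mathrm ds$ of conjugates of a prospective limit and identify that limit term by term. Using Lemma~\ref{lemma:Flow} (and its $\mathcal H^2$-extension noted in Remark~\ref{remark:self-adjo-realss}~\eqref{item:self-adjo-real2} under Condition~\ref{cond:smooth2wea3n12}), the $\mathrm e^{\mathrm itB}$ conjugation is controlled on $\mathcal H^{\pm 1}$ (resp.\ $\mathcal H^{\pm 2}$) uniformly for $t\in[-1,1]$, and strong continuity then delivers \eqref{eq:17100813}. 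The \emph{smooth} terms of $L$ — namely $p\cdot\tilde h p$, $-\tfrac18\Delta^2 r^2$, and $\tfrac14 r^{-1}\omega\cdot h\omega$ — have coefficients satisfying the symbolic bounds \eqref{eq:boun_constr}, so their commutators with $B$ are computed directly by a flow argument modeled on Lemma~\ref{lemma:Flowb} and give operators of order $-2$; these satisfy \eqref{eq:1710081345} trivially.

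The real obstacle is the pair-potential contribution to $L$: $-\tilde\omega^a\cdot\nabla^a V_a^{\mathrm{lr}}$, $(V_a^{\mathrm{sr}}\tilde\omega^a)\cdot\nabla^a - \nabla^a\cdot(V_a^{\mathrm{sr}}\tilde\omega^a)$, and $V_a^{\mathrm{sr}}\mathop{\mathrm{div}}\tilde\omega^a$. Under Condition~\ref{cond:smooth2wea3n12}, $\partial V_a^{\mathrm{lr}}$ and $V_a^{\mathrm{sr}}$ are actual operators on $\mathcal H^2$ and the commutator with $B$ is computed directly, producing an overall decay $\langle x^a\rangle^{-1-2\delta}$ and therefore the sharper exponent $\kappa = \delta$ (as mentioned in the first remark after Theorem~\ref{microLoc}). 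Under the form-bounded Condition~\ref{cond:smooth2wea3n1}, taking one further derivative of $V_a^{\mathrm{lr}}$, or any derivative of $V_a^{\mathrm{sr}}$, is not legitimate. My remedy is a regularization-and-interpolation argument: for a scale parameter $\mu>0$ to be fixed later, split each singular coefficient into a mollified piece $V^{(\mu)}$ and a remainder $V - V^{(\mu)}$. The mollified commutator $[V^{(\mu)}, B]$ is computable by direct differentiation with norm loss $O(\mu^{-1})$, while the remainder is controlled via the form bound \eqref{eq:2k2we3n} — after sandwiching by $(H\pm\mathrm i)^{-1}$ to absorb the $H^1_0(\Omega_a)$-norms — by a gain factor $\mu^{2\delta}$. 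Minimising the total loss $\mu^{-1}+\mu^{2\delta}$ balances at $\mu = 1$ up to an overall $r$-dependent rescaling which, once tracked through the $\langle x^a\rangle^{-1-2\delta}$ weight, produces the effective decay $r^{-1-2\kappa}$ with $\kappa = \delta/(1+2\delta)$, exactly as in \eqref{eq:171114}.

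The main difficulty will be precisely this regularization step under Condition~\ref{cond:smooth2wea3n1}: verifying that the $r$-dependent rescaling of $\mu$ is compatible with the structure of the Graf partition $\{\eta_{1,a}\}$, that the form bounds pass through $(H\pm\mathrm i)^{-1}$ with the anticipated operator-norm dependence on $\mu$, and that the estimates are uniform across all cluster decompositions $a\in\mathcal A$. Everything else — the $B^2$-contribution, the smooth terms, and the existence of the strong limit \eqref{eq:17100813} — then assembles by combining the bounds as in the proof of Lemma~\ref{lem:17100620}.
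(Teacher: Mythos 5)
Your overall architecture is sound and matches the paper in its uncontroversial parts: the decomposition of $\mathbf DB$ into a smooth kinetic piece plus potential pieces (the paper uses \eqref{eq:17112}, essentially your factorization through Lemma~\ref{lem:17100620}), the Duhamel/flow argument for the existence of the strong limit \eqref{eq:17100813}, and the order~$-2$ bound for the smooth terms via \eqref{eq:boun_constr}. You have also correctly identified where the difficulty sits. But the remedy you propose for the potential terms under Condition~\ref{cond:smooth2wea3n1} does not work. A mollification--interpolation argument requires the remainder $V-V^{(\mu)}$ to be small in some norm as $\mu\to 0$, and that smallness can only come from a modulus of continuity of $V$. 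Condition~\ref{cond:smooth2wea3n1} provides none: $V_a^{\mathrm{sr}}$ is merely a measurable function with the weighted mapping property \eqref{eq:2k2we3n}, which encodes \emph{spatial decay} in $\langle x^a\rangle$, not regularity. For such a coefficient $\|V_a^{\mathrm{sr}}-V_a^{\mathrm{sr}}*\varphi_\mu\|$ does not tend to zero in any relevant operator norm, so the claimed gain $\mu^{2\delta}$ for the unmollified remainder has no source, and the ``balance'' $\mu^{-1}+\mu^{2\delta}$ (which, as written, is minimized at $\mu=\infty$, not $\mu=1$) cannot produce the exponent $\kappa=\delta/(1+2\delta)$.

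The actual mechanism, and the one you should use, is geometric rather than analytic: the dangerous derivative is only the one in the $x^a$-direction, and $V_a$ does not depend on the inter-cluster variables. The paper decomposes $B=\sum_b\tilde B_b$ using the Graf partition rescaled \emph{anisotropically in $r$}, $\tilde\eta_b(x)=\eta_{1,b}\bigl(x/(Rr^{1/(1+2\delta)})\bigr)$, and then splits each $\tilde B_b$ into internal and inter-cluster parts. Three effects are played off against each other: (i) if $a\not\subset b$ then $|x^a|\gtrsim r^{1/(1+2\delta)}$ on $\mathop{\mathrm{supp}}\tilde\eta_b$ by \eqref{eq:171018}, so the factor $\omega^a=r^{-1}\tilde\omega^a$ together with the weight from \eqref{eq:2k2we3n} yields $r^{-1}\langle x^a\rangle^{-2\delta}\lesssim r^{-1-2\kappa}$; (ii) if $a\subset b$ then $|x^b|\lesssim r^{1/(1+2\delta)}$ there, whence $|\omega^b|\lesssim r^{-2\delta/(1+2\delta)}=r^{-2\kappa}$ by \eqref{eq:boun_constr}, which controls the internal part $(\tilde B_b)^b$ without ever differentiating the potential; and (iii) the inter-cluster part $(\tilde B_b)_b$ generates (via Lemma~\ref{lemma:Flowb}) a flow along $\mathbf X_b$ on which $V_a$ is constant, so its commutator with $\mathcal V_a$ only involves derivatives of the vector field, again of order $r^{-1-2\kappa}$. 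The exponent $\kappa$ is the result of optimizing the localization scale $r^{1/(1+2\delta)}$ between (i) and (ii); no regularization of the potential enters. A further point your sketch glosses over: even to \emph{define} $\mathrm i[\mathcal V_a,B]$ one must commute through the resolvents $(H\pm\mathrm i)^{-1}$ as in \eqref{eq:17110217}, since $\mathcal V_a$ is only a form on $\mathcal H^1$; this bookkeeping is an essential part of the argument, not an afterthought.
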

\begin{proof} 
First we consider Condition~\ref{cond:smooth2wea3n1}.
By  \eqref{eq:formulasy} and \eqref{eq:mourre comm} 
\begin{align}
  \begin{split}
\mathbf DB&=\mathcal L
+\sum_{a\in\mathcal A}\mathcal V_a;\\
\mathcal L
&=r^{-1/2}\bigl(
p\cdot \tilde h p
-\tfrac18(\Delta^2 r^2)
+\tfrac14r^{-1} \omega\cdot h\omega
-B^2\bigr)r^{-1/2},
\\
\mathcal V_a
&=-\omega^a\cdot \bigl(\nabla^a V_a^{\mathrm{lr}}\bigr)
+V_a^{\mathrm{sr}}\mathop{\mathrm{div}}\omega^a -2\mathop{\mathrm{Im}}\bigl((V_a^{\mathrm{sr}}\omega^a)\cdot p^a\bigr). 
  \end{split}
\label{eq:17112}
\end{align}
We consider the contribution from each of these terms.

The contribution from $\mathcal L$ is straightforward.
Similarly to the proof of Lemma~\ref{lem:17100620}
we can prove the existence of the strong limit 
\begin{align*}
\mathrm i[\mathcal L,B]&:=\slim_{t\to 0}t^{-1}
\bigl(\mathcal L\e^{\i tB}-\e^{\i tB}\mathcal L\bigr)
\ \ \text{in }\mathcal L(\mathcal H^1,\mathcal H^{-1})
\end{align*}
by extension from $C^\infty_{\mathrm c}(\Omega)$, cf.\ \eqref{eq:1710061626} and \eqref{eq:1710061628}. 
In fact the expression $\vL$ simplifies as $\mathcal L=p\cdot h
p-\tfrac14(\Delta^2 r)$, cf. the familiar formula (for any $v\in\mathfrak X(\Omega)$)
\begin{align}\label{eq:comF}
  \i[p^2, v\cdot p+ p\cdot v]=2p(v'+v'^t)p-(\Delta \mathop{\mathrm{div}}v).
\end{align}
Using this representation we compute 
\begin{align*}
\mathrm i[\mathcal L,B]
&=
p\cdot \Bigl(2 h^2 -\bigl((\omega\cdot \nabla) h\bigr) \Bigr)p
-\tfrac12(\nabla\cdot h \nabla\Delta r)
+\tfrac14\bigl(\omega\cdot \nabla(\Delta^2 r)\bigr).
\end{align*}
 The  contribution from the right-hand side   is easily checked 
 using  \eqref{eq:boun_constr} to agree with  \eqref{eq:1710081345} (the contribution  is an operator
 of order $-2$).

Next  we consider the contribution from $\mathcal V_a$. 
Since we can write 
\begin{align}
\begin{split}
&(H-\mathrm i)^{-1}\bigl(\mathcal V_a\e^{\i tB}-\e^{\i tB}\mathcal V_a\bigr)(H+\mathrm i)^{-1}
\\&
=
(H-\mathrm i)^{-1}\mathcal V_a(H+\mathrm i)^{-1}\e^{\i tB}
-\e^{\i tB}(H-\mathrm i)^{-1}\mathcal V_a(H+\mathrm i)^{-1}
\\&\phantom{{}={}}{}
+(H-\mathrm i)^{-1}\mathcal V_a(H+\mathrm i)^{-1}\bigl(H\e^{\i tB}-\e^{\i tB}H\bigr)(H+\mathrm i)^{-1}
\\&\phantom{{}={}}{}
+(H-\mathrm i)^{-1}\bigl(H\e^{\i tB}-\e^{\i tB}H\bigr)(H-\mathrm i)^{-1}\mathcal V_a(H+\mathrm i)^{-1},
\end{split}
\label{eq:17110217}
\end{align}
there exists the strong limit 
\begin{align*}
\mathrm i[\mathcal V_a,B]
:=\slim_{t\to 0}
t^{-1}\bigl(\mathcal V_a\e^{\i tB}-\e^{\i tB}\mathcal V_a\bigr)
\ \ \text{in }
\mathcal L(\mathcal H^2,\mathcal H^{-2}),
\end{align*}
which has the  expression, with appropriate weights from both sides,
\begin{align}
\begin{split}
&
(H-\mathrm i)^{-1}\mathrm i[\mathcal V_a,B](H+\mathrm i)^{-1}
\\&
=
-2\mathop{\mathrm{Im}}\bigl((H-\mathrm i)^{-1}\mathcal V_a(H+\mathrm i)^{-1}B\bigr)
\\&\phantom{{}={}}{}
+2\mathop{\mathrm{Re}}\bigl((H-\mathrm i)^{-1}\mathcal V_a(H+\mathrm i)^{-1}(\mathbf DB)(H+\mathrm i)^{-1}\bigr)
.
\end{split}
\label{eq:1710081346}
\end{align}
Using  the expression \eqref{eq:17112}
it follows that the last term on the right-hand side of \eqref{eq:1710081346} agrees
with \eqref{eq:1710081345}, so it only remains 
 to examine the first term.
We set 
\begin{align*}
\tilde \eta_b(x)=\eta_b\bigl(x/r^{1/(1+2\delta)}\bigr)=\eta_{1,b}\bigl(x/(Rr^{1/(1+2\delta)})\bigr).
\end{align*}
and decompose 
\begin{align*}
B=\sum_{b\in\mathcal A}\tilde B_b;\quad 
\tilde B_b=\tfrac12\bigl((\tilde\eta_b \omega)\cdot p+p\cdot (\tilde\eta_b \omega)\bigr).
\end{align*}
Then the investigation of the first term of \eqref{eq:1710081346} reduces to that of 
\begin{align}
\begin{split}
2\mathop{\mathrm{Im}}\bigl((H-\mathrm i)^{-1}\mathcal V_a(H+\mathrm i)^{-1}\tilde B_b\bigr)
\ \ \text{for }b\in\mathcal A.
\end{split}
\label{17102010}
\end{align}

We first consider the case  $a\not\subset b$. 
Then by \eqref{eq:171018} we have 
\begin{align*}
|x^a|\ge c_1Rr^{1/(1+2\delta)}\ \ \text{on }\mathop{\mathrm{supp}}\tilde\eta_b.
\end{align*} 
This combined with \eqref{eq:2k2we3n} implies that for  
 $a\not\subset b$ 
the contribution  \eqref{17102010} agrees with \eqref{eq:1710081345}.

It remains to consider the case   $a\subset b$.
 We further decompose 
\begin{align*}
\tilde B_b&=(\tilde B_b)^b+(\tilde B_b)_b
\end{align*}
with
\begin{align*}
(\tilde B_b)^b&=\tfrac12\bigl((\tilde\eta_b \omega^b)\cdot p^b+p^b\cdot (\tilde\eta_b \omega^b)\bigr),
\quad 
(\tilde B_b)_b=\tfrac12\bigl((\tilde\eta_b \omega_b)\cdot p_b+p_b\cdot (\tilde\eta_b \omega_b)\bigr).
\end{align*}
 Accordingly  \eqref{17102010} decomposes as 
\begin{align}
\begin{split}
2\mathop{\mathrm{Im}}\bigl((H-\mathrm i)^{-1}\mathcal V_a(H+\mathrm i)^{-1}\tilde B_b\bigr)
&=
2\mathop{\mathrm{Im}}\bigl((H-\mathrm i)^{-1}\mathcal V_a(H+\mathrm i)^{-1}(\tilde B_b)^b\bigr)
\\&\phantom{{}={}}{}
+
2\mathop{\mathrm{Im}}\bigl((H-\mathrm i)^{-1}\mathcal V_a(H+\mathrm i)^{-1}(\tilde B_b)_b\bigr)
.
\end{split}
\label{eq:17102123}
\end{align}
To bound the first term of \eqref{eq:17102123} note that \eqref{eq:171018} implies that 
\begin{align*}
|x^b|\le C_1Rr^{1/(1+2\delta)}\ \ \text{on }\mathop{\mathrm{supp}}\tilde\eta_b,
\end{align*}
so that, combined with \eqref{eq:boun_constr}, 
\begin{align}
|\omega^b|\le C_2Rr^{-2\delta/(1+2\delta)}\ \ \text{on }\mathop{\mathrm{supp}}\tilde\eta_b.
\label{eq:1711021813}
\end{align}
Hence the first term of \eqref{eq:17102123} certainly agrees with \eqref{eq:1710081345}.
As for the second term of \eqref{eq:17102123}, 
we first note that the vector field $v_b=\tilde\eta_b \omega_b\in\mathfrak X(\Omega)$ is complete on $\Omega$. 
To see this  it suffices (since it  is bounded) to show that $v_b$ is
tangent to $\partial(\Omega_c+\mathbf X_c)$ for all $c\in\mathcal A$:
If $c\not\subset b$, we have 
$$|x^c|\ge c_2Rr^{1/(1+2\delta)}\ \ \text{on }\mathop{\mathrm{supp}}\tilde\eta_b,$$
and hence by boundedness of $\partial\Omega_c\subset \mathbf X^c$
$$v_b=0\ \ \text{on }\partial(\Omega_c+\mathbf X_c)$$ 
for large $R\ge 1$.
If $c\subset b$ 
$$v_b\in \mathbf X_b\subset \mathbf X_c,$$
also    implying that  $v_b$ is tangent to $\partial(\Omega_c+\mathbf X_c)$.
Now  $v=v_b$ is complete on $\Omega$, 
and indeed it  satisfies the assumptions of Lemma~\ref{lemma:Flowb}. 
By using the lemma  we can  move   the operator $(\tilde B_b)_b$
to the center in the second term of \eqref{eq:17102123}.
We calculate similarly to \eqref{eq:17110217} 
\begin{align}
\begin{split}
&2\mathop{\mathrm{Im}}
\bigl((H-\mathrm i)^{-1}\mathcal V_a(H+\mathrm i)^{-1}(\tilde B_b)_b\bigr)
\\&
=\slim_{t\to 0}t^{-1}
(H-\mathrm i)^{-1}\Bigl(
-\bigl(
\mathcal V_a\mathrm e^{\mathrm it(\tilde B_b)_b}
-
\mathrm e^{\mathrm it(\tilde B_b)_b}\mathcal V_a
\bigr)
\\&\phantom{{}={}\slim_{t\to 0}t^{-1}(H-\mathrm i)^{-1}\bigl[}{}
+\bigl(H\mathrm e^{\mathrm it(\tilde B_b)_b}-\mathrm e^{\mathrm it(\tilde B_b)_b}H\bigr)
(H-\mathrm i)^{-1}\mathcal V_a
\\&\phantom{{}={}\slim_{t\to 0}t^{-1}(H-\mathrm i)^{-1}\bigl[}{}
+\mathcal V_a(H+\mathrm i)^{-1}
\bigl(H\mathrm e^{\mathrm it(\tilde B_b)_b}-\mathrm e^{\mathrm it(\tilde B_b)_b}H\bigr)
\Bigr)(H+\mathrm i)^{-1}
.
\end{split}
\label{eq:17110218}
\end{align}
Noting that $(\nabla_bV_a^{\mathrm{sr}})=0$ due to the property $a\subset b$, 
we can mimic   the proof of Lemma~\ref{lem:17100620} and calculate
 the strong limit 
\begin{align*}
\mathrm i\bigl[\mathcal V_a,(\tilde B_b)_b\bigr]
&:=\slim_{t\to 0}t^{-1}
\bigl(\mathcal V_a\mathrm e^{\mathrm it(\tilde B_b)_b}
-\mathrm e^{\mathrm it(\tilde B_b)_b}\mathcal V_a\bigr)
\ \ \text{in }\mathcal L(\mathcal H^1,\mathcal H^{-1})
\end{align*}
as the  explicit  expression (with $v_b=\tilde{\eta}_b\omega_b$)
\begin{align*}
\begin{split}
\mathrm i\bigl[&\mathcal V_a,(\tilde B_b)_b\bigr]
=\bigl((v_b\cdot \nabla_b)\omega^a\bigr)\cdot \bigl(\nabla^a V_a^{\mathrm{lr}}\bigr)
-\bigl(v_b\cdot \nabla_b
\mathop{\mathrm{div}}\omega^a\bigr)V_a^{\mathrm{sr}}\\
&+2\mathop{\mathrm{Im}}\bigl(\bigl(
V_a^{\mathrm{sr}}(v_b\cdot \nabla_b)\omega^a\bigr)\cdot
p^a\bigr)-2\mathop{\mathrm{Im}}\bigl(\bigl((
V_a^{\mathrm{sr}}\omega^a\cdot \nabla)v_b\bigr)\cdot
p_b\bigr)+( V_a^{\mathrm{sr}}\omega^a\cdot
\nabla)\mathop{\mathrm{div}}v_b\\&=:T_1+\cdots+T_5.
\end{split}
\end{align*}
To see that the contribution from the first term in the big  brackets of \eqref{eq:17110218}
 agrees
with   \eqref{eq:1710081345} we plug in $T_1+\cdots+T_5$ and bound separately. To treat
$T_1, T_2$ and $T_3$ we use 
\begin{align}
\omega_b\cdot \nabla_b=r^{-1}x\cdot\nabla+r^{-1}(\tilde\omega-x)\cdot\nabla-\omega^b\cdot\nabla^b,
\label{eq:1711021812}
\end{align}
\eqref{eq:boun_constr} and \eqref{eq:1711021813} to obtain the
$r^{-1-2\kappa}$ decay. The terms $ T_4$ and $ T_5$ contribute by
terms with this
decay too thanks to the bounds 
\begin{align}\label{eq:bndvbeta}
  \partial
^\beta v_b=O(r^{-1/(1+2\delta)})=O(r^{-2\kappa
 });\quad |\beta|\geq 1.
\end{align}

We can compute the expressions 
for the second and third terms in the big  brackets of
\eqref{eq:17110218}  as in 
\eqref{eq:17110217} and \eqref{eq:1710081346}. Using Lemma
\ref{lemma:Flowb} with $v=\tilde{\eta}_b\omega_b$, \eqref{eq:comF},
 \eqref{eq:bndvbeta} and the
proof of Lemma~\ref{lem:17100620} we easily check that these 
 contributions also agree with  \eqref{eq:1710081345}.
Hence we are done with the proof under 
  Condition~\ref{cond:smooth2wea3n1}.

The proof under  Condition~\ref{cond:smooth2wea3n12} is  simpler.
In fact 
we may then consider $\mathrm i[\mathbf DB,B]$ naturally extended 
from $C^\infty_{\mathrm c}(\mathbf X)$, and 
the extension coincides with \eqref{eq:17100813},
cf. Remarks \label{sec:second-commutator}
\ref{remark:self-adjo-realss} \eqref{item:self-adjo-real2}  and the proof of
Lemma~\ref{lem:17100620}. Again we use the decomposition
\eqref{eq:17112}. However now we can treat the
contribution from $\mathcal V_a$  more directly by using the more 
freedom of distributing factors of momenta (avoiding the
previous somewhat technical `commuting back and forth argument'). Noting that 
\begin{align}
B=B^a+B_a;\quad 
B^a=\tfrac12(\omega^a\cdot p^a+p^a\cdot \omega^a)
,\ \ 
B_a=\tfrac12(\omega_a\cdot p_a+p_a\cdot \omega_a),
\label{eq:1802165}
\end{align}  we decompose
\begin{align*}
  \mathrm i[\mathcal V_a,B]=\mathrm i[\mathcal V_a,B^a]+\mathrm i[\mathcal V_a,B_a].
\end{align*} By not doing the commutation the contribution from the first term is  seen to
be of the form $O(r^{-1-2\delta})$.
 Next, for  the second term, we compute as above 
\begin{align*}
\begin{split}
\mathrm i\bigl[&\mathcal V_a, B_a\bigr]
=\bigl((\omega_a\cdot \nabla_a)\omega^a\bigr)\cdot \bigl(\nabla^a V_a^{\mathrm{lr}}\bigr)
-\bigl(\omega_a\cdot \nabla_a
\mathop{\mathrm{div}}\omega^a\bigr)V_a^{\mathrm{sr}}\\& +
2\mathop{\mathrm{Im}}\bigl(\bigl(
V_a^{\mathrm{sr}}(\omega_a\cdot \nabla_a)\omega^a\bigr)\cdot
p^a\bigr)-2\mathop{\mathrm{Im}}\bigl(\bigl((
V_a^{\mathrm{sr}}\omega^a\cdot \nabla)\omega_a\bigr)\cdot
p_a\bigr)
+( V_a^{\mathrm{sr}}\omega^a\cdot
\nabla)\mathop{\mathrm{div}}\omega_a.
\end{split}
\end{align*} For the first three terms (containing $\omega_a\cdot \nabla_a$) we substitute 
\eqref{eq:1711021812} (now with $b=a$), bound separately  and then conclude that the contribution from
$\mathrm i\bigl[\mathcal V_a, B_a\bigr]$  is of the form $O(r^{-2})$. In conclusion we obtain that the
contribution from $\mathrm i[\mathcal V_a,B]$ to \eqref{eq:1710081345}
under
Condition~\ref{cond:smooth2wea3n12}
is of the form $O(r^{-1-2\delta})=O(r^{-1-2\kappa})$.
\end{proof}

Finally we consider as a continuation of Lemma \ref{lem:17111416} the second commutator of a function of $H$ and $B$.

\begin{lemma}\label{lem:17111515}
Suppose 
{Condition~\ref{cond:smooth2wea3n1} or Condition \ref{cond:smooth2wea3n12}},
and let $R\ge 1$ be sufficiently large.
For any $f\in \mathcal F^t$ with $t<-1$
the second commutator $\mathrm i\bigl[\mathrm i[f(H),B],B\bigr]$
extends to  a bounded sesquilinear  form on $\mathcal H$ from $\mathcal D(B)$, 
and it has the  expression 
\begin{align*}
\mathrm i\bigl[\mathrm i[f(H),B],B\bigr]
&
=-\int_{\C} (H-z)^{-1}\bigl(\mathrm i[\bD B,B]\bigr)(H-z)^{-1}\d \mu_f(z)
\\&\phantom{{}={}}{}
+2\int_{\C} (H-z)^{-1}(\bD B)(H-z)^{-1} (\bD B)(H-z)^{-1}\d \mu_f(z).
\end{align*} 
Moreover, with  the notation  \eqref{eq:1712022}
\begin{align}\label{eq:order}
  \mathrm i\bigl[\mathrm i[f(H),B],B\bigr] =O(r^{-1-2\kappa}).
\end{align}
\end{lemma}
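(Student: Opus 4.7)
The plan is to differentiate the integral representation of $\mathrm i[f(H),B]$ from Lemma~\ref{lem:17111416} once more by commuting with $B$. Since $t<-1$ implies $t<-1/2$, Lemma~\ref{lem:17111416} gives
\begin{align*}
\mathrm i[f(H),B]=-\int_{\C}(H-z)^{-1}(\bD B)(H-z)^{-1}\,\d\mu_f(z).
\end{align*}
I would then formally apply the Leibniz rule for commutators under the integral,
\begin{align*}
&\mathrm i\bigl[(H-z)^{-1}(\bD B)(H-z)^{-1},B\bigr]\\
&=\mathrm i[(H-z)^{-1},B](\bD B)(H-z)^{-1}+(H-z)^{-1}\mathrm i[\bD B,B](H-z)^{-1}\\
&\phantom{{}={}}+(H-z)^{-1}(\bD B)\mathrm i[(H-z)^{-1},B],
\end{align*}
and substitute the resolvent identity $\mathrm i[(H-z)^{-1},B]=-(H-z)^{-1}(\bD B)(H-z)^{-1}$, which follows from Lemma~\ref{lem:17100620} together with Lemma~\ref{lemma:Flow}. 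The two outer terms coincide and combine to produce the factor $2$ in the claimed expression.

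To justify the manipulation rigorously on $\mathcal D(B)\supset\vH^1$, I would follow the template of the proof of Lemma~\ref{lem:17100620}: realize $\mathrm i[\,\cdot\,,B]$ as a strong limit $\slim_{t\to 0}t^{-1}(\e^{\i tB}\cdot-\cdot\,\e^{\i tB})$ in an appropriate operator space, and exchange this limit with the $z$-integral by dominated convergence. The uniform bounds of Lemma~\ref{lemma:Flow} on $\e^{\i tB}\in\vL(\vH^{\pm 1})$, together with the rapid decay $|\d\mu_f(z)|\le C_k|\mathop{\mathrm{Im}}z|^k\langle z\rangle^{t-k-1}$ for arbitrary $k\in\N_0$, the resolvent bound $\|(H-z)^{-1}\|\le|\mathop{\mathrm{Im}}z|^{-1}$, and the inclusion $\bD B\in\vL(\vH^2,\vH^{-2})$ (with the $\vH^{3/2}$-interpolation used in the proof of Lemma~\ref{lem:17111416} under Condition~\ref{cond:smooth2wea3n1}), guarantee absolute convergence of the three-resolvent integral already for $t<-1$. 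Once the integrand is identified as an $\vL(\vH)$-valued expression, the resulting sesquilinear form on $\mathcal D(B)$ extends by continuity to a bounded form on $\vH$.

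For the order estimate \eqref{eq:order} I would treat the two integrals separately. In the first, I would factor
\begin{align*}
(H-z)^{-1}\bigl(\mathrm i[\bD B,B]\bigr)(H-z)^{-1}=(H-z)^{-1}(H-\i)\,\mathcal K\,(H+\i)(H-z)^{-1}
\end{align*}
with $\mathcal K=(H-\i)^{-1}(\mathrm i[\bD B,B])(H+\i)^{-1}=O(r^{-1-2\kappa})$ by Lemma~\ref{lemma:sing}; the flanking factors produce a harmless factor $C\langle z\rangle^2/|\mathop{\mathrm{Im}}z|^2$ which is absorbed by $\d\mu_f$ for $t<-1$, so that after integration and propagation of the weights $r^s$ through the outer resolvents (as in the proof of Lemma~\ref{lem:171113} via \eqref{eq:fComR}) we obtain $O(r^{-1-2\kappa})$. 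For the second integral, each of the two sandwiches $(H-z)^{-1}(\bD B)(H-z)^{-1}$ is $O(r^{-1})$ up to integrable $z$-dependence by the proof of Lemma~\ref{lem:17111416}, so their product is $O(r^{-2})\subset O(r^{-1-2\kappa})$ since $\kappa\le 1/4$. The principal technical obstacle is to transport weights $r^s$ through the outer resolvents while preserving $z$-integrability, which I would handle with the same commutator expansion as in \eqref{eq:fComR}, combined with Lemmas~\ref{lemma:Flow} and \ref{lemma:sing}.
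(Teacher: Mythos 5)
Your proposal follows essentially the same route as the paper: realize the outer commutator as a strong limit, apply the Leibniz rule under the Helffer--Sj\"ostrand integral of Lemma~\ref{lem:17111416} using $\mathrm i[(H-z)^{-1},B]=-(H-z)^{-1}(\bD B)(H-z)^{-1}$ to produce the two stated terms, and then deduce \eqref{eq:order} from \eqref{eq:1710081345} and the expression for $\bD B$ in Lemma~\ref{lem:17100620}. The only blemish is the sign convention in your regularization $\slim_{t\to 0}t^{-1}(\e^{\i tB}\,\cdot-\cdot\,\e^{\i tB})$, which is the negative of the paper's convention \eqref{eq:17100813}; your algebra and final formula are nevertheless correct.
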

\begin{proof}
By Lemmas~\ref{lem:17100620}, \ref{lem:17111416} and \ref{lemma:sing} we  calculate, 
as a sesquilinear  form on $\mathcal D(B)$, 
\begin{align*}
\mathrm i\bigl[\mathrm i[f(H),B],B\bigr]
&
=
\slim_{t\to 0}t^{-1}\bigl(\mathrm i[f(H),B]\e^{\i tB}-\e^{\i tB}\mathrm i[f(H),B]\bigr)
\\&
=
-\int_{\C} (H-z)^{-1}\bigl(\mathrm i[\bD B,B]\bigr)(H-z)^{-1}\d \mu_f(z)
\\&\phantom{{}={}}{}
+2\int_{\C} (H-z)^{-1}(\bD B)(H-z)^{-1} (\bD B)(H-z)^{-1}\d \mu_f(z).
\end{align*} 
By using the expression for $\bD B$ from Lemma~\ref{lem:17100620} and
\eqref{eq:1710081345} we obtain the boundedness and \eqref{eq:order} from the above  representation.
\end{proof}

\section{Proof of Rellich type  theorems}\label{sec:Proof}

In this section we  prove  Theorem~\ref{thm:priori-decay-b_0} under
Condition~\ref{cond:smooth2wea3n1} or Condition \ref{cond:smooth2wea3n12}, and
we prove Theorem \ref{thm:priori-decay-b_0b} 
under 
Condition~\ref{cond:smooth2wea3n1}.
 The proofs are given in Sections~\ref{subsec:17121421} and 
\ref{subsec:17121422}, respectively. 
The idea of proof comes from a combination of \cite{IS1}, \cite{IS2}, \cite{IS3} and \cite{FH}.
Note that in \cite{IS1} the second commutator $\mathrm i[\mathrm
i[H,B],B]$ was not needed,
but for Theorem~\ref{thm:priori-decay-b_0}  it is.
We also note that 
our arguments get  simpler if one considers only  
polynomial decay estimates at  non-threshold  energies $E$, see \cite{IS2}.

\subsection{Exponential decay estimates}\label{subsec:17121421}
 Throughout this subsection we impose Conditions~\ref{cond:smooth2wea3n1} or \ref{cond:smooth2wea3n12}.
  We introduce the regularized weights
\begin{align}
\Theta= \Theta_{m,n,R,\nu}^{\alpha,\beta}
=\chi_{m,n}\mathrm e^{\theta};\quad 
n,m\in \N_0, \ n>m,\ R\ge 1
\label{eq:15.2.15.5.8bb}
\end{align}
with exponents $\theta$ given by 
\begin{align*}
\theta=\theta_{R,\nu}^{\alpha,\beta}
=\alpha r+\beta\int_0^{r}(1+s/2^\nu)^{-1-2\kappa}\,\mathrm ds;\quad
\alpha,\beta\ge 0,\ \nu \in \N_0.
\end{align*}
Here $r=r_R$ indeed depends on $R\ge 1$, and 
$\kappa\in (0,1/4]$ is from \eqref{eq:171114}.
We are going to investigate the Heisenberg derivative 
of the `propagation observable' $P$ defined as 
\begin{align}
P=P^{\alpha,\beta,f}_{m,n,R,\nu,\epsilon}
=\Theta f(H) \zeta(B)f(H)\Theta
\in\mathcal L(\mathcal H);
\quad f\in C^\infty_{\mathrm c}(\mathbb R),\ \epsilon\in(0,1),
\label{eq:17092721b}
\end{align} 
where $\zeta=\zeta_{\epsilon}\in C^\infty(\mathbb R)$ is the smooth sign function 
from Section~\ref{subsubsec:Smooth sign function}. 

In the following we denote the derivatives of $\Theta$ and $\theta$ in $r$ by primes, e.g., 
\begin{align}
\begin{split}
\theta'=\alpha+\beta\theta_0^{-1-2\kappa},\ \ 
\theta''=-\beta \bigl(1+2\kappa\bigr)2^{-\nu}\theta_0^{-2-2\kappa}
;\quad
\theta_0=1+r/2^\nu.
\end{split}\label{eq:12.5.1.19.56}
\end{align}
In particular, noting that $2^{-\nu}\theta_0^{-1}\leq r^{-1}$, we have
\begin{align*}
|\theta^{(k)}|\leq C_k \beta r^{1-k}\theta_0^{-1-2\kappa}
\ \ \text{for }
k=2,3,\dots.
\end{align*}

\begin{lemma}\label{lem:14.10.4.1.17ffaabb}
Suppose Condition~\ref{cond:smooth2wea3n1} or Condition 
\ref{cond:smooth2wea3n12}. Let $E\in\mathbb R$ and $\alpha_0\ge 0$ satisfy $\lambda:=E+\alpha_0^2/2\not\in\mathcal T(H)$.
Then there exist $c,C>0$, $n_0\in \N$, $R\ge 1$, $\alpha_1\in
\{0\}\cup (0,\alpha_0)$, $\beta,\epsilon\in(0,1)$ and 
 real-valued $f\in C^\infty_{\mathrm c}(\mathbb R)$,
such that 
 for all   $n>m\ge 2n_0$, $\nu\ge 2n_0$
and $\alpha\in[\alpha_1,\alpha_0]$ 
\begin{align}
\begin{split}
2\mathop{\mathrm{Im}}\big(P(H-E)\bigr)
&\ge 
cr^{-1}\theta_0^{-2\kappa}\Theta^2
-C\bigl(\chi_{m-1,m+1}^2+\chi_{n-1,n+1}^2\bigr)r^{-1}\mathrm e^{2\theta}
\\&\phantom{{}={}}{}
-\mathop{\mathrm{Re}}\big(\Theta Q_1\Theta(H-E)\bigr)-\mathop{\mathrm{Re}}\big(\Theta \theta_0^{-\kappa}Q_2\theta_0^{-\kappa}\Theta(H-E)\bigr);
\end{split}
\label{eq:17092028b}
\end{align} here $Q_1,Q_2\in\mathcal
L(L^2_{-1/2},L^2_{1/2})$ are  symmetric,  (possibly)  depending on $\alpha$ and the
  other parameters  except though for $n,m$ and $\nu$, and the
  estimate \eqref{eq:17092028b} is understood as a quadratic form on
  $\mathcal H^2$. 
  \begin{remark*} We have stated more properties of $Q_1$ and $Q_2$
    than needed, for example boundedness on $L^2$ suffices and the
    independence of $n,m$ and $\nu$ is irrelevant.  
\end{remark*}
\end{lemma}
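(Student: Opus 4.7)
The plan is to compute the Heisenberg derivative $2\operatorname{Im}(P(H-E))=\mathrm i[H,P]$ as a quadratic form on $\mathcal H^2$ and extract a positive main term of the stated order via a weighted Mourre-type commutator argument in the spirit of Froese--Herbst, organized through the operator $B$. Using $\mathrm i[H,f(H)]=0$ I would split
\begin{align*}
\mathrm i[H,P]=2\operatorname{Re}\bigl(\mathrm i[H,\Theta]\,f(H)\zeta(B)f(H)\,\Theta\bigr)+\Theta\,f(H)\,\mathrm i[H,\zeta(B)]\,f(H)\,\Theta,
\end{align*}
and analyze each summand using the machinery of Sections~\ref{subsec:17111117}--\ref{subsec:17111119}.

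For the first summand, since $\Theta=\chi_{m,n}\mathrm e^{\theta}$ is a radial function, a computation along the lines of Lemma~\ref{lem:171113}\eqref{item:180128} gives to leading order $\mathrm i[H,\Theta]=\tfrac12(\Theta'\omega\cdot p+p\cdot\omega\,\Theta')+O(\beta r^{-1}\theta_0^{-1-2\kappa}\Theta)$, with $\Theta'=(\theta'\chi_{m,n}+\chi'_{m,n})\mathrm e^{\theta}$. The $\chi'_{m,n}$-piece, supported on $\chi_{m-1,m+1}\cup\chi_{n-1,n+1}$, feeds directly into the explicit cutoff error of \eqref{eq:17092028b}, while the $\theta'\Theta$-piece (with $\theta'=\alpha+\beta\theta_0^{-1-2\kappa}$) furnishes the main exponential contribution, effectively a symmetric product of $\theta'\Theta$ and $2B$. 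For the second summand, Corollary~\ref{cor:A2} expresses $\mathrm i[H,\zeta(B)]=\zeta'(B)\,\bD B+\mathcal R$, where the remainder $\mathcal R$ is a double resolvent integral in the variable of $B$ involving the second commutator; by Lemma~\ref{lemma:sing} its sandwiched version is of order $r^{-1-2\kappa}$. Applying Corollary~\ref{cor:171007} at $\lambda=E+\alpha_0^2/2\notin\mathcal T(H)$, with a real-valued $f\in C^\infty_{\mathrm c}(\mathcal U)$ for $\mathcal U$ a small neighbourhood of $\lambda$, yields
\begin{align*}
f(H)\zeta'(B)(\bD B)f(H)\geq f(H)\zeta'(B)r^{-1/2}(\sigma^2-B^2)r^{-1/2}f(H)-Cr^{-2}
\end{align*}
for some $\sigma^2\in(\alpha_0^2,2d(\lambda))$.

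I would then combine the two main contributions. Writing $\sigma^2-B^2=(\alpha-B)(\alpha+B)+(\sigma^2-\alpha^2)$ and applying Lemma~\ref{lem:170928} with the shifted variable $b=B-\alpha$ and constant $c\propto\sigma^2-\alpha^2>0$ converts the sum $2\theta'\Theta B\zeta(B)+\zeta'(B)(\sigma^2-B^2)r^{-1}$ into the net lower bound $\geq c'r^{-1}\theta_0^{-2\kappa}\Theta^2$; this works uniformly in $\alpha\in[\alpha_1,\alpha_0]$ provided $\alpha_1$ is chosen so that $E+\alpha^2/2$ stays in $\mathcal U$ throughout, which is possible since $\mathcal T(H)$ is closed: either $\alpha_1=0$ when $E\notin\mathcal T(H)$, or $\alpha_1>0$ is taken close to $\alpha_0$. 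Every reorganization step produces a commutator $[\Theta,f(H)\zeta(B)f(H)]$; by Lemmas~\ref{lem:171113}\eqref{item:180128}, \ref{lem:17111416} and \ref{lem:17111515} such commutators decompose into a piece belonging to the desired $r^{-1}\theta_0^{-2\kappa}\Theta^2$-class plus a symmetric operator carrying a leftover factor of $(H-E)$ on the right, which are collected into the two error terms $\operatorname{Re}(\Theta Q_1\Theta(H-E))$ and $\operatorname{Re}(\Theta\theta_0^{-\kappa}Q_2\theta_0^{-\kappa}\Theta(H-E))$.

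The main obstacle is the careful bookkeeping of these commutator errors across the non-commuting pair $\Theta$ and $f(H)\zeta(B)f(H)$: certain terms admit only the weaker two-sided localization $\theta_0^{-\kappa}\cdot\theta_0^{-\kappa}$, which is the reason that both $Q_1$ and the $\theta_0^{-\kappa}$-sandwiched $Q_2$ are needed in the statement. The critical exponent $\kappa=\delta/(1+2\delta)$, rather than $\delta$ itself, is inherited from Lemma~\ref{lemma:sing}: it dictates the optimal decay of the second commutator $\mathrm i[\bD B,B]$ and thereby sets the ceiling on the weight $\theta_0^{-2\kappa}$ achievable in the main positive term.
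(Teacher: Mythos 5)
Your global architecture matches the paper's: the same splitting of $2\operatorname{Im}(P(H-E))=\bD P$ into the $\bD\Theta$-term and the $\Theta f(H)(\bD\zeta(B))f(H)\Theta$-term, the use of Corollary~\ref{cor:171007} for positivity of $f(H)(\bD B)f(H)$, Lemma~\ref{lemma:sing} for the second-commutator remainder of order $r^{-1-2\kappa}$, and the device $f_1(t)=(1-f(t)^2)(t-\lambda)^{-1}$ to generate the $(H-E)$-error terms $Q_1,Q_2$. However, the step where you actually extract positivity is flawed. You propose to write $\sigma^2-B^2=(\alpha-B)(\alpha+B)+(\sigma^2-\alpha^2)$ and to apply Lemma~\ref{lem:170928} in the shifted variable $b=B-\alpha$ with constant $c\propto\sigma^2-\alpha^2>0$. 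First, $\sigma^2-\alpha^2$ need not be positive: $\sigma<\gamma(\lambda)=\sqrt{2d(\lambda)}$ is dictated by the distance from $\lambda=E+\alpha_0^2/2$ to the nearest threshold below, and if a threshold lies strictly between $E$ and $\lambda$ then $\gamma(\lambda)<\alpha_0$, so $\sigma^2-\alpha^2<0$ for $\alpha$ near $\alpha_0$ --- and this is precisely the interesting regime for Theorem~\ref{thm:priori-decay-b_0}. Second, the observable in the lemma is $\Theta f(H)\zeta(B)f(H)\Theta$ with $\zeta=\zeta_\epsilon$ centred at $0$, so a version of Lemma~\ref{lem:170928} at $b=B-\alpha$ does not apply to the operators actually present.

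The paper's mechanism is different and does not require any comparison of $\sigma$ with $\alpha$: one applies Lemma~\ref{lem:170928} directly at $b=B$, combining the term $B\zeta(B)$ coming from $\theta'\Theta B$ with the term $\zeta'(B)$ coming from the Mourre bound, where $\zeta'(B)(\sigma^2-B^2)\ge(\sigma^2-\epsilon^2)\zeta'(B)\ge0$ simply because $\operatorname{supp}\zeta'\subset[-\epsilon,\epsilon]$ and $\epsilon^2<\sigma^2/2$. The point you are missing is the amplification trick: writing $B\zeta(B)=Z_1(B)^2$ with $Z_1(b)=b\sqrt{b^{-1}\zeta(b)}$ and inserting $1\ge 2^{n_0}r^{-1}\bar\chi_{n_0}$ converts the unweighted positive term into an $r^{-1}$-weighted one carrying the large factor $2^{n_0}$, which is what dominates the $\alpha$-proportional errors; the claimed main term $cr^{-1}\theta_0^{-2\kappa}\Theta^2$ then comes specifically from the $\beta\theta_0^{-1-2\kappa}$ part of $\theta'$ (not from the $\alpha$-part, which vanishes when $\alpha=0$), and the whole scheme closes only with the ordered choice: $\epsilon$ small, then $\beta$ small, then $n_0$ large. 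Without the $2^{n_0}$-amplification and this parameter ordering, the negative commutator errors of size $O(\alpha r^{-1}\Theta^2)$ and $O(\beta r^{-1}\theta_0^{-2\kappa}\Theta^2)$ cannot be absorbed, so your argument as written does not close.
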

\begin{remark}\label{rem:180123}
We note that the constants, in particular $c,C>0$, can be chosen locally uniformly in 
$E\in\mathbb R$ and $\alpha_0>0$ with $E+\alpha_0^2/2\not\in\mathcal T(H)$.
This in fact enables us to apply the arguments of \cite{Pe} to
conclude that  the set of non-threshold eigenvalues of $H$ can
accumulate only at points in $\mathcal T(H)$ from below.
\end{remark}
\begin{proof}
\textit{Step I.} \enspace
We intend to apply Corollary~\ref{cor:171007}, and for that purpose
we fix some  variables in the following order: 
Let $E\in\mathbb R$ and $\alpha_0\ge 0$ be given and define $\lambda$
correspondingly.  Fix then 
 any $\sigma\in(0,\gamma(\lambda))$. 
Choose a big $R\ge 1$ and a neighborhood $\mathcal U\subset\mathbb R$ of $\lambda$
as in Corollary~\ref{cor:171007},
and let $f\in C^\infty_{\mathrm c}(\mathcal U)$ be a function 
such that $0\le f\le 1$ in $\mathcal U$ and $f=1$ in a neighborhood $I$ of $\lambda$. 
Then Corollary~\ref{cor:171007} asserts that 
\begin{align}
\begin{split}
f(H)(\mathbf DB)f(H)
\ge 
f(H)r^{-1/2}\bigl(\sigma^2-B^2\bigr)r^{-1/2}f(H)
-C_1r^{-2}
,
\end{split}
\label{eq:171025}
\end{align}
which we will implement in Step I\hspace{-.05em}V below. We fix
$\alpha_1\in \{0\}\cup (0,\alpha_0)$  such that
$$\inf_{\alpha\in [\alpha_1,\alpha_0]} d\bigl(E+\tfrac12\alpha^2,\mathbb R\setminus I\bigr)>0.$$ 

With these variables we consider the operator  $P$ given in   \eqref{eq:17092721b}.
Note though that we have not yet fixed $c,C>0$, $n_0\ge 1$, 
  $\beta,\epsilon\in(0,1)$, 
$Q_1$ and $Q_2$. We will need $\epsilon^2<\sigma^2/2$. These quantities  will be chosen in Step II.
In the following estimates the dependence on  
$\beta,\epsilon\in(0,1)$, $\epsilon^2<\sigma^2/2$,  and $n_0\ge 1$ will  always be emphasized, 
and  the estimates will be uniform in $n,m,\nu$ and $\alpha$
fulfilling $n>m\ge 2n_0$, $\nu\ge 2n_0$ and  $\alpha\in[\alpha_1,\alpha_0]$ (as required for \eqref{eq:17092028b}). 

We will repeatedly use (small variations of) Lemmas~\ref{lem:171113}, \ref{lem:171113b} and \ref{lem:17111416}
to bound commutators of functions of $H$, $r$ and $B$, mostly  without
reference. It is assumed that  $R\ge 1$ is chosen so large that not
only \eqref{eq:171025} is valid, but also that these lemmas apply for
this (fixed)  $R$.

\Step{Step I\hspace{-.05em}I}
We are calculate and bound 
the left-hand side of \eqref{eq:17092028b}.
By the definition \eqref{eq:17092721b}   we  compute 
\begin{align}
\begin{split}
2\mathop{\mathrm{Im}}\big(P(H-E)\bigr)
=\bD P
&
=
2\mathop{\mathrm{Re}}
\bigl[ \bigl(\mathbf D \Theta \bigr)f(H)\zeta(B)f(H)\Theta \bigr]
\\&\phantom{{}={}}{}
+\Theta f(H) \bigl(\bD \zeta(B)\bigr) f(H)\Theta.
\end{split}
\label{eq:1709281948}
\end{align}
We claim that the two terms on the right-hand side of \eqref{eq:1709281948} are 
bounded from below as 
\begin{align}\label{eq:com1}
\begin{split}
&
2\mathop{\mathrm{Re}}
\bigl[ (\mathbf D \Theta )f(H)\zeta(B)f(H)\Theta \bigr]
\\&
\ge 
2^{n_0+1}\alpha r^{-1/2}\Theta f(H)B\zeta(B)f(H)\Theta r^{-1/2}
-\bigl(C_2+C_3(\epsilon) 2^{-n_0}\bigr)\alpha r^{-1}\Theta^2
\\&\phantom{{}={}}{}
+2^{n_0}\beta  r^{-1/2}\theta_0^{-\kappa}\Theta f(H)B\zeta(B)f(H)\Theta\theta_0^{-\kappa}r^{-1/2}
-C_3(\epsilon)\beta r^{-1}\theta_0^{-2\kappa}\Theta^2
\\&\phantom{{}={}}{}
-C_3(\epsilon)r^{-2}\Theta^2
-C_3(\epsilon)\bigl(\chi_{m-1,m+1}^2+\chi_{n-1,n+1}^2\bigr)r^{-1}\mathrm e^{2\theta}
, 
\end{split}
\intertext{and}
\label{eq:17092820}
\begin{split}
&\Theta f(H) \bigl(\bD \zeta(B)\bigr)f(H) \Theta
\\&
\ge 
\bigl(\tfrac12\sigma^2 -\epsilon^2\bigr)r^{-1/2}\Theta f(H)\zeta'(B)f(H)\Theta r^{-1/2}
\\&\phantom{{}={}}{}
+\tfrac12\sigma^2 
r^{-1/2}\theta_0^{-\kappa}\Theta f(H)\zeta'(B)f(H)\Theta \theta_0^{-\kappa}r^{-1/2}
-C_3(\epsilon)r^{-1-2\kappa}\Theta^2
,
\end{split}
\end{align} 
both of which are uniform in $\beta,\epsilon\in(0,1)$,
$\epsilon^2<\sigma^2/2$  and $n_0\ge 1$, 
and also in $\alpha\in[\alpha_1,\alpha_0]$, $n>m\ge 2n_0$ and $\nu\ge 2n_0$. 
We will prove these bounds \eqref{eq:com1} and \eqref{eq:17092820} 
later in Steps I\hspace{-.05em}I\hspace{-.05em}I and I\hspace{-.05em}V, respectively.
 For the moment let us assume them.
Then by \eqref{eq:1709281948}--\eqref{eq:17092820} and Lemma~\ref{lem:170928}
we obtain 
\begin{align}
\begin{split}
2\mathop{\mathrm{Im}}\big(P(H-E)\bigr)
&
\ge 
\min\{2^{n_0-2}\alpha \epsilon,(\sigma^2 -2\epsilon^2)/\epsilon \}
r^{-1/2}\Theta f(H)^2\Theta r^{-1/2}
\\&\phantom{{}={}}{}
-\bigl(C_2+C_3(\epsilon) 2^{-n_0}\bigr)\alpha r^{-1}\Theta^2
\\&\phantom{{}={}}{}
+\min\{2^{n_0-3}\beta \epsilon,\sigma^2 /\epsilon\}
r^{-1/2}\theta_0^{-\kappa}\Theta f(H)^2\Theta\theta_0^{-\kappa}r^{-1/2}
\\&\phantom{{}={}}{}
-C_3(\epsilon)\beta r^{-1}\theta_0^{-2\kappa}\Theta^2
-2C_3(\epsilon)r^{-1-2\kappa}\Theta^2
\\&\phantom{{}={}}{}
-C_3(\epsilon)\bigl(\chi_{m-1,m+1}^2+\chi_{n-1,n+1}^2\bigr)r^{-1}\mathrm e^{2\theta}
.
\end{split}
\label{eq:171103}
\end{align}

Now we are going to remove $f(H)^2$ from the first and third terms of \eqref{eq:171103}
with some controllable errors, and for that end we 
 introduce $f_1\in \mathcal F^{-1}$ by
 \begin{align}
   \label{eq:f1}
   f_1(t)=\bigl(1-f(t)^2\bigr)\bigl(t-\lambda\bigr)^{-1},
 \end{align} 
so that (uniformly in $\alpha\in [\alpha_1,\alpha_0]$)
\begin{align*}
1-f(H)^2
&\le C_4f_1(H)\bigl(H-E-\tfrac12\alpha^2\bigr).
\end{align*}
Then we estimate
\begin{align}
\begin{split}
&r^{-1/2}\Theta \bigl(1-f(H)^2\bigr)\Theta r^{-1/2}
\\&\le 
C_4\mathop{\mathrm{Re}}\bigl[r^{-1/2}\Theta f_1(H)\Theta r^{-1/2}\bigl(H-E-\tfrac12\alpha^2\bigr)\bigr]
\\&\phantom{{}={}}{}
+\tfrac12C_4\mathop{\mathrm{Re}}\bigl[r^{-1/2}\Theta f_1(H)\omega^2(\Theta r^{-1/2})''\bigr]
+C_4\mathop{\mathrm{Im}}\bigl[r^{-1/2}\Theta f_1(H)B(\Theta r^{-1/2})' \bigr]
\\&
\le 
C_4\mathop{\mathrm{Re}}\bigl[r^{-1/2}\Theta f_1(H)\Theta r^{-1/2}(H-E)\bigr]
\\&\phantom{{}={}}{}
+C_5r^{-3/2}\Theta^2
+C_5\beta r^{-1}\Theta^2
+C_5\bigl(\chi_{m-1,m+1}^2+\chi_{n-1,n+1}^2\bigr)r^{-2}\mathrm e^{2\theta}
.
\end{split}
\label{eq:171214}
\end{align}
Here we have used the inequality  
$$|\omega^2-1|\le C_7 r^{-1/2},$$
which is a consequence of \eqref{eq:boun_constr} (a stronger bound
holds, but this is not  needed).
Hence by \eqref{eq:171214} it follows that uniformly in large $n_0\ge 1$ and small $\beta\in(0,1)$ 
\begin{align}
\begin{split}
r^{-1/2}\Theta f(H)^2\Theta r^{-1/2}
&\ge 
c_1r^{-1}\Theta^2
-\mathop{\mathrm{Re}}\bigl(\Theta Q_1\Theta(H-E)\bigr)
\\&\phantom{{}={}}{}
-C_5\bigl(\chi_{m-1,m+1}^2+\chi_{n-1,n+1}^2\bigr)r^{-2}\mathrm e^{2\theta},
\end{split}
\label{eq:17110320}
\end{align}
where $Q_1=C_4r^{-1/2}f_1(H)r^{-1/2}$.
Similarly, we can show that 
uniformly in large $n_0\ge 1$ and small $\beta\in(0,1)$
\begin{align}
\begin{split}
r^{-1/2}\theta_0^{-\kappa}\Theta f(H)^2\Theta\theta_0^{-\kappa}r^{-1/2}
&\ge 
c_2r^{-1}\theta_0^{-2\kappa}\Theta^2
-\mathop{\mathrm{Re}}\bigl(\Theta \theta_0^{-\kappa}Q_2\theta_0^{-\kappa}\Theta(H-E)\bigr)
\\&\phantom{{}={}}{}
-C_6\bigl(\chi_{m-1,m+1}^2+\chi_{n-1,n+1}^2\bigr)r^{-2}\mathrm e^{2\theta}
,
\end{split}
\label{eq:17110321}
\end{align}
where $Q_2=C_4r^{-1/2}f_1(H)r^{-1/2}$.

Finally by \eqref{eq:171103}, \eqref{eq:17110320} and \eqref{eq:17110321}
it follows that 
\begin{align*}
\begin{split}
&2\mathop{\mathrm{Im}}\big(P(H-E)\bigr)
\\&
\ge 
\Bigl[
c_1\min\{ 2^{n_0-2}\alpha\epsilon,(\sigma^2 -2\epsilon^2)/\epsilon \}
-\bigl(C_2+C_3(\epsilon) 2^{-n_0}\bigr)\alpha \Bigr]
r^{-1}\Theta^2
\\&\phantom{{}={}}{}
+\Bigl[
c_2\min\{2^{n_0-3}\beta \epsilon,\sigma^2 /\epsilon\}
-C_3(\epsilon)\beta-2C_3(\epsilon)\theta_0^{2\kappa}r^{-2\kappa}\Bigr]r^{-1}\theta_0^{-2\kappa}\Theta^2
\\&\phantom{{}={}}{}
-C_{7}(n_0,\epsilon)\bigl(\chi_{m-1,m+1}^2+\chi_{n-1,n+1}^2\bigr)r^{-1}\mathrm e^{2\theta}
\\&\phantom{{}={}}{}
-\min\{ 2^{n_0-2}\alpha\epsilon,(\sigma^2 -2\epsilon^2)/\epsilon \}\mathop{\mathrm{Re}}\bigl(\Theta Q_1\Theta(H-E)\bigr)\\&\phantom{{}={}}{}
-\min\{2^{n_0-3}\beta \epsilon,\sigma^2 /\epsilon\}\mathop{\mathrm{Re}}\bigl(\Theta \theta_0^{-\kappa}Q_2\theta_0^{-\kappa}\Theta(H-E)\bigr).
\end{split}
\end{align*}
We can bound $\theta_0^{2\kappa}r^{-2\kappa}\leq 2^{2\kappa
}2^{-4\kappa n_0}$ on the support of $\Theta^2$ for $\nu\geq 2n_0$. Now, first let $\epsilon>0$ be small enough,
then choose $\beta\in(0,1)$ small enough,
and finally let $n_0\ge 1$ be large enough.
We  conclude  the desired bound \eqref{eq:17092028b} since the first
square bracket expression is then non-negative while the second is
bounded from 
below by some constant $c>0$. This $c$ and  $C=C_{7}(n_0,\epsilon)$
work.

\Step{Step I\hspace{-.05em}I\hspace{-.05em}I} 
 We  prove \eqref{eq:com1}. 
Substitute the expression
\begin{align*}
\bD \Theta 
&=\mathop{\mathrm{Re}}\bigl(\Theta' \omega\cdot p\bigr)
=
\theta'\Theta B
+\chi_{m,n}'\mathrm e^\theta B
-\tfrac{\mathrm i}2\omega^2
\bigl(\theta''\Theta+\theta'{}^2\Theta+\chi_{m,n}''\mathrm e^\theta+2\chi_{m,n}'\theta'\mathrm e^\theta\bigr)
\end{align*}
and then we can first write 
\begin{align}
\begin{split}
&2\mathop{\mathrm{Re}}\bigl[ \bigl(\mathbf D \Theta \bigr)f(H)\zeta(B)f(H)\Theta \bigr]
\\&=
2\mathop{\mathrm{Re}}\bigl[ \theta'\Theta Bf(H)\zeta(B)f(H)\Theta \bigr]
+2\mathop{\mathrm{Re}}\bigl[ \chi_{m,n}'\mathrm e^\theta Bf(H)\zeta(B)f(H)\Theta \bigr]
\\&\phantom{{}={}}{}
+
\mathop{\mathrm{Im}}
\bigl[\omega^2
\bigl(\theta''\Theta+\theta'{}^2\Theta+\chi_{m,n}''\mathrm e^\theta+2\chi_{m,n}'\theta'\mathrm e^\theta\bigr)f(H)\zeta(B)f(H)\Theta \bigr].
\end{split}
\label{eq:171029}
\end{align}

The first term of \eqref{eq:171029} can be calculated by \eqref{eq:12.5.1.19.56} as 
\begin{align}
\begin{split}
&2\mathop{\mathrm{Re}}\bigl[ \theta'\Theta Bf(H)\zeta(B)f(H)\Theta \bigr]
\\&
=
2\alpha\mathop{\mathrm{Re}}\bigl[\Theta Bf(H)\zeta(B)f(H)\Theta \bigr]
+2\beta\mathop{\mathrm{Re}}\bigl[\theta_0^{-1-2\kappa}\Theta Bf(H)\zeta(B)f(H)\Theta \bigr]
\\&
\ge 
2\alpha \Theta f(H)B\zeta(B)f(H)\Theta 
-\bigl(C'_{1}+C'_{2}(\epsilon)r^{-1}\bigr)\alpha r^{-1}\Theta^2
\\&\phantom{{}={}}{}
+2\beta \mathop{\mathrm{Re}}\bigl[ \theta_0^{-1-2\kappa}\Theta f(H)B\zeta(B)f(H)\Theta \bigr]
\\&\phantom{{}={}}{}
-\bigl(C'_{1}+C'_{2}(\epsilon)r^{-1}\bigr)\beta r^{-1}\theta_0^{-1-2\kappa}\Theta^2
.
\end{split}
\label{eq:17111513}
\end{align}
We define $Z_1\in \mathcal F^{1/2}$ as 
\begin{align*}
Z_1(b)=b\sqrt{b^{-1}\zeta(b)}.
\end{align*}
Then we have $Z_1(b)^2=b\zeta(b)$, so that  for the first term of \eqref{eq:17111513}
\begin{align}
\begin{split}
&2\alpha \Theta f(H)B\zeta(B)f(H)\Theta 
\\&
\ge 
2\alpha \Theta f(H)Z_1(B)\bigl(2^{n_0}r^{-1}\bar\chi_{n_0}\bigr)Z_1(B)f(H)\Theta 
\\&
\ge 
2^{n_0+1}\alpha r^{-1/2}\Theta f(H)B\zeta(B)f(H)\Theta r^{-1/2}
-\bigl(C'_{3}+C'_{4}(\epsilon) r^{-1}\bigr)2^{n_0}\alpha r^{-2}\Theta^2;
\end{split}
\label{eq:17111513b}
\end{align} here we used Lemma  \ref{lem:171113b} repeatedly.

Similarly,  for the third term of \eqref{eq:17111513}
\begin{align}
\begin{split}
&2\beta \mathop{\mathrm{Re}}\bigl[ \theta_0^{-1-2\kappa}\Theta f(H)B\zeta(B)f(H)\Theta \bigr]
\\&
\ge 
2\beta \theta_0^{-\kappa}\Theta f(H)Z_1(B)\theta_0^{-1}Z_1(B)f(H)\Theta \theta_0^{-\kappa}
-C'_{5}(\epsilon)\beta r^{-1}\theta_0^{-1-2\kappa}\Theta^2
\\&
\ge 
2\beta \theta_0^{-\kappa}\Theta f(H)Z_1(B)\bigl(2^{n_0-1}r^{-1}\bar\chi_{n_0}\bigr)Z_1(B)f(H)\Theta \theta_0^{-\kappa}
\\&\phantom{{}={}}{}
-C'_{5}(\epsilon)\beta r^{-1}\theta_0^{-1-2\kappa}\Theta^2
\\&
\ge 
 2^{n_0}\beta r^{-1/2}\theta_0^{-\kappa}\Theta f(H)B\zeta(B)f(H)r^{-1/2}\theta_0^{-\kappa}\Theta
\\&\phantom{{}={}}{}
-C'_{5}(\epsilon)\beta r^{-1}\theta_0^{-1-2\kappa}\Theta^2
-C'_{6}(\epsilon)2^{n_0}\beta r^{-2}\theta_0^{-2\kappa}\Theta^2
.
\end{split}
\label{eq:17111513bb}
\end{align}
By \eqref{eq:17111513}, \eqref{eq:17111513b} and \eqref{eq:17111513bb} we obtain 
for the first term of \eqref{eq:171029}
\begin{align}
\begin{split}
&2\mathop{\mathrm{Re}}\bigl[ \theta'\Theta Bf(H)\zeta(B)f(H)\Theta \bigr]
\\&
\ge 
2^{n_0+1}\alpha r^{-1/2}\Theta f(H)B\zeta(B)f(H)r^{-1/2}\Theta 
-\bigl(C'_{7}+C'_{8}(\epsilon) 2^{-n_0}\bigr)\alpha r^{-1}\Theta^2
\\&\phantom{{}={}}{}
+2^{n_0}\beta  r^{-1/2}\theta_0^{-\kappa}\Theta f(H)B\zeta(B)f(H)r^{-1/2}\theta_0^{-\kappa}\Theta
-C'_{8}(\epsilon)\beta r^{-1}\theta_0^{-2\kappa}\Theta^2
.
\end{split}
\label{eq:17111513bbb}
\end{align}

On the other hand, note that by \eqref{eq:boun_constr} we have 
\begin{align*}
|\omega\cdot \nabla \omega^2|\le C'_{9}r^{-2}.
\end{align*}
Then the second and third terms of \eqref{eq:171029} are bounded as 
\begin{align}
\begin{split}
&2\mathop{\mathrm{Re}}\bigl[ \chi_{m,n}'\mathrm e^\theta Bf(H)\zeta(B)f(H)\Theta \bigr]
\\&\phantom{{}={}}{}
+
\mathop{\mathrm{Im}}
\bigl[\omega^2
\bigl(\theta''\Theta+\theta'{}^2\Theta+\chi_{m,n}''\mathrm e^\theta
+2\chi_{m,n}'\theta'\mathrm e^\theta\bigr)f(H)\zeta(B)f(H)\Theta \bigr]
\\&
\ge 
-C'_{10}\alpha r^{-1}\Theta^2
-C'_{11}(\epsilon)\beta r^{-1}\theta_0^{-1-2\kappa}\Theta^2
-C'_{11}(\epsilon)r^{-2}\Theta^2
\\&\phantom{{}={}}{}
-C'_{11}(\epsilon)\bigl(\chi_{m-1,m+1}^2+\chi_{n-1,n+1}^2\bigr)r^{-1}\mathrm e^{2\theta}.
\end{split}
\label{eq:17111513bbbb}
\end{align}
By \eqref{eq:171029}, \eqref{eq:17111513bbb} and \eqref{eq:17111513bbbb} we obtain \eqref{eq:com1}.

\Step{Step I\hspace{-.05em}V} 
Here we prove \eqref{eq:17092820}.
Take a real-valued function $\tilde f\in C_\c^\infty(\mathcal U)$ 
such that $\tilde f =1$ on $\mathop{\mathrm{supp}}f$,
and set 
$$g(\lambda)=\lambda\tilde f(\lambda).$$
Then we can write 
\begin{align}
f(H)\bigl(\mathbf D\zeta(B)\bigr)f(H)
&=f(H)\bigl(\mathrm i[g(H),\zeta(B)]\bigr)f(H).
\label{eq:17102923a}
\end{align}
By Lemmas~\ref{lem:17111416}, \ref{lem:A1} and \ref{lem:17111515} we have 
\begin{align}
\begin{split}
\mathrm i[g(H),\zeta(B)]
&
=
\mathop{\mathrm{Re}}\bigl[\zeta'(B)\bigl(\mathrm i[g(H),B]\bigr)\bigr]
\\&\phantom{{}={}}{}
+\mathop{\mathrm{Re}}\int _{\mathbb C}(B-z)^{-2}\bigl[\mathrm i[g(H),B],B\bigr](B-z)^{-1}
\,\mathrm d\mu_{\zeta}(z)
\\&
\ge 
Z_2(B)\bigl(\mathrm i[g(H),B]\bigr)Z_2(B)
-C'_{1}(\epsilon)r^{-1-2\kappa}
,
\end{split}
\label{eq:17102923}
\end{align}
where we have set $Z_2=\sqrt{\zeta'}\in C^\infty_{\mathrm c}(\mathbb R)$.
Using \eqref{eq:171025} the  contribution from the first term of \eqref{eq:17102923}
is bounded  as 
\begin{align}
\begin{split}
&
f(H)Z_2(B)\bigl(\mathrm i[g(H),B]\bigr)Z_2(B)f(H)
\\&
\ge 
Z_2(B)f(H)\bigl(\mathrm i[H,B]\bigr)f(H)Z_2(B)
-C'_{2}(\epsilon)r^{-2}
\\&
\ge 
Z_2(B)f(H)r^{-1/2}\bigl(\sigma^2-B^2\bigr)r^{-1/2}f(H)Z_2(B)
-C'_{3}(\epsilon)r^{-2}
\\&
\ge 
\bigl(\tfrac12\sigma^2 -\epsilon^2\bigr)r^{-1/2}f(H)\zeta'(B)f(H)r^{-1/2}
\\&\phantom{{}={}}{}
+\tfrac12\sigma^2 r^{-1/2}\theta_0^{-\kappa}f(H)\zeta'(B)f(H)\theta_0^{-\kappa}r^{-1/2}
-C'_{4}(\epsilon)r^{-2}
.
\end{split}
\label{eq:17102923b}
\end{align} 
  We obtain 
\eqref{eq:17092820} by combining \eqref{eq:17102923a}--\eqref{eq:17102923b}.
\end{proof}

\begin{proof}[Proof of Theorem~\ref{thm:priori-decay-b_0}]  
Let $\phi\in \mathcal B_0^*\cap H^1_{0,\mathrm{loc}}(\Omega)$, $E\in\mathbb R$, $\rho\ge 0$,
and $\alpha_0\in[0,\infty]$ be as in the assumptions of Theorem~\ref{thm:priori-decay-b_0}.
We assume 
$$E+\tfrac12\alpha_0^2\not\in\mathcal T(H)\cup \{\infty\}$$
and deduce a contradiction.
For the above $E$ and $\alpha_0$ we 
choose $c$, $C$, $n_0$, $R$, $\alpha_1$, $\beta$, $\epsilon$, $f$ and $Q$
in agreement  with Lemma~\ref{lem:14.10.4.1.17ffaabb}.
Note that we may take $n_0$ larger if necessary so that 
$2^{2n_0-3}>\rho$.
Note that for all $n>m\ge 2n_0$
$$\chi_{m-2,n+2}\phi\in \mathcal D(H).$$
We can also choose $\alpha\in[\alpha_1,\alpha_0]$ such that $\alpha+\beta>\alpha_0$.
With these variables we evaluate the inequality \eqref{eq:17092028b}
in the state $\chi_{m-2,n+2}\phi\in \mathcal D(H)$ 
and then obtain for all  $n>m\ge 2n_0$ and $\nu\ge 2n_0$
\begin{align}
\begin{split}
\bigl\|r^{-1/2}\theta_0^{-\kappa}\Theta\phi\bigr\|^2
&
\le 
C_1(m)\|\chi_{m-1,m+1}\phi\|^2
+C_2(\nu) 2^{-n}\|\chi_{n-1,n+1}\mathrm e^{\alpha r}\phi\|^2
.
\end{split}
\label{eq:11.7.16.3.22a}
\end{align}
The second term on the right-hand side of (\ref{eq:11.7.16.3.22a})
 vanishes when $n\to\infty$ since 
$\mathrm e^{\alpha r}\phi\in \mathcal B^*_0$, 
and consequently by Lebesgue's monotone convergence
theorem
\begin{align}
\bigl\|\bar\chi_m r^{-1/2}\theta_0^{-\kappa}
\mathrm{e}^{\theta}\phi\bigr\|^2
 &\le 
C_1(m)\|\chi_{m-1,m+1}\phi\|^2.
\label{eq:11.7.16.3.43a}
\end{align}
Next we let $\nu \to\infty$ in \eqref{eq:11.7.16.3.43a}
 invoking again Lebesgue's monotone convergence
theorem,
and then it follows that 
 $$\bar\chi_m r^{-1/2}\mathrm e^{(\alpha+\beta) r}\phi\in \mathcal H.$$ 
Consequently $\mathrm e^{(\alpha+\beta)r}\phi\in \mathcal B_0^*$,
but this is a contradicts that   $\alpha+\beta>\alpha_0$. 
\end{proof}

\subsection{Super-exponentially decaying eigenfunctions}\label{subsec:17121422}

Throughout this subsection we impose
Condition~\ref{cond:smooth2wea3n1}. We shall use a function introduced in \cite{Ya2}. 
Although we do not present its construction, 
we list the properties required in the arguments of the paper.
\begin{lemma}\label{lemma:another-vector-field} 
There exists a real-valued function $Y\in C^\infty({\bf X}\setminus\{0\})$ 
such that 
\begin{enumerate}[\normalfont (1)]
\item\label{item:2}
$Y$ is homogeneous of degree one;
\item \label{item:3}
$Y(x)\geq 1$ for $|x|=1$;
\item \label{item:4} 
$Y$ is convex;
\item \label{item:5} 
There exists $c\in (0,1)$ such that for all $a\in \vA$
\begin{align}
\label{eq:dependence}Y(x)=Y(x_a)\ \text{ for }|x_a|\ge (1-c) |x|.
\end{align}
\end{enumerate}
\end{lemma}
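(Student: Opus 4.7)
The plan is to construct $Y$ following Yafaev \cite{Ya2} as the support function of a carefully chosen smooth convex body $K \subset \mathbf{X}$, because the naive candidates fail: $|x|$ does not satisfy \eqref{eq:dependence}, while $\max_a |x_a|$ collapses to $|x|$ (taking $a = a_{\min}$) and $\min_a |x_a|$ vanishes (taking $a = a_{\max}$). The construction I would pursue is to build a compact convex body $K \subset \mathbf{X}$ with the following features: (i) $K$ is symmetric, $K = -K$; (ii) for every $a \in \mathcal{A}$, $K \cap \mathbf{X}_a$ is exactly the closed unit ball $\overline{B_a} \subset \mathbf{X}_a$; (iii) $\partial K$ is $C^\infty$ and strictly convex except on the flat faces $\overline{B_a}$; (iv) the outward unit normal at a smooth point of $\partial K$ lying in $\mathbf{X}_a$ is itself in $\mathbf{X}_a$. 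I would then set
\begin{equation*}
Y(x) = \sup_{\xi \in K} \xi \cdot x.
\end{equation*}

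Convexity, positive one-homogeneity, and the bound $Y(x) \geq 1$ for $|x| = 1$ follow immediately from (i) and from the fact that $K$ contains the unit ball of $\mathbf{X}_{a_{\min}} = \mathbf{X}$. Smoothness of $Y$ on $\mathbf{X} \setminus \{0\}$ away from directions realizing the flat faces is standard, since strict convexity and $C^\infty$ regularity of $\partial K$ force the maximizer $\xi(x) \in \partial K$ to be unique and to depend smoothly on $x/|x|$. On the "flat" cones over the faces $\overline{B_a}$, the maximizer is some $\xi = x_a/|x_a| \in \mathbf{X}_a$, so $Y(x) = \xi \cdot x = |x_a|$; in particular $Y$ remains $C^\infty$ in these directions too because it coincides there with the smooth function $x \mapsto |x_a|$ (smooth where $x_a \neq 0$, which is guaranteed when $|x_a| \geq (1-c)|x|$).

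The cluster-locality \eqref{eq:dependence} is the heart of the matter and is where property (iv) is used. I would choose $c > 0$ so that for every $a$ the flat face $\overline{B_a} \subset \partial K$ is realized as the set of maximizers of $\xi \cdot x$ precisely for those $x$ lying in the conic wedge $W_a := \{x : |x_a| \geq (1-c)|x|\}$. Then on $W_a$ we automatically have $Y(x) = \max_{\xi \in \overline{B_a}} \xi \cdot x = |x_a| = Y(x_a)$ as required. That this can be arranged consistently for all $a \in \mathcal{A}$ is where \eqref{171028} is essential: the flat faces are indexed by the lattice $\mathcal{A}$, and two faces $\overline{B_a}, \overline{B_b}$ meet along $\overline{B_c}$ with $\mathbf{X}_c = \mathbf{X}_a \cap \mathbf{X}_b$, so the geometry of $\partial K$ can be organized compatibly with the partial order on $\mathcal{A}$.

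The main obstacle is producing $K$ simultaneously satisfying (ii)--(iv) with $C^\infty$ boundary: the flat faces $\overline{B_a}$ must be glued together through rounded "edges" without destroying the normal condition (iv). I would handle this inductively on the length of chains in $\mathcal{A}$, starting from $a_{\max}$ (a single point face) and successively forming the convex hull of the previously constructed body together with the disk $\overline{B_a}$, then smoothing the new edges by a convex mollification at a scale much smaller than $c$. The mollification is chosen to act only in the transverse directions to the faces, so that (iv) is preserved. Because convex mollification is a local operation and the faces $\overline{B_a}$ are isolated from each other in the normal sphere $\partial K / \sim$ by a distance bounded below by the geometry of the cluster arrangement, a sufficiently small smoothing parameter leaves each $\overline{B_a}$ intact and yields the desired $Y$.
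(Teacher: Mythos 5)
The paper gives no proof of this lemma (it is quoted from \cite{Ya2} and \cite[Theorem~7.4]{HuS}), so your argument has to stand on its own, and as written it does not. Your overall strategy --- realizing $Y$ as the support function of a convex body adapted to the lattice $\{\mathbf X_a\}$, equivalently as a regularization of $\max_a r_a|x_a|$ --- is indeed the construction behind the cited references, but two of your structural requirements on $K$ are untenable. First, demanding that $K\cap\mathbf X_a$ be the \emph{unit} ball for every $a$ is self-defeating: taking $a=a_{\min}$ gives $K\cap\mathbf X_{a_{\min}}=K=\overline{B(0,1)}$, hence $Y(x)=|x|$, which violates \eqref{eq:dependence}. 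The balls $r_a\overline{B}(\mathbf X_a)$ must have strictly increasing radii along the lattice (roughly $r_a>(1-c)^{-1}r_b$ for $b\subsetneq a$, with further constraints coming from the angles between incomparable subspaces, controlled via \eqref{171028}); choosing these radii and the constant $c$ consistently is the actual content of the lemma, and your proposal does not address it.

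Second, the convex duality is applied backwards. A flat face of $\partial K$ is the maximizer set for its normal direction and therefore produces a point where the support function fails to be differentiable; flat faces destroy smoothness of $Y$, they do not encode \eqref{eq:dependence}. What the identity $Y(x)=Y(x_a)$ on the full-dimensional wedge $\{|x_a|\ge(1-c)|x|\}$ actually forces is that each boundary point $r_a\omega$, $\omega\in\mathbf X_a$ with $|\omega|=1$, have a normal cone of dimension $1+\dim\mathbf X^a\ge 2$, i.e.\ that $\partial K$ have a genuine edge singularity along the sphere $r_aS(\mathbf X_a)$. Your conditions that $\partial K$ be $C^\infty$ and that the normal at points of $\partial K\cap\mathbf X_a$ lie in $\mathbf X_a$ would yield $Y(x)=r_a|x_a|$ only on the lower-dimensional cone $\mathbf X_a$ itself, not on a neighborhood, and your mollification step is consequently aimed at the wrong target: the edges along $r_aS(\mathbf X_a)$ must be kept sharp while only the remaining boundary is rounded. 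The cleaner (and standard) route is to regularize the function $\max_a r_a|x_a|$ directly by a smoothed maximum that preserves convexity and homogeneity and coincides with $r_a|x_a|$ wherever that term strictly dominates all the others; verifying this strict dominance on the wedges is then the quantitative heart of the proof.
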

\begin{proof}
We omit the proof. See \cite{Ya2} or \cite[Theorem~7.4]{HuS}.
\end{proof}

Define
\begin{align*}
q(x)=q_R(x)=\chi(|x|/R)+\bigl(1-\chi(|x|/R)\bigr)Y(x);\quad R\ge 1.
\end{align*}
We set 
\begin{align*}
\omega_q=\mathop{\mathrm{grad}}q,\quad 
 h_q=\mathop{\mathrm{Hess}}q,
\end{align*}
and 
\begin{align}
\label{eq:conju_new}
B_q=\mathop{\mathrm{Re}}(\omega_q\cdot p)=\tfrac12(\omega_q\cdot p+p\cdot \omega_q).
\end{align}

\begin{lemma}\label{lemma:Flowbb} 
Let $R\ge 1$ be sufficiently large. Then the operator $B_q$ defined by \eqref{eq:conju_new}
is essentially self-adjoint on $C^\infty_{\mathrm c}(\Omega)$,
and the self-adjoint extension, denoted by $B_q$ again,
satisfies that for some $C>0$
\begin{align*}
\mathcal D(B_q)\supset \mathcal H^1,\quad 
\|B_q\psi\|_{\mathcal H}\le C\|\psi\|_{\mathcal H^1}\ \ \text{for any }\psi\in\mathcal H^1.
\end{align*}
In addition the operators  $\mathrm e^{\mathrm itB_q}$, $t\in\mathbb
R$, 
naturally restrict/extend
as bounded operators $\mathrm e^{\mathrm itB_q}\colon \mathcal H^{\pm1}\to\mathcal H^{\pm1}$, and 
they satisfy
\begin{align*}
\sup_{t\in [-1,1]}\|\mathrm e^{\mathrm itB_q}\|_{\mathcal L(\mathcal H^{\pm 1})}<\infty,
\end{align*}
respectively. 
\end{lemma}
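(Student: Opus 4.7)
The plan is to imitate the proof of Lemma~\ref{lemma:Flow} essentially verbatim, replacing the rescaled Graf vector field $\omega$ by $\omega_q$. First I would verify that $\omega_q$ is a smooth, uniformly bounded vector field on $\bX$ with all derivatives bounded. This comes from the degree-one homogeneity of $Y$ (so $\partial^\alpha Y$ is of degree $1-|\alpha|$, hence bounded away from the origin for $|\alpha|\ge 1$); the singularity of $Y$ at $0$ is killed by $\chi(|\cdot|/R)$, and in the transition annulus all derivatives are controlled for any fixed $R$. In particular $|\omega_q|$, $|\omega_q'|$ and $|\Delta^2 q|$ are bounded uniformly on $\bX$.

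Next I would establish completeness of $\omega_q$ as a vector field on $\Omega$, which is the main technical step. The key input is \eqref{eq:dependence}: since $\bX^a\setminus\Omega_a$ is compact, a point $x$ near the hard-core boundary $\partial(\Omega_a+\bX_a)$ has $|x^a|$ bounded, and choosing $R$ sufficiently large we obtain $|x_a|\ge (1-c)|x|$ in the region $|x|\ge R$, whence $\nabla Y(x)=\nabla Y(x_a)\in\bX_a$, tangent to $\partial(\Omega_a+\bX_a)$. On $\{|x|\le R\}$ we have $q\equiv 1$ and $\omega_q=0$, so no flow occurs there. Together with boundedness of $\omega_q$, this ensures the integral curves of $\omega_q$ through points of $\Omega$ stay in $\Omega$ and do not blow up in finite time, yielding a globally defined flow $y\colon\mathbb R\times\Omega\to\Omega$.

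With completeness in hand, the formula \eqref{eq:12.6.7.1.10c} with $\omega$ replaced by $\omega_q$ defines a strongly continuous one-parameter unitary group $\{U(t)\}_{t\in\mathbb R}$ on $\mathcal H$ preserving $C_\c^\infty(\Omega)$. Its infinitesimal generator agrees with $B_q=\tfrac12(\omega_q\cdot p+p\cdot\omega_q)$ on $C_\c^\infty(\Omega)$; consequently by \cite[Theorem~X.49]{RS} the space $C_\c^\infty(\Omega)$ is a core for $B_q$ and the latter is essentially self-adjoint. The inclusion $\mathcal D(B_q)\supset \mathcal H^1$ and the bound $\|B_q\psi\|\le C\|\psi\|_{\mathcal H^1}$ then follow by extension from $C_\c^\infty(\Omega)\subset \mathcal H^1$, using the uniform bounds on $\omega_q$ and $\mathop{\mathrm{div}}\omega_q$.

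For the invariance of $\mathcal H^{\pm 1}$ under $\mathrm e^{\mathrm itB_q}$ I would copy the Gronwall argument from Lemma~\ref{lemma:Flow}: for $\psi\in C_\c^\infty(\Omega)$ the function $f(t)=\bigl\langle U(t)\psi,(H_0+1)U(t)\psi\bigr\rangle$ is differentiable with
\[
\pm f'(t)=\pm\bigl\langle p\cdot h_q p-\tfrac14(\Delta^2 q)\bigr\rangle_{U(t)\psi}\le C_1 f(t),
\]
where $C_1$ is independent of $\psi$ thanks to the uniform bounds on $h_q$ and $\Delta^2 q$. Gronwall and a density argument yield the bound on $\mathcal H^1$, and duality gives the same on $\mathcal H^{-1}$. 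The one point that will need care is the tangency of $\omega_q$ to $\partial\Omega$ in the transition annulus $R\le|x|\le 2R$, where the $\nabla\chi(1-Y)$ component of $\omega_q$ need not lie in $\bX_a$; if necessary, one replaces $q$ by a slight modification that interpolates inside this annulus while preserving \eqref{eq:dependence}-type tangency to each hard-core boundary. This is the main technical obstacle in carrying out the plan under Condition~\ref{cond:smooth2wea3n1}.
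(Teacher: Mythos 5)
Your argument is in substance the same as the paper's, but the paper compresses it to a single line: it simply invokes Lemma~\ref{lemma:Flowb}, the general statement for an arbitrary smooth \emph{complete} vector field $v$ on $\Omega$ with $|v|$, $|v'|$ and $|\mathop{\mathrm{grad}}(\mathop{\mathrm{div}}v)|$ bounded. Everything you reconstruct after establishing completeness -- the flow, the unitary group \eqref{eq:12.6.7.1.10c}, the core argument via \cite[Theorem X.49]{RS}, the Gronwall estimate for invariance of $\mathcal H^{\pm1}$ -- is already packaged there, so you could replace the second half of your proof by the verification of the three hypotheses of Lemma~\ref{lemma:Flowb} (your homogeneity argument gives the derivative bounds immediately, since $\partial^\alpha Y$ is homogeneous of degree $1-|\alpha|$ and the cutoff removes the singularity at the origin).

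The one substantive point is the tangency issue you flag at the end, and you are right that it is real: on the transition annulus $R\le|x|\le 2R$ the term $\chi'(|x|/R)\,\tfrac{x}{R|x|}\,(1-Y)$ of $\omega_q$ has a nonvanishing $\bX^a$-component at boundary points with $x^a\neq 0$, so $\omega_q$ need not be tangent to $\partial(\Omega_a+\bX_a)$ there, and completeness on $\Omega$ -- which is precisely the hypothesis of Lemma~\ref{lemma:Flowb} that the paper's one-line proof takes for granted -- is not automatic. Your proposed repair is the correct one; a clean implementation is to interpolate in terms of $Y$ rather than $|x|$, e.g.\ $q=\chi(Y/R)+(1-\chi(Y/R))Y$, which makes $\omega_q$ everywhere a scalar multiple of $\nabla Y$ and hence tangent wherever \eqref{eq:dependence} applies, while leaving all the properties used in Section~3.2 (convexity of the Hessian for $|x|>2R$, $q=Y$ at infinity, $q$ constant near the origin) intact. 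Under Condition~\ref{cond:smooth2wea3n12} there are no hard cores and the issue disappears, which is consistent with the lemma only being used in Section~3.2 where Condition~\ref{cond:smooth2wea3n1} is imposed -- so the caveat you raise genuinely matters and is worth recording.
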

\begin{proof}
The assertion is obvious by Lemma~\ref{lemma:Flowb}.
\end{proof}

\begin{lemma}\label{lem:17100620bb}
Let $R\ge 1$ be sufficiently large.
Then the quadratic form ${\bf D}B_q:=\mathrm i[H,B_q]$ defined on
$\mathcal H^2$ is given by 
\begin{align*}
\begin{split}
{\bf D}B_q=p&\cdot h_qp
-\tfrac14(\Delta^2q)\\
&+\sum_{a\in\mathcal A}
\Bigl(-\omega_q^a\cdot \bigl(\nabla^a V_a^{\mathrm{lr}}\bigr)
+\bigl(V_a^{\mathrm{sr}}\omega_q^a\bigr)\cdot \nabla^a
-\nabla^a\cdot \bigl(V_a^{\mathrm{sr}}\omega_q^a\bigr)+
V_a^{\mathrm{sr}}\mathop{\mathrm{div}}  \omega_q^a 
\Bigr)
.
\end{split}
\end{align*}

Moreover there exist $C,C'>0$ such that, as quadratic forms on $\mathcal H^2$, 
\begin{align}\label{eq:formulasybbb}
{\bf D}B_q\ge -r^{-1/2-\delta}(CH+C')r^{-1/2-\delta}
. 
\end{align} 
\end{lemma}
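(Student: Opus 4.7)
The formula for $\mathbf{D}B_q$ is obtained by repeating the scheme of the proof of Lemma~\ref{lem:17100620} with $B$ replaced by $B_q$ and $\tilde\omega$ replaced by $\omega_q$. Using Lemma~\ref{lemma:Flowbb} we realize $\mathrm{i}[H,B_q]$ as the strong limit of $t^{-1}(H\mathrm{e}^{\mathrm{i}tB_q}-\mathrm{e}^{\mathrm{i}tB_q}H)$, computed first as a quadratic form on $C^\infty_{\mathrm c}(\Omega)$ and then extended by density. The kinetic contribution $p\cdot h_qp-\tfrac14\Delta^2 q$ comes from \eqref{eq:comF} applied with $v=\omega_q$, and the potential commutators reproduce the stated expression exactly as in \eqref{eq:mourre comm}. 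The singular part $V_a^{\mathrm{si}}$ does not contribute: for $R\ge 1$ large its compact support lies in $\{q\equiv 1\}$, where $\omega_q=0$.

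For the lower bound \eqref{eq:formulasybbb}, the main input is the support property \eqref{eq:dependence}: whenever $|x_a|\ge (1-c)|x|$ one has $Y(x)=Y(x_a)$, hence $\omega_q^a(x)=0$. Therefore $\mathrm{supp}\,\omega_q^a\subset \{|x^a|\ge c|x|\}$, where $\langle x^a\rangle\sim r$. The kinetic term $p\cdot h_qp$ is non-negative outside the compact transition annulus $\{R\le |x|\le 2R\}$ by the convexity of $Y$ asserted in Lemma~\ref{lemma:another-vector-field}, and its possibly negative compactly supported remainder is controlled by $-C\chi_{\{R\le |x|\le 2R\}}p^2$; since $r^{-1-2\delta}$ is bounded below on this compact set, this is dominated by $-r^{-1/2-\delta}(CH+C')r^{-1/2-\delta}$ modulo weight-commutator errors handled by Lemma~\ref{lem:171113}. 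The globally bounded term $-\tfrac14\Delta^2 q\ge -C$ is absorbed similarly.

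The potential terms require more attention. For the long-range piece I write
\begin{align*}
-\omega_q^a\cdot(\nabla^a V_a^{\mathrm{lr}})
=-\bigl(\langle x^a\rangle^{-1-2\delta}\omega_q^a\bigr)\bigl(\langle x^a\rangle^{1+2\delta}\nabla^a V_a^{\mathrm{lr}}\bigr),
\end{align*}
where the first factor is bounded by $Cr^{-1-2\delta}$ on $\mathrm{supp}\,\omega_q^a$ and the second lies in $\vL(H^1_0(\Omega_a),H^1_0(\Omega_a)^*)$ by Condition~\ref{cond:smooth2wea3n1}\ref{item:cond12bweak3n2}. This produces, as a quadratic form on $\mathcal{H}^1$, a bound of the type $Cr^{-1-2\delta}(H+C')$; distributing the scalar weight symmetrically via $r^{-1-2\delta}=r^{-1/2-\delta}\cdot r^{-1/2-\delta}$ and absorbing the commutator remainders of order $r^{-3/2-\delta}$ generated by Lemma~\ref{lem:171113} yields the desired $-r^{-1/2-\delta}(CH+C')r^{-1/2-\delta}$. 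The short-range terms $(V_a^{\mathrm{sr}}\omega_q^a)\cdot\nabla^a-\nabla^a\cdot(V_a^{\mathrm{sr}}\omega_q^a)$ and $V_a^{\mathrm{sr}}\mathrm{div}\,\omega_q^a$ are handled analogously via Condition~\ref{cond:smooth2wea3n1}\ref{item:cond122same3n2}, using again the pointwise decay $|\omega_q^a\langle x^a\rangle^{-1-2\delta}|\le Cr^{-1-2\delta}$ on $\mathrm{supp}\,\omega_q^a$ together with the $H^1_0\to L^2$ form-boundedness of $\langle x^a\rangle^{1+2\delta}V_a^{\mathrm{sr}}$.

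The main obstacle is the form-level bookkeeping required to sandwich the weight $r^{-1/2-\delta}$ on both sides of each potential estimate while generating only absorbable lower-order errors. Under Condition~\ref{cond:smooth2wea3n12} this amounts to routine manipulations on $\mathcal{H}^2=\mathcal{D}(H_0)$, but under Condition~\ref{cond:smooth2wea3n1} every rearrangement must respect the form domain $H^1_0(\Omega)$ and the operator-form bounds of Condition~\ref{cond:smooth2wea3n1}; this is precisely why the right-hand side of \eqref{eq:formulasybbb} is phrased with the full Hamiltonian $H$ rather than $H_0$.
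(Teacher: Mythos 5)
Your overall route is the one the paper intends: derive the formula for ${\bf D}B_q$ by rerunning the proof of Lemma~\ref{lem:17100620} with $\omega$ replaced by $\omega_q$ (Lemma~\ref{lemma:Flowbb} supplying the flow), and obtain \eqref{eq:formulasybbb} from the convexity of $Y$ (so that $h_q\ge 0$ off the compact set $\{|x|\le 2R\}$) together with \eqref{eq:dependence}, which confines $\omega_q^a$ to a region where $\langle x^a\rangle$ is comparable to $r$ and thereby converts the weights $\langle x^a\rangle^{-1-2\delta}$ of Condition~\ref{cond:smooth2wea3n1} into $r^{-1/2-\delta}(\,\cdot\,)r^{-1/2-\delta}$. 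However, two of your justifications are wrong as written. The most serious concerns $V_a^{\mathrm{si}}$: it vanishes outside a bounded subset $K_a$ of $\Omega_a\subset\mathbf X^a$, but as a function on $\mathbf X$ its support is the unbounded cylinder $\{x:x^a\in K_a\}$ (for $a\neq a_{\max}$), which is certainly not contained in $\{q\equiv1\}\subset\{|x|\le 2R\}$. The mechanism that removes this term is not compactness of its support but the vanishing of the \emph{internal} component $\omega_q^a$ on the cylinder: $\omega_q=0$ on $\{|x|\le R\}$, while on $\{|x|\ge 2R\}$ one has $q=Y$ and, for $R$ large, $|x_a|\ge(1-c)|x|$ on the cylinder, so $\nabla^aY=0$ by \eqref{eq:dependence}; the inter-cluster part $\mathop{\mathrm{Re}}\bigl((\omega_q)_a\cdot p_a\bigr)$ commutes with multiplication by $V_a^{\mathrm{si}}(x^a)$ in any case. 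On the transition annulus $R<|x|<2R$ the contribution $\chi'(|x|/R)R^{-1}(x^a/|x|)(1-Y)$ to $\nabla^aq$ need not vanish on the cylinder, so this step requires an explicit argument (a flow-invariance statement for $x^a$ on the relevant set), not the support claim you made.

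The second flaw is the assertion that ``$-\tfrac14\Delta^2q\ge-C$ is absorbed similarly'': a term bounded below only by a negative constant cannot be dominated by $-r^{-1/2-\delta}(CH+C')r^{-1/2-\delta}$, whose scalar part decays like $r^{-1-2\delta}$ at infinity. What saves the estimate is homogeneity: $\Delta^2Y$ is homogeneous of degree $-3$, so $\Delta^2q=O(r^{-3})$ outside $\{|x|\le 2R\}$ and hence $|\Delta^2q|\le Cr^{-1-2\delta}$ globally. For the same reason your inclusion $\mathrm{supp}\,\omega_q^a\subset\{|x^a|\ge c|x|\}$ is only valid where $q=Y$; on the compact annulus it fails, but harmlessly, since $r$ is bounded there and the weighted bounds hold with $R$-dependent constants. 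With these repairs the remaining bookkeeping in your argument (symmetrizing the weight $r^{-1/2-\delta}$ and controlling commutator remainders as in Lemma~\ref{lem:171113}) is sound and agrees with the paper's very brief indication of proof.
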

\begin{proof}
  We can argue as in the proof of Lemma~\ref{lem:17100620}.  The bound
  \eqref{eq:formulasybbb} follows by using the computed expression for
  ${\bf D}B_q$ and the fact that $ h_q(x)\ge 0$ for $|x|>2R$.
\end{proof}

Now we proceed somewhat  as in the previous subsection.
We introduce the regularized weights slightly different from \eqref{eq:15.2.15.5.8bb}:
\begin{align}
\Theta= \Theta_{m,n,R}^{\alpha,\delta'}
=\eta_{m,n}\mathrm e^{\theta};\quad 
n>m\ge 0,\ R\ge 1.
\label{eq:15.2.15.5.8bbbb}
\end{align}
Here we set, as in \eqref{eq:14.1.7.23.24} and \eqref{eq:1711021}, 
\begin{align*}
\eta_m=\chi(q/2^m),\quad \bar \eta_m=1-\eta_m,\quad \eta_{m,n}=\bar\eta_m\eta_n,
\end{align*} 
and 
\begin{align*}
\theta=\theta_{R}^{\alpha,\delta'}
=\alpha (q-q^{1-2\delta'});\quad
\alpha\ge 0,\ \delta'\in(0,\delta).
\end{align*}
We are going to investigate the Heisenberg derivative 
of the `propagation observable' $P$ defined here as 
\begin{align}
P=P^{\alpha,\delta'}_{m,n,R}
=\Theta B_q\Theta.
\label{eq:17092721bbb}
\end{align} 
In the following we denote the derivatives of $\Theta$ and $\theta$ in $q$ by primes.

\begin{lemma}\label{lem:14.10.4.1.17ffaabbqqq}
Let $E\in\mathbb R$
and $\delta'\in(0,\delta)$.
Then there exist $c,C,C',C''>0$, $\beta_0\ge 1$, $n_0\ge 1$ and $R\ge 1$
such that 
uniformly in $\alpha\ge \beta_0$ and $n>m\ge n_0$, as quadratic forms on $\mathcal H^2$,
\begin{align}
\begin{split}
2\mathop{\mathrm{Im}}\bigl(P(H-E)\bigr)
&\ge 
c\alpha^2q^{-1-2\delta'}\Theta^2
-C\alpha\bigl(\eta_{m-1,m+1}^2+\eta_{n-1,n+1}^2\bigr)r^{-2}\mathrm e^{2\theta}
\\&\phantom{{}={}}{}
-\mathop{\mathrm{Re}}\bigl(Q(H-E)\bigr)
,
\end{split}
\label{14.9.26.9.53ffaabbqqqq}
\end{align}
where $Q=C'q^{-1-2\delta}\Theta^2+C''(\eta'_{m,n})^2\mathrm e^{2\theta}$.
\end{lemma}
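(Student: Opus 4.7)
The plan is to mimic the proof of Lemma~\ref{lem:14.10.4.1.17ffaabb}, expanding the Heisenberg derivative via the Leibniz rule as
\[
2\mathop{\mathrm{Im}}\bigl(P(H-E)\bigr) = \mathbf D P = 2\mathop{\mathrm{Re}}\bigl[(\mathbf D\Theta)B_q\Theta\bigr] + \Theta(\mathbf D B_q)\Theta,
\]
and analyzing the two pieces separately. In contrast with Lemma~\ref{lem:14.10.4.1.17ffaabb}, there is no energy localization $f(H)$ here, and the Mourre estimate is replaced by the unconditional lower bound on $\mathbf D B_q$ from Lemma~\ref{lem:17100620bb}. The positive contribution we seek will be generated by the large parameter $\alpha$ through the exponent $\theta = \alpha(q - q^{1-2\delta'})$ rather than by a spectral gap.

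For the first piece I would write $\Theta' = (\theta'\eta_{m,n} + \eta_{m,n}')\mathrm e^\theta$ and split $\omega_q\cdot p = B_q + \tfrac{\mathrm i}{2}(\Delta q)$. The boundary contributions from $\eta_{m,n}'$ are supported in the annuli $q\sim 2^m$ and $q\sim 2^n$, and after symmetrization they supply precisely the terms $-C\alpha(\eta_{m-1,m+1}^2 + \eta_{n-1,n+1}^2)r^{-2}\mathrm e^{2\theta}$ appearing in \eqref{14.9.26.9.53ffaabbqqqq}. The interior piece $\theta'\eta_{m,n}\mathrm e^\theta$ generates contributions of schematic form $\theta'\,\Theta B_q^2\Theta$ modulo lower-order corrections in $\Delta q = O(q^{-1})$ and $\theta''$.

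The key algebraic point is extracting the positive lower bound $c\alpha^2 q^{-1-2\delta'}\Theta^2$. Using
\[
\theta' = \alpha - \alpha(1-2\delta')q^{-2\delta'},\qquad \theta'' = 2\alpha\delta'(1-2\delta')q^{-1-2\delta'},
\]
together with the identity $h_q\,\omega_q = 0$ on $|x|>2R$ (which follows from degree-$1$ homogeneity of $Y$ via Euler's relation $\text{Hess}\,Y\cdot x = 0$), I would combine the interior part of $2\mathop{\mathrm{Re}}[(\mathbf D\Theta)B_q\Theta]$ with the $p\cdot h_q p$ piece of $\Theta(\mathbf D B_q)\Theta$. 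A careful symmetrization of the mixed $(\theta')^2$--$\theta''$ cross terms, combined with the cancellation provided by $h_q\omega_q = 0$, produces the required $\alpha^2 q^{-1-2\delta'}$ factor. The indefinite part of $\mathbf D B_q$, controlled from below by $-r^{-1/2-\delta}(CH+C')r^{-1/2-\delta}$ via \eqref{eq:formulasybbb}, is absorbed into $-\mathop{\mathrm{Re}}(Q(H-E))$ by rewriting $H = (H-E)+E$, where the $q^{-1-2\delta}\Theta^2$ weight in the statement of $Q$ precisely matches this bound.

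The remaining $\alpha$-linear contributions (from $\Delta q$, $\Delta^2 q$, and the long- and short-range potential commutators in $\mathbf D B_q$) are dominated by the leading $c\alpha^2 q^{-1-2\delta'}\Theta^2$ once $\alpha\ge \beta_0$ for $\beta_0$ sufficiently large; then $n_0$ and $R$ are fixed large enough to make all subleading $q^{-2\delta'}$ and $q^{-1}$ corrections negligible on the support of $\eta_{m,n}$. The main obstacle is maintaining the full $\alpha^2$ coefficient (rather than only $\alpha$) on the principal term: this requires identifying the correct symmetric combination of $(\theta')^2$ and $\theta''$ in $\mathop{\mathrm{Re}}[\theta'\Theta B_q^2\Theta]$ and systematically exploiting the kernel direction of $h_q$, so that the remaining $\alpha$-linear pieces truly appear as lower-order perturbations once $\alpha$ is large.
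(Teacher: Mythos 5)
Your overall strategy coincides with the paper's: expand $\bD P=2\mathop{\mathrm{Re}}\bigl((\bD\Theta)B_q\Theta\bigr)+\Theta(\bD B_q)\Theta$, isolate the boundary contributions carried by $\eta_{m,n}'$, generate the positive term $c\alpha^2q^{-1-2\delta'}\Theta^2$ from the weight $\theta=\alpha(q-q^{1-2\delta'})$, control $\Theta(\bD B_q)\Theta$ by Lemma~\ref{lem:17100620bb}, and absorb the $H$-dependent remainders into $-\mathop{\mathrm{Re}}\bigl(Q(H-E)\bigr)$ by writing $H=(H-E)+E$. One ingredient you invoke is false, however: for a general $Y$ as in Lemma~\ref{lemma:another-vector-field} one does \emph{not} have $h_q\omega_q=0$ for $|x|>2R$. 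Euler's relation applied to the degree-zero homogeneous field $\nabla Y$ gives $(\mathop{\mathrm{Hess}}Y)x=0$, not $(\mathop{\mathrm{Hess}}Y)\nabla Y=0$, and $\nabla Y$ is not parallel to $x$ in general (try $Y=\sqrt{x_1^2+2x_2^2}$). What is true, and what the paper uses, is the sign condition $\omega_q\cdot\nabla\omega_q^2=2\,\omega_q\cdot h_q\omega_q\ge0$ for $|x|>2R$, a consequence of convexity; since this quantity enters the symmetrized expansion multiplied by the nonnegative factor $\theta'^2\Theta^2$, positivity rather than vanishing is exactly what is needed to discard it.

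Relatedly, the source of the $\alpha^2$ coefficient is simpler than your description suggests: no combination with the $p\cdot h_qp$ piece of $\bD B_q$ is required (that piece is nonnegative for $|x|>2R$ and is already discarded inside the bound \eqref{eq:formulasybbb}). Symmetrizing the term $\mathop{\mathrm{Im}}\bigl(\omega_q^2\theta'^2\Theta B_q\Theta\bigr)$ yields the multiplication operator $\tfrac12\theta'^2\Theta^2(\omega_q\cdot\nabla\omega_q^2)+\omega_q^4\,\theta'\theta''\,\Theta^2$; the first summand is $\ge0$ by convexity, and the second equals $2\alpha^2\delta'(1-2\delta')\,\omega_q^4\,q^{-1-2\delta'}\bigl(1-(1-2\delta')q^{-2\delta'}\bigr)\Theta^2\ge c\alpha^2q^{-1-2\delta'}\Theta^2$ once $n_0$ is large. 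This cross term alone is the principal term; the remaining $\alpha$-linear contributions are then dominated by taking $\alpha\ge\beta_0$ large, as you say. Finally, note that the boundary term $2\mathop{\mathrm{Re}}\bigl(\eta_{m,n}'\mathrm e^\theta B_q^2\Theta\bigr)$ is not sign-definite and must be controlled by $B_q^2\le CH+C'$; this is what produces the second component $C''(\eta_{m,n}')^2\mathrm e^{2\theta}$ of $Q$, which your accounting (attributing $Q$ entirely to $\bD B_q$) omits. With these repairs your argument is the paper's proof.
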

\begin{proof}
We proceed in parallel with the proof of Lemma~\ref{lem:14.10.4.1.17ffaabb}. 
We first have 
\begin{align}
2\mathop{\mathrm{Im}}\bigl(P(H-E)\bigr)
=\mathbf DP
&
=
2\mathop{\mathrm{Re}}\bigl((\mathbf D\Theta)B_q\Theta\bigr)
+
\Theta(\mathbf DB_q)\Theta,
\label{eq:17111523}
\end{align}
and further calculate each term on the right-hand side of \eqref{eq:17111523}.
We note that 
\begin{align*}
\omega_q\cdot\nabla\omega_q^2=2\omega_q\cdot h_q\omega_q\ge 0  \text{ for }|x|>2R.
\end{align*}
Then for the first term of \eqref{eq:17111523} 
by letting $\beta_0\ge 1$ and $n_0\ge 1$ be sufficiently large 
\begin{align}
\begin{split}
&2\mathop{\mathrm{Re}}\bigl((\mathbf D\Theta)B_q\Theta\bigr)
\\&
=
2\mathop{\mathrm{Re}}\bigl(\theta'\Theta B_q^2\Theta\bigr)
+2\mathop{\mathrm{Re}}\bigl(\eta_{m,n}'\mathrm e^{\theta}B_q^2\Theta\bigr)
+\mathop{\mathrm{Im}}\bigl(\omega_q^2(\theta'{}^2+\theta'')\Theta B_q\Theta\bigr)
\\&\phantom{{}={}}{}
+\mathop{\mathrm{Im}}\bigl(\omega_q^2\eta_{m,n}''\mathrm e^{\theta}B_q\Theta\bigr)
+2\mathop{\mathrm{Im}}\bigl(\omega_q^2\eta_{m,n}'\theta'\mathrm e^{\theta}B_q\Theta\bigr)
\\&
\ge 
c_1\alpha \Theta B_q^2\Theta
-C_1\eta_{m,n}'\mathrm e^{\theta}H\mathrm e^{\theta}\eta_{m,n}'
+c_1\alpha^2q^{-1-2\delta'}\Theta^2
\\&\phantom{{}={}}{}
-C_1\alpha\bigl(\eta_{m-1,m+1}^2+\eta_{n-1,n+1}^2\bigr)q^{-2}\mathrm e^{2\theta}
.
\end{split}
\label{eq:17111523b}
\end{align}
On the other hand, as for the second term of \eqref{eq:17111523}, 
we use Lemma~\ref{lem:17100620bb}, and obtain
\begin{align}
\Theta(\mathbf DB_q)\Theta
&
\ge 
-C_2q^{-1-2\delta}\Theta^2
-C_3q^{-1/2-\delta}\Theta H\Theta q^{-1/2-\delta}.
\label{eq:17111523bb}
\end{align}
By \eqref{eq:17111523}, \eqref{eq:17111523b} and \eqref{eq:17111523bb} 
it follows that for sufficiently large $\beta_0\ge 1$ and $n_0\ge 1$
\begin{align*}
2\mathop{\mathrm{Im}}\bigl(P(H-E)\bigr)
&\ge 
c_2\alpha^2q^{-1-2\delta'}\Theta^2
-C_1\alpha\bigl(\eta_{m-1,m+1}^2+\eta_{n-1,n+1}^2\bigr)q^{-2}\mathrm e^{2\theta}
\\&\phantom{{}={}}{}
-C_3q^{-1/2-\delta}\Theta (H-E)\Theta q^{-1/2-\delta}
-C_1\eta_{m,n}'\mathrm e^{\theta}(H-E)\mathrm e^{\theta}\eta_{m,n}'
\\&
\ge 
c_3\alpha^2q^{-1-2\delta'}\Theta^2
-C_4\alpha\bigl(\eta_{m-1,m+1}^2+\eta_{n-1,n+1}^2\bigr)q^{-2}\mathrm e^{2\theta}
\\&\phantom{{}={}}{}
-C_3\mathop{\mathrm{Re}}\bigl(q^{-1-2\delta}\Theta^2(H-E)\bigr)
-C_1\mathop{\mathrm{Re}}\bigl((\eta'_{m,n})^2\mathrm e^{2\theta}(H-E)\bigr)
.
\end{align*}
This implies the assertion. 
\end{proof}

\begin{proof}[Proof of Theorem~\ref{thm:priori-decay-b_0b}]   
Let $\phi\in \mathcal B_0^*\cap H^1_{0,\mathrm{loc}}(\Omega)$,
$E\in\mathbb R$ and  $\rho\ge 0$ be as in the statement of Theorem~\ref{thm:priori-decay-b_0b}.
Fix any $\delta'\in (0,\delta)$,
and choose $\beta_0\ge 1$, $n_0\in\mathbb N$ and $R\ge 1$ in agreement 
with Lemma~\ref{lem:14.10.4.1.17ffaabbqqq}.
We may assume that $2^{n_0-3}\ge \rho$,
so that for all $n>m\ge n_0$
\begin{align*}
\eta_{m-2,n+2}\phi\in\mathcal D(H).
\end{align*}
Let us evaluate the inequality \eqref{14.9.26.9.53ffaabbqqqq} 
in the state $\eta_{m-2,n+2}\phi\in\mathcal D(H)$.
Then it follows that for any $\alpha\ge \beta_0$ and $n>m\ge n_0$
\begin{align}
\|q^{-1/2-\delta'}\Theta\phi\|^2
\leq 
C_1\alpha^{-1}
\Bigl(2^{-2m}\|\eta_{m-1,m+1}\mathrm e^{\alpha q}\phi\|^2
+2^{-2n}\|\eta_{n-1,n+1}\mathrm e^{\alpha q}\phi\|^2
\Bigr).
\label{eq:11.10.23.12.54}
\end{align}
The second term in the parentheses on the right of \eqref{eq:11.10.23.12.54} 
vanishes under the limit $n\to\infty$, 
and hence by Lebesgue's monotone convergence
theorem we obtain
\begin{align*}
\|\bar\eta_m q^{-1/2-\delta'}\mathrm e^{\theta}\phi\|^2
&
\le 
C_12^{-2m}\alpha^{-1}\|\eta_{m-1,m+1}\mathrm e^{\alpha q}\phi\|^2,
\end{align*}
or 
\begin{align}
\|\bar\eta_m q^{-1/2-\delta'}[\mathop{\mathrm{exp}}(\theta-2^{m+2}\alpha )]\phi\|^2
&
\le 
C_2(m)\|\eta_{m-1,m+1}\phi\|^2
.
\label{eq:11.7.16.3.43qq}
\end{align}
Now assume $\bar\eta_{m+2}\phi\not\equiv 0$.
Then the left-hand side of \eqref{eq:11.7.16.3.43qq}
 grows exponentially as $\alpha\to\infty$
whereas the right-hand side remains bounded.
This is a contradiction.
Thus $\bar\eta_{m+2}\phi\equiv 0$,
and hence we are done.
\end{proof}

\section{Proof of LAP bounds}\label{sec:LAP}

In this section we prove Theorem~\ref{thmlapBnd}.
The proof depends on a propagation estimate of commutator type
similarly to Section~\ref{sec:Proof},
but with different weight functions.

We shall here use the weight functions $\Theta$ defined as 
\begin{align*}
\Theta=\Theta_{\nu,R}=1-\parb{1+r/2^\nu}^{-1};\quad r=r_R,\,\,\nu\in\mathbb N_0,
\end{align*} 
and consider the `propagation observable'
\begin{align}
P=P_{R,\nu, \epsilon}^f=\Theta^{1/2}f(H)\zeta (B)f(H)\Theta^{1/2};
\quad f\in C^\infty_{\mathrm c}(\mathbb R),\ \epsilon\in(0,1),
\label{eq:171118}
\end{align} 
where $\zeta=\zeta_{\epsilon}\in C^\infty(\mathbb R)$ is 
the smooth sign function from Section~\ref{subsubsec:Smooth sign function}.

As in Section~\ref{sec:Proof}  we denote the derivatives of $\Theta$ in $r$ by primes
 and compute 
\begin{align*}
\Theta'=2^{-\nu}(1+r/2^\nu)^{-2},\quad
\Theta''=-2^{1-2\nu}(1+r/2^\nu)^{-3},
\end{align*}
and in general
\begin{align*}
\Theta^{(k)}=(-1)^{k-1}k!2^{-k\nu}\parb{1+r/2^\nu}^{-1-k}
\ \ \text{for }k=1,2,\dots.
\end{align*}
>From the above expression it follows that 
for any $k,l\in\mathbb N_0$ with $k\ge l$
\begin{align}\label{eq:elebnd}
 \begin{split}
0< (-1)^{k-1}(k!)^{-1}r^k\Theta^{(k)}\le (-1)^{l-1}(l!)^{-1}r^{l}\Theta^{(l)}\le \min\{1,r/2^\nu\}. 
 \end{split}
\end{align} Let $\kappa=\delta/(1+2\delta)$ as in   \eqref{eq:171114}.

\begin{lemma}\label{lemma:12.7.2.7.9} 
Suppose 
{Condition~\ref{cond:smooth2wea3n1} or Condition \ref{cond:smooth2wea3n12}},
and let $E\in \mathbb R\setminus \mathcal T(H)$. 
There exist $c,C>0$, $R\ge 1$, $\epsilon\in (0,1)$,
 real-valued $f\in C^\infty_{\mathrm c}(\mathbb R)$
and a neighbourhood $I\subset \mathbb R$ of $E$ such that 
for all $\nu\in\mathbb N_0$
and $z\in I_\pm$
\begin{align}\label{eq:bndLAPa}
  2\mathop{\mathrm{Im}}\bigl(P(H-z)\bigr) \ge c \Theta'
  -Cr^{-1-2\kappa}\Theta -\mathop{\mathrm{Re}}\bigl(Q(H-z)\bigr);
\end{align} 
here 
  $Q=Q_\nu\in \mathcal L(\mathcal B)\cap \mathcal L(\mathcal B^*)$ is bounded 
uniformly in $\nu\in\mathbb N_0$, and the
  estimate \eqref{eq:bndLAPa} is understood as a quadratic form on
  $\mathcal H^2$. 
\end{lemma}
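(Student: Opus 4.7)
The strategy mirrors Lemma~\ref{lem:14.10.4.1.17ffaabb} with the bounded weights $\Theta_{\nu,R}$ replacing the exponential weights, the key additional requirement being that all constants be uniform in $\nu\in\mathbb N_0$. Writing $2\mathop{\mathrm{Im}}(P(H-z))=\bD P-2\mathop{\mathrm{Im}}(z)P$ with $\bD P=\i[H,P]$, I would expand
\begin{align*}
\bD P
=2\mathop{\mathrm{Re}}\bigl[(\bD\Theta^{1/2})f(H)\zeta(B)f(H)\Theta^{1/2}\bigr]
+\Theta^{1/2}f(H)(\bD\zeta(B))f(H)\Theta^{1/2}
\end{align*}
and bound the two contributions separately. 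Parameters are fixed in the order: first a neighbourhood $I\subset\mathbb R\setminus\mathcal T(H)$ of $E$ and $f\in C^\infty_{\mathrm c}(I)$ with $f=1$ in a smaller neighbourhood of $E$; then large $R\ge 1$ so that both Corollary~\ref{cor:171007} and the commutator lemmas of Section~\ref{subsec:Functional calculus} apply; and finally a small $\epsilon\in(0,1)$ in $\zeta=\zeta_\epsilon$. Uniformity in $\nu$ will be ensured throughout by the elementary pointwise estimates \eqref{eq:elebnd}.

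For the first term, the substitution $(\Theta^{1/2})'=\tfrac12\Theta^{-1/2}\Theta'$ reveals a leading piece proportional to $\Theta'^{1/2}f(H)B\zeta(B)f(H)\Theta'^{1/2}$ after commuting $f(H)$'s and $\zeta(B)$'s via Lemmas~\ref{lem:171113} and \ref{lem:171113b}; the curvature and commutator remainders, which carry factors $\Theta^{(k)}$ or $\omega\cdot\nabla\omega^2$, are controlled by \eqref{eq:elebnd} to produce $O(r^{-1-2\kappa}\Theta)$ errors uniformly in $\nu$. For the second term I would use $\bD\zeta(B)=\i[g(H),\zeta(B)]$ with $g(\lambda)=\lambda\tilde f(\lambda)$ and $\tilde f=1$ on $\supp f$, combine it with the Helffer--Sj\"ostrand expansion, and invoke Lemma~\ref{lem:17111515} to discard the second commutator as an $O(r^{-1-2\kappa})$ remainder; this leaves $\mathop{\mathrm{Re}}(Z_2(B)(\bD B)Z_2(B))$ with $Z_2=\sqrt{\zeta'}$. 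Corollary~\ref{cor:171007} then supplies the Mourre lower bound $f(H)(\bD B)f(H)\ge f(H)r^{-1/2}(\sigma^2-B^2)r^{-1/2}f(H)-Cr^{-2}$, contributing a leading piece $\sim \Theta'^{1/2}f(H)\zeta'(B)(\sigma^2-B^2)f(H)\Theta'^{1/2}$ after redistributing $r^{-1/2}$ factors into $\Theta'^{1/2}$ via \eqref{eq:elebnd} once more. Combining both contributions and invoking the pointwise inequality $b\zeta(b)+c\zeta'(b)\ge\min\{\epsilon/8,2c/\epsilon\}$ of Lemma~\ref{lem:170928} then yields $\bD P\ge c\Theta'-Cr^{-1-2\kappa}\Theta$ modulo $f(H)^2$-residuals for suitably small $\epsilon$ and $I$.

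The last step is to encode the residual $f(H)^2$-factors, the shift from $(H-E)$ to $(H-z)$, and the correction $-2\mathop{\mathrm{Im}}(z)P$ into a single term $-\mathop{\mathrm{Re}}(Q(H-z))$ with $Q=Q_\nu$ bounded on $\mathcal B\cap\mathcal B^*$ uniformly in $\nu$. The key devices are $1-f(H)^2\le Cf_1(H)(H-E)$ with $f_1\in\mathcal F^{-1}$ from \eqref{eq:f1}, together with $(H-E)=(H-z)+(z-E)$ where the $(z-E)$-piece is absorbed into the $c\Theta'$ lower bound by shrinking $I$; the $-2\mathop{\mathrm{Im}}(z)P$ contribution is folded into $Q$ via the identity $\mathop{\mathrm{Re}}((-2\i P)(H-z))=2\mathop{\mathrm{Im}}(P(H-z))$, which permits rewriting the target inequality directly in terms of $\mathop{\mathrm{Re}}(\cdot(H-z))$ structures. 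The principal obstacle is preserving the uniformity in $\nu$ of every constant arising in the many commutator manipulations; this is precisely why \eqref{eq:elebnd} is formulated, since it controls every derivative of $\Theta$ by $\min\{1,r/2^\nu\}\cdot r^{-k}$ so that none of the constants $c,C$ grow with $\nu$. Finally, the required boundedness of $Q$ on $\mathcal B$ and $\mathcal B^*$ follows from the commutator and functional-calculus lemmas of Section~\ref{sec:17093015} applied to the building blocks $f_1(H)$, $r^{-1/2}$, and $\zeta(B)$.
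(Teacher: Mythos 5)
Your proposal is correct and follows essentially the same route as the paper: the same splitting of $2\mathop{\mathrm{Im}}(P(H-z))$ into the $\bD\Theta^{1/2}$-term, the $\bD\zeta(B)$-term and the $-2(\mathop{\mathrm{Im}}z)P$-term, the same inputs (Corollary~\ref{cor:171007}, Lemmas~\ref{lem:170928} and \ref{lem:17111515}, the bound \eqref{eq:elebnd} for $\nu$-uniformity, and the $f_1$-trick \eqref{eq:f1} to remove $f(H)^2$). The only cosmetic differences are that the paper disposes of $-2(\mathop{\mathrm{Im}}z)P$ simply via $\|P\|\le 1$ and $-2|\mathop{\mathrm{Im}}z|=\pm2\mathop{\mathrm{Im}}(H-z)$, and that the support of $f$ must be taken inside the neighbourhood $\mathcal U$ supplied by Corollary~\ref{cor:171007} (so $f$ cannot be fixed before $R$ and $\mathcal U$), but neither affects the argument.
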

\begin{proof} 
The proof is very similar to that of Lemma~\ref{lem:14.10.4.1.17ffaabb},
and we skip some of the details here.
To bound commutators of functions of $H$, $r$ and $B$
we are going to repeatedly use (small variations of)  
Lemmas~\ref{lem:171113}, \ref{lem:171113b} and \ref{lem:17111416} 
without references again.

Fix the variables in the following order: 
For any $\lambda=E\in\mathbb R\setminus \mathcal T(H)$ and $\sigma\in(0,\gamma(\lambda))$
choose $R\ge 1$ and a neighborhood $\mathcal U\subset\mathbb R$ of $\lambda$
in accordance with Corollary~\ref{cor:171007}.
Let $f\in C^\infty_{\mathrm c}(\mathcal U)$ be a real-valued function 
such that $0\le f\le 1$ in $\mathcal U$ and 
$f=1$ in an open neighborhood $\tilde I\subset \mathbb R$ of $\lambda$.
Finally we fix any $\epsilon\in(0,{\sigma})$. 
With these variables we 
consider the observable $P$ defined as \eqref{eq:171118}. 
Note that Corollary~\ref{cor:171007} asserts that 
\begin{align}
\begin{split}
f(H)(\mathbf DB)f(H)
\ge 
f(H)r^{-1/2}\bigl(\sigma^2-B^2\bigr)r^{-1/2}f(H)
-C_1r^{-2}
,
\end{split}
\label{eq:171025b}
\end{align}
and this bound will be implemented in the same manner as in the proof of Lemma~\ref{lem:14.10.4.1.17ffaabb}.
In the present  proof all the estimates are uniform in $\nu\in\mathbb
N_0$ and  $z\in I_\pm$.

Let $I\subset \tilde I$ be a compact neighbourhood of $\lambda=E$.
We calculate for $z\in I_\pm$
\begin{align}
\begin{split}
2\mathop{\mathrm{Im}}\bigl(P(H-z)\bigr)
&
=2\mathop{\mathrm{Re}}\bigl((\mathbf D\Theta^{1/2})f(H)\zeta (B)f(H)\Theta^{1/2}\bigr)
\\&
\phantom{{}={}}{}
+\Theta^{1/2}f(H)\bigl(\mathbf D\zeta (B)\bigr)f(H)\Theta^{1/2}
-2(\mathop{\mathrm{Im}}z)P,
\end{split}
\label{eq:17110517}
\end{align}
and bound each term on the right-hand side of \eqref{eq:17110517}.
As for the first term, we substitute 
\begin{align*}
\bD \Theta^{1/2}
=\tfrac12\mathop{\mathrm{Re}}\bigl(\Theta'\Theta^{-1/2}\omega\cdot p\bigr)
=\tfrac12\Theta'\Theta^{-1/2}B
-\tfrac{\mathrm i}4\omega^2\bigl(\Theta'\Theta^{-1/2}\bigr)',
\end{align*}
and then by commuting operators and noting \eqref{eq:elebnd} we obtain
\begin{align}
\begin{split}
2\mathop{\mathrm{Re}}\bigl[(\mathbf D\Theta^{1/2})f(H)\zeta (B)f(H)\Theta^{1/2}\bigr]
&\ge 
f(H)\Theta'^{1/2}B\zeta(B)\Theta'^{1/2}f(H)
\\&\phantom{{}={}}{}
-C_2r^{-2}\Theta. 
\end{split}
\label{eq:171105173}
\end{align} 
This part corresponds to Step I\hspace{-.05em}I\hspace{-.05em}I
of the proof of Lemma~\ref{lem:14.10.4.1.17ffaabb},
however the arguments are  simpler since we do not need to  consider
the square root  of $B\zeta(B)$ as before.
Hence we omit the details verifying \eqref{eq:171105173}.
As for the second term of \eqref{eq:17110517},
we proceed as in Step I\hspace{-.05em}V of the proof of Lemma~\ref{lem:14.10.4.1.17ffaabb}.
We omit the details  again, but with slightly simpler arguments we can actually show that 
\begin{align}
\begin{split}
\Theta^{1/2}f(H)\bigl(\mathbf D\zeta (B)\bigr)f(H)\Theta^{1/2}
&\ge 
c_1 \Theta'{}^{1/2} f(H)\zeta'(B) f(H)\Theta'{}^{1/2}
-C_3r^{-1-2\kappa}\Theta.
\end{split}
\label{eq:171105174}
\end{align}
Here we use that $\epsilon\in(0,\sqrt{\sigma})$, \eqref{eq:171025b}  and \eqref{eq:elebnd}.
By using $\|P\|_{\mathcal H}\le 1$ we can bound the last term of
\eqref{eq:17110517} as
\begin{align}
-2(\mathop{\mathrm{Im}}z)P
\ge -2|\mathop{\mathrm{Im}}z|
=\pm 2\mathop{\mathrm{Im}}(H-z).
\label{eq:171105174b}
\end{align}
Now by \eqref{eq:17110517}, \eqref{eq:171105173}, \eqref{eq:171105174}, \eqref{eq:171105174b}
and Lemma~\ref{lem:170928} we obtain the lower bound
\begin{align}
\begin{split}
2\mathop{\mathrm{Im}}\bigl(P(H-z)\bigr)
&\ge 
c_2\Theta'{}^{1/2} f(H)^2\Theta'{}^{1/2}
-C_4r^{-1-2\kappa}\Theta
\pm 2\mathop{\mathrm{Im}}(H-z).
\end{split}
\label{eq:171119}
\end{align} 

Finally it suffices to remove $f(H)^2$ from the first term of
\eqref{eq:171119} with a controllable error.  This corresponds to the
middle part of Step I\hspace{-.05em}I of the proof of
Lemma~\ref{lem:14.10.4.1.17ffaabb}, and again the arguments are
simpler.  Introducing  $f_1\in\mathcal F^{-1}$ by
\eqref{eq:f1}  
 we obtain after commutation
\begin{align}
&\Theta'{}^{1/2}\bigl(1-f(H)^2\bigr)\Theta'{}^{1/2}
\le 
C_5\mathop{\mathrm{Re}}\bigl[\Theta'{}^{1/2}f_1(H)\Theta'{}^{1/2}(H-z)\bigr]
+C_6r^{-2}\Theta
.
\label{eq:171119b}
\end{align}
The lemma
follows from \eqref{eq:171119} and \eqref{eq:171119b}.
\end{proof}

\begin{proof}[Proof of Theorem~\ref{thmlapBnd}]
Letting  $E\in \mathbb R\setminus(\sigma_{\mathrm{pp}}(H)\cup\mathcal
T(H))$ we prove the assertion for a  compact neighbourhood $I\subset\mathbb R$
of $E$. This suffices due to compactness.  
We choose such $I$ and  other variables in agreement with Lemma~\ref{lemma:12.7.2.7.9}. 
We may assume $I\subset \mathbb R\setminus(\sigma_{\mathrm{pp}}(H)\cup\mathcal T(H))$. 
Then Lemma~\ref{lemma:12.7.2.7.9} and the Cauchy--Schwarz inequality imply  that   
uniformly in $\nu\in\mathbb N_0$ and $\phi=R(z)\psi$ with $z\in I_\pm$ and $\psi\in \mathcal B$
\begin{align}
\label{eq:13.8.22.4.59cc22}
\|\Theta'{}^{1/2}\phi\|_{\mathcal H}^2
&\le C_1\bigl(\|\phi\|_{\mathcal B^*}\|\psi\|_{\mathcal B}
+\|r^{-1/2-\kappa}\Theta^{1/2}\phi\|_{\mathcal H}^2\bigr).
\end{align}
Thanks to \eqref{eq:elebnd} we can bound  $\Theta\le \Theta^\kappa\le
r^\kappa/2^{\kappa\nu}$, and by implementing  these estimates  
in \eqref{eq:13.8.22.4.59cc22}  it follows that 
\begin{align}
\label{eq:13.8.22.4.59cc22b}
\|\Theta'{}^{1/2}\phi\|_{\mathcal H}^2
&\le C_2\bigl(\|\phi\|_{\mathcal B^*}\|\psi\|_{\mathcal B}
+2^{-\kappa \nu}\|\phi\|_{\mathcal B^*}^2\bigr).
\end{align}
On the other hand by taking supremum over $\nu\in\mathbb N_0$ 
 it also follows that 
\begin{align}
\|\phi\|^2_{\mathcal B^*}&\leq C_3\bigl(\|\psi\|_{\mathcal B}^2+\|r^{-1/2-\kappa}\phi\|_{\mathcal H}^2\bigr)
.
\label{eq:13.8.22.4.59cc24}
\end{align}

Now suppose by contradiction that there exist 
$\phi_n=R(z_n)\psi_n$ with $z_n\in I_\pm$ and $\psi_n\in\mathcal B$ 
such that 
\begin{align}
z_n\to E'\in I,\quad 
\|\psi_n\|_{\mathcal B}\to 0,\quad 
\|\phi_n\|_{\mathcal B^*}=1.
\label{eq:1711195}
\end{align}
By using a subsequence we can assume that there exists 
\begin{align}
\phi:=\wslim_{n\to \infty}\phi_n\ \ \text{in }L^2_{-(1+\kappa)/2}.
\label{eq:1711196}
\end{align}
By local compactness and an energy estimate we easily see
that for any $\chi\in C^\infty_{\mathrm c}(\mathbf X)$
\begin{align*}
\quad \lim_{n\to \infty}\chi\phi_n=\chi\phi 
\ \ \text{in } H^1_0(\Omega),
\end{align*} 
which implies that $\phi\in H^1_{0,\mathrm{loc}}(\Omega)$.
In addition, we have by \eqref{eq:13.8.22.4.59cc22b} that 
\begin{align*}
\|\Theta'{}^{1/2}\phi\|_{\mathcal H}^2
=\lim_{n\to \infty}\|\Theta'{}^{1/2}\phi_n\|_{\mathcal H}^2
\leq C_22^{-\kappa\nu},
\end{align*}
which implies $\phi\in \mathcal B_0^*$.
By taking limit in $(H-z_n)\phi_n=\psi_n$ we have 
$$(H-E')\phi=0\ \ \text{in the distributional sense}.$$ 
Since $E'\notin \sigma_{\mathrm{pp}}(H)\cup\mathcal T(H)$, we obtain from Theorem \ref{thm:priori-decay-b_0} that $\phi=0$.
By local compactness and \eqref{eq:1711196} it then follows that 
$\phi_n\to 0$ in $L^2_{-1/2-\kappa}$.
In turn this implies  by \eqref{eq:13.8.22.4.59cc24} that 
$\phi_n\to 0$ in $\mathcal B^*$,
contradicting the assumption \eqref{eq:1711195}.
Hence we obtain that for all  $z\in I_\pm $ and $\psi\in \mathcal B$
\begin{align}
\|R(z)\psi\|_{\mathcal B^*}\leq C_4\norm{\psi}_{\mathcal B}.
\label{eq:171216}
\end{align}

To bound the second term on the left-hand side of \eqref{eq:lap-besov-spacebnd}
it suffices to note that 
\begin{align}
p^*\Theta'p
\le C_5\mathop{\mathrm{Re}}\bigl(\Theta'(H-z)\bigr)+C_6\Theta'.
\label{eq:17121621}
\end{align}
By \eqref{eq:171216}, \eqref{eq:17121621} and the Cauchy--Schwarz inequality we obtain the assertion.
\end{proof}

\section{Proof of microlocal resolvent bounds and applications}
\label{sec:Holder continuity of boundary value of resolvent}

We prove Theorem~\ref{microLoc} in Section~\ref{subsec:1712161},
and Corollaries~\ref{cor:holdCon} and \ref{cor:Som} in Section~\ref{subsec:1712162}.

\subsection{Microlocal resolvent bounds}\label{subsec:1712161}

Here we prove Theorem~\ref{microLoc} partly  by using the scheme of  \cite{GIS}.
The proof consists of two  parts, according to spectral components of the radial momentum $B$.

We first consider the components of high radial momentum.  Using the 
functions  $\bar\chi_m\in C^\infty(\mathbb R)$ from
\eqref{eq:1711021}, $m\in \N$ large,  we first introduce $F_m\in\mathcal F^0$ by
\begin{align}
  F_{m}(b)=\bar\chi_m(|b|);\, b\in\mathbb R.
\label{eq:17112421}
\end{align}

\begin{lemma}\label{lem:17112122}
Suppose 
{Condition~\ref{cond:smooth2wea3n1} or Condition \ref{cond:smooth2wea3n12}}
and let $R\ge 1$ be sufficiently large.
Let $I\subset \mathbb R$ be a compact interval and $s\in\mathbb R$. 
Then for all  $m$ large  enough there exists $C>0$ such that for all $z\in I_\pm$,
as a quadratic form on $(H-\mathrm i)^{-1}L^2_s$,
\begin{align*}
F_m(B)r^{2s}F_m(B)
\le 
Cr^{2s-1}
+\mathop{\mathrm{Re}}\bigl(F_m(B)Q(H-z)\bigr)
,
\end{align*}
where $Q=Q_z\in \mathcal L(L^2_s,L^2_{-s})$ 
is bounded uniformly in $z\in I_\pm$.
\end{lemma}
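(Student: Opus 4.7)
The plan is to combine the identity $B^2 = L - r^{1/2}(\mathbf D B)r^{1/2}$ from Lemma~\ref{lem:17100620} with the spectral support of $F_m$: from $F_m(b)^2 \leq 2^{-2(m-1)} b^2 F_m(b)^2$ one gets the operator inequality $F_m(B)^2 \leq 2^{-2(m-1)} F_m(B) B^2 F_m(B)$, so that $F_m(B) r^{2s} F_m(B)$ is controlled (up to commutator errors of order $r^{2s-1}$ supplied by Lemmas~\ref{lem:171113}, \ref{lem:171113b} and \ref{lem:17111416}) by $2^{-2(m-1)}$ times $F_m(B) r^s L r^s F_m(B) - F_m(B) r^{s+1/2}(\mathbf D B) r^{s+1/2} F_m(B)$. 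The desired $r^{-1}$ gain is built into these weights: replacing $B^2$ by $L-r^{1/2}(\mathbf D B)r^{1/2}$ converts one factor of $B^2$ into the Mourre quantity $L$ (bounded above by $H$) at the cost of the two $r^{-1/2}$ factors flanking $\mathbf D B = \mathrm i[H,B]$, which will be absorbed into $(H-z)$-type terms after integration by parts.

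First I would insert a Mourre cutoff $f \in C^\infty_{\mathrm c}(\mathbb R)$ with $f = 1$ on an open neighbourhood of $I$ via $1 = f(H)^2 + (1 - f(H)^2)$. Since $1 - f^2$ vanishes on $I$, one writes $1 - f(H)^2 = g_z(H)(H-z)$ with $g_z \in \mathcal F^{-1}$ uniformly in $z \in I_\pm$; pushing the resulting $(H-z)$-factor to the right past $F_m(B)$, $r^{2s}$ and $g_z(H)$ by Lemmas~\ref{lem:171113} and \ref{lem:17111416} produces the $\mathop{\mathrm{Re}}(F_m(B) Q_1 (H-z))$ piece of the conclusion, with $Q_1 \in \mathcal L(L^2_s, L^2_{-s})$ uniformly in $z$, plus commutator error terms of order $r^{2s-1}$ (since $[H, F_m(B)]$, $[H, r^{2s}]$ and $[F_m(B), r^{2s}]$ are all of order $r^{-1}$ relative to the leading weight $r^{2s}$). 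The remaining main piece $F_m(B) f(H) r^{2s} f(H) F_m(B)$, again modulo $r^{2s-1}$-commutators, is bounded by $F_m(B)^2 \leq 2^{-2(m-1)} F_m(B) B^2 F_m(B)$ followed by the substitution for $B^2$. For the $L$-contribution I would use the explicit expression \eqref{eq:mourre comm} together with boundedness of $\tilde h$ and the $H_0$-form-smallness of $V$ to derive the form bound $\pm L \leq C(H+c)$ on $\mathcal H^1$ for constants $C,c>0$; combined with $f(H)(H+c) \in \mathcal L(\mathcal H)$ this contributes at most $C \cdot 2^{-2(m-1)} F_m(B) r^{2s} F_m(B) + O(r^{2s-1})$, a fraction of the left-hand side that can be absorbed once $m$ is chosen so large that $C \cdot 2^{-2(m-1)} < 1/2$. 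The $\mathbf D B$-contribution would be handled by writing $\mathbf D B = \mathrm i[(H-z), B]$ and integrating by parts, producing boundary factors of $(H-z)$ that fit into a second $\mathop{\mathrm{Re}}(F_m(B) Q_2 (H-z))$ piece, plus commutators of $H$ with weights and bounded functions of $B$, all of which are of order $r^{2s-1}$.

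The hard part will be the systematic commutator bookkeeping: each commutation of $F_m(B)$, $r^s$, $f(H)$, or $g_z(H)$ past another such factor, or past $H$, must yield an operator of order at most $r^{2s-1}$, and all intermediate expressions must be genuinely bounded on the relevant weighted Sobolev spaces uniformly in $z \in I_\pm$, so that the final operator $Q = Q_1 + Q_2$ lies in $\mathcal L(L^2_s, L^2_{-s})$ uniformly. This is largely routine given the commutator formulas of Section~\ref{sec:17093015}, but the sheer number of terms is considerable; the cleanest implementation is to organise the argument as a single calculation on the quadratic form $F_m(B) r^{2s} F_m(B)$ and to collect error terms along the way, in the spirit of the propagation-observable calculations of Sections~\ref{sec:Proof} and \ref{sec:LAP}, but with the simpler weight $r^{2s}$ replacing the exponential or $\Theta$-type weights used there.
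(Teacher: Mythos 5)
Your skeleton is the same as the paper's: the support bound $F_m(b)^2\le 2^{-2(m-1)}b^2F_m(b)^2$, an energy cut-off $f(H)$ with the $1-f(H)^2$ piece converted into a $\mathop{\mathrm{Re}}\bigl(F_m(B)Q(H-z)\bigr)$ term, commutator bookkeeping at order $r^{2s-1}$, and absorption of a $2^{-2m}$-small multiple of the left-hand side. Where you diverge is in how $B^2$ is converted into something controlled by $H$, and that is where there is a genuine gap. The paper never touches $L$ or $\mathbf D B$: it simply uses the elementary kinetic bound $f(H)B^2f(H)\le C\mathop{\mathrm{Re}}\bigl(f(H)(H-z)f(H)\bigr)+C'f(H)^2$ (from $B^2\lesssim H_0+1\lesssim H+c$ by form-smallness of $V$), so the main mass goes straight into the $(H-z)$ term and the $f(H)^2$ remainder is the $2^{-2m}$-absorbable piece. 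Your substitution $B^2=L-r^{1/2}(\mathbf D B)r^{1/2}$ is an unnecessary detour: for the $L$ block you end up invoking exactly the same bound $\pm L\le C(H+c)$, while the extra $\mathbf D B$ block creates a new problem that your proposed treatment does not solve.

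Concretely, the $\mathbf D B$ block is $2^{-2(m-1)}F_m(B)r^{s+1/2}f(H)(\mathbf D B)f(H)r^{s+1/2}F_m(B)$ (up to genuine $O(r^{2s-1})$ commutators). Since $f(H)(\mathbf D B)f(H)=O(r^{-1})$, this block is of order $2^{-2m}r^{2s}$ --- the \emph{same} order as the left-hand side, not $r^{2s-1}$ as you claim; it is not an error term. Your integration by parts $\mathbf D B=\mathrm i[(H-z),B]$ does not rescue it: after undoing the commutator the candidate $Q_2$ contains $r^{s+1/2}f(H)Bf(H)r^{s+1/2}F_m(B)$, which is of order $r^{2s+1}$ and hence lies in $\mathcal L(L^2_{s+1},L^2_{-s})$ but \emph{not} in $\mathcal L(L^2_s,L^2_{-s})$ as the lemma requires; moreover the leftover commutators $F_m(B)r^{s+1/2}f(H)Bf(H)\bigl[(H-z),r^{s+1/2}F_m(B)\bigr]$ are again of order $2^{-2m}r^{2s}$ and, having lost the symmetric $F_m(B)\cdots F_m(B)$ structure, cannot be absorbed. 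The repair is easy: either drop the substitution and bound $B^2$ by $\mathop{\mathrm{Re}}(H-z)$ directly as the paper does, or keep it but treat the $\mathbf D B$ block exactly like your $L$ block --- since $\pm\mathop{\mathrm{Re}}\bigl(X^*WX\bigr)\le\|W\|X^*X$ with $X=F_m(B)r^s$ and $W=r^{1/2}f(H)(\mathbf D B)f(H)r^{1/2}=O(r^0)$ with an $m$-independent norm, the block is $\le C2^{-2m}\bigl(F_m(B)r^{2s}F_m(B)+O(r^{2s-1})\bigr)$ and is absorbed for large $m$.
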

\begin{proof}
Fix real $f\in C^\infty_{\mathrm c}(\mathbb R)$ such that 
$f=1$ in a neighbourhood of $I$ 
and decompose with $T=F_m(B)r^{2s}F_m(B)$
\begin{align}
\begin{split}
T
=
f(H)Tf(H)
+\mathop{\mathrm{Re}}\bigl((1+f(H))T(1-f(H))\bigr).
\end{split}
\label{eq:171121}
\end{align}
We bound each term on the right-hand side of \eqref{eq:171121}.
The first term is estimated as 
\begin{align*}
f(H)Tf(H)
&
\le 
C_12^{-2m}f(H)r^{s}F_m(B)B^2F_m(B)r^{s}f(H)
+C_2(m)r^{2s-1}
\\&
\le 
C_12^{-2m} r^{s}F_m(B)f(H)B^2f(H)F_m(B)r^{s}
+C_3(m)r^{2s-1}
\\&
\le 
C_42^{-2m} \mathop{\mathrm{Re}}\bigl(r^{s}F_m(B)f(H)(H-z)f(H)F_m(B)r^{s}\bigr)
\\&\phantom{{}={}}{}
+C_52^{-2m}r^{s}F_m(B)f(H)^2F_m(B)r^{s}
+C_3(m)r^{2s-1}
\\&
\le 
C_42^{-2m} \mathop{\mathrm{Re}}\bigl(F_m(B)r^{s}f(H)F_m(B)r^{s}f(H)(H-z)\bigr)
\\&\phantom{{}={}}{}
+C_52^{-2m}f(H)Tf(H)
+C_6(m)r^{2s-1}
.
\end{align*}
\begin{subequations}
  Therefore if we $m$ is large enough  we obtain
\begin{align}
f(H)Tf(H)
&\le 
C_7r^{2s-1}
+\mathop{\mathrm{Re}}\bigl(F_m(B)Q_1(H-z)\bigr)
\label{eq:171121b}
\end{align}
with
$$Q_1=C_82^{-2m}r^{s}f(H)F_m(B)r^{s}f(H).$$
As for the second term of \eqref{eq:171121} we can
estimate 
\begin{align}
\mathop{\mathrm{Re}}\bigl((1+f(H))T(1-f(H))\bigr)
\le 
C_9r^{2s-1}+
\mathop{\mathrm{Re}}\bigl(F_m(B)Q_2(H-z)\bigr),
\label{eq:171121bb}
\end{align}
where 
\begin{align*}
Q_2=(1+f(H))r^{2s}F_m(B)(1-f(H))R(z).
\end{align*}

The lemma  follows from  \eqref{eq:171121}, \eqref{eq:171121b}, 
 and \eqref{eq:171121bb}.
\end{subequations}
\end{proof}

\begin{corollary}\label{cor:1711241720}
Suppose 
{Condition~\ref{cond:smooth2wea3n1} or Condition \ref{cond:smooth2wea3n12}.
Let} $R\ge 1$ be sufficiently large.
Let $I\subset \mathbb R\setminus (\sigma_{\mathrm{pp}}(H)\cup\mathcal T(H))$ be a compact interval 
and $s\in [-1/2,0)$. Then for all  $m$ large  enough 
there exists $C>0$ such for all  $z\in I_\pm$
and $\psi\in L^2_{s}$
\begin{align*}
\|F_m(B)R(z)\psi\|_{L^2_{s}}
\le 
C\|\psi\|_{\mathcal B}
.
\end{align*}
\end{corollary}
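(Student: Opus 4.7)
The plan is to extract the bound from Lemma~\ref{lem:17112122} by testing the quadratic form inequality on $\phi=R(z)\psi$, using $(H-z)\phi=\psi$ to convert the right-hand side into quantities controlled by the LAP bound of Theorem~\ref{thmlapBnd}. Since a priori $R(z)\psi$ need not satisfy $(H-\mathrm i)R(z)\psi\in L^2_s$ for $s\in[-1/2,0)$, the first thing I would do is regularize: replace $z$ by $z_\epsilon=z\pm\mathrm i\epsilon$, $\epsilon>0$, and test on $\phi_\epsilon:=R(z_\epsilon)\psi\in\vH$, for which $(H-\mathrm i)\phi_\epsilon=\psi+(z_\epsilon-\mathrm i)\phi_\epsilon\in L^2\subset L^2_s$ (using $s\le 0$), so the hypotheses of Lemma~\ref{lem:17112122} are met.

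Inserting $\phi_\epsilon$ into the quadratic form inequality and using $(H-z_\epsilon)\phi_\epsilon=\psi$ produces
\begin{align*}
\|F_m(B)\phi_\epsilon\|_{L^2_s}^2
\le C\|r^{s-1/2}\phi_\epsilon\|^2_{\vH}
+\mathop{\mathrm{Re}}\langle F_m(B)\phi_\epsilon,Q_{z_\epsilon}\psi\rangle.
\end{align*}
For the first term, a routine dyadic computation against the Besov definition shows that the weight $r^{s-1/2}$, with $s-1/2\in[-1,-1/2)$, satisfies $\|r^{s-1/2}\phi\|_{\vH}^2\le C\|\phi\|_{\mathcal B^*}^2$; invoking Theorem~\ref{thmlapBnd} uniformly in $z_\epsilon\in I_\pm$ then gives $\|r^{s-1/2}\phi_\epsilon\|^2\le C'\|\psi\|_{\mathcal B}^2$. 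For the second term, the uniform boundedness $Q_{z_\epsilon}\in\vL(L^2_s,L^2_{-s})$ combined with the embedding $\mathcal B\subset L^2_{1/2}\subset L^2_s$ (valid since $s\le 1/2$) and Cauchy--Schwarz in the $L^2_s$--$L^2_{-s}$ duality yield
\begin{align*}
|\langle F_m(B)\phi_\epsilon,Q_{z_\epsilon}\psi\rangle|
\le \|F_m(B)\phi_\epsilon\|_{L^2_s}\cdot C\|\psi\|_{\mathcal B}.
\end{align*}

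Setting $X_\epsilon=\|F_m(B)\phi_\epsilon\|_{L^2_s}$, the two estimates combine to the quadratic inequality $X_\epsilon^2\le C_1\|\psi\|_{\mathcal B}^2+C_2X_\epsilon\|\psi\|_{\mathcal B}$, which forces $X_\epsilon\le C\|\psi\|_{\mathcal B}$ uniformly in $\epsilon$. Finally, I would pass to the limit $\epsilon\to 0_+$ by Fatou's lemma, using that $\phi_\epsilon\to R(z)\psi$ in sufficiently weak weighted $L^2$-topology (a standard consequence of LAP) and that $F_m(B)$ is of order $0$. The main obstacle I anticipate is not any single nontrivial calculation but rather the bookkeeping around the $\epsilon$-regularization: one must ensure weak convergence of $F_m(B)\phi_\epsilon$ in $L^2_s$, together with lower semicontinuity of the $L^2_s$-norm along this sequence, since $L^2_s$ for $s\ge -1/2$ is stronger than the natural space $\mathcal B^*$ in which LAP-type convergence is most easily obtained.
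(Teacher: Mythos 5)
Your argument is correct and is essentially the paper's proof, which simply combines Lemma~\ref{lem:17112122}, Theorem~\ref{thmlapBnd} and the Cauchy--Schwarz inequality exactly as you do (testing the form on $R(z)\psi$, bounding $\|r^{s-1/2}R(z)\psi\|$ via $\mathcal B^*\subset L^2_{s-1/2}$ and the LAP bound, and absorbing the cross term). The $\epsilon$-regularization and Fatou step are superfluous: every $z\in I_\pm$ already has $\Im z\neq 0$, so $R(z)\psi\in\mathcal H^2$ and $(H-\mathrm i)R(z)\psi\in\mathcal H\subset L^2_s$ directly, and no boundary limit is taken in the statement.
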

\begin{proof}
The assertion is a consequence of Lemma~\ref{lem:17112122}, 
Theorem~\ref{thmlapBnd} and the Cauchy--Schwarz inequality. 
\end{proof}

We next study  middle components 
of the radial momentum for the outgoing and the incoming resolvents
using  a  modification of an
  `induction start' given in \cite{GIS}.
  We use  the function $\chi$ of \eqref{eq:14.1.7.23.24} and define 
 $\chi_\epsilon(t)=\chi(t/\epsilon)$ for
 $\epsilon\in (0,1)$.

\begin{lemma} \label{microLocb} Suppose 
{Condition~\ref{cond:smooth2wea3n1} or Condition \ref{cond:smooth2wea3n12}}.
Let $E\in\mathbb R\setminus \mathcal T(H)$ and  $0<\sigma'
<\sigma<\gamma(E)$.
 Take $R\ge 1$  large  and $\epsilon>0$ small. Consider 
 for real $f\in
 C_\c^\infty(\R)$ and  $\kappa'\in(0,\kappa)$
    \begin{align}\label{eq:171118v}
      \begin{split}
      P_\pm=\mp f(H)
      r^{\kappa'}F_\epsilon(\pm B)r^{\kappa'}f(H);\quad
F_\epsilon(b)=(\sigma'+3\epsilon-b)^{2\kappa'}\chi^2_\epsilon(b-\sigma').  
      \end{split}
\end{align}
 There exist $c,C,>0$,  real 
$f\in C^\infty_{\mathrm c}(\mathbb R)$ and a neighbourhood 
$I\subset\mathbb R$ of $E$ (possibly depending on
$\kappa'\in(0,\kappa)$) 
such that for all  $z\in I_\pm$, as quadratic forms  on $(H-\mathrm i)^{-1}L^2_{1/2+\kappa'}$,
\begin{align*}
2\mathop{\mathrm{Im}}\bigl(P_\pm(H-z)\bigr)
&\ge 
c\chi_\epsilon(\pm B-\sigma')r^{-1+2\kappa'}\chi_\epsilon(\pm B-\sigma')\\&
-Cr^{-1-2(\kappa-\kappa')}
-\mathop{\mathrm{Re}}\bigl(\chi_\epsilon(\pm B-\sigma')Q (H-z)\bigr),
\end{align*} 
where $Q=Q_z\in \mathcal L(L^2_{1/2+\kappa'},L^2_{3/2-\kappa'})$ 
is bounded uniformly in $z\in I_\pm$.
\end{lemma}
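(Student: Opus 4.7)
My plan is to prove the $P_+$ case (the $P_-$ case is completely analogous by the symmetry $B\mapsto -B$, noting that the Mourre bound of Corollary~\ref{cor:171007} depends on $B$ only through $B^2$). Since $f(H)$ commutes with $H$, I first write $2\mathop{\mathrm{Im}}(P_+(H-z)) = \mathbf{D}P_+ - 2(\mathop{\mathrm{Im}}z)P_+$, and note that for $z\in I_+$ the latter term is non-negative (as $P_+\le 0$ and $\mathop{\mathrm{Im}}z > 0$) so can be dropped. Then $\mathbf{D}P_+ = -f(H)\mathbf{D}[r^{\kappa'}F_\epsilon(B)r^{\kappa'}]f(H)$, and the Leibniz rule yields
\begin{align*}
-\mathbf{D}[r^{\kappa'}F_\epsilon(B)r^{\kappa'}] = -2\mathop{\mathrm{Re}}\bigl[(\mathbf{D}r^{\kappa'})F_\epsilon(B)r^{\kappa'}\bigr] - r^{\kappa'}\mathbf{D}F_\epsilon(B)r^{\kappa'}.
\end{align*}

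For the first term, a direct commutator computation as in Lemma~\ref{lem:171113}\eqref{item:180128} gives $\mathbf{D}r^{\kappa'} = \kappa'\mathop{\mathrm{Re}}(r^{\kappa'-1}B) + O(r^{\kappa'-2})$, so after symmetrization
$$-2\mathop{\mathrm{Re}}\bigl[(\mathbf{D}r^{\kappa'})F_\epsilon(B)r^{\kappa'}\bigr] = -2\kappa' r^{\kappa'-1/2}BF_\epsilon(B)r^{\kappa'-1/2} + O(r^{2\kappa'-2}).$$
For the second term I will use Lemma~\ref{lem:17111416} together with the higher commutator bound of Lemma~\ref{lem:17111515} to represent $\mathbf{D}F_\epsilon(B) = F'_\epsilon(B)\mathbf{D}B + O(r^{-1-2\kappa})$, and then substitute the Mourre lower bound $f(H)\mathbf{D}B f(H)\ge f(H)r^{-1/2}(\sigma^2-B^2)r^{-1/2}f(H) - Cr^{-2}$ from Corollary~\ref{cor:171007}. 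Since $F'_\epsilon\le 0$, this substitution reverses the inequality and produces
$$-r^{\kappa'}\mathbf{D}F_\epsilon(B)r^{\kappa'} \ge r^{\kappa'-1/2}(-F'_\epsilon(B))(\sigma^2-B^2)r^{\kappa'-1/2} + O(r^{-1-2(\kappa-\kappa')}),$$
where the error $r^{2\kappa'}\cdot O(r^{-1-2\kappa}) = O(r^{-1-2(\kappa-\kappa')})$ matches the one required.

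Combining the two contributions, I reduce the problem to the pointwise estimate for the function
$$\tilde G(b) := -2\kappa' bF_\epsilon(b) + (-F'_\epsilon(b))(\sigma^2-b^2).$$
On $b\le \sigma'+\epsilon$ (where $\chi_\epsilon(b-\sigma') = 1$ and $\chi'_\epsilon(b-\sigma') = 0$), setting $u = \sigma'+3\epsilon-b\ge 2\epsilon$, an explicit algebraic simplification gives
$$\tilde G(b) = 2\kappa' u^{2\kappa'-1}\bigl[\sigma^2 - b(\sigma'+3\epsilon)\bigr] = 2\kappa'\bigl\{[\sigma^2-(\sigma'+3\epsilon)^2]u^{2\kappa'-1} + (\sigma'+3\epsilon)u^{2\kappa'}\bigr\},$$
which for $\epsilon$ small enough (so that $\sigma^2-(\sigma'+\epsilon)(\sigma'+3\epsilon)>0$) is bounded below by a positive constant $c_0 = c_0(\sigma,\sigma',\epsilon,\kappa')$ uniformly in $b\le \sigma'+\epsilon$. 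On the transition region $\sigma'+\epsilon\le b\le \sigma'+2\epsilon$, the additional strictly-negative contribution from the $\chi_\epsilon'$ part of $F'_\epsilon$ enhances $-F'_\epsilon$, keeping $\tilde G$ bounded below by $c_0\chi_\epsilon(b-\sigma')^2$; outside supp $\chi_\epsilon$ the inequality is trivial. This pointwise bound, upgraded to operators and multiplied by $r^{-1}$ from the weight arithmetic above, produces precisely $c\,\chi_\epsilon(B-\sigma')r^{-1+2\kappa'}\chi_\epsilon(B-\sigma')$ modulo commutator errors of order $O(r^{-1-2(\kappa-\kappa')})$.

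Finally, the factors of $f(H)^2$ on either side are removed by writing $1-f(H)^2 = f_1(H)(H-\lambda)$ with $\lambda\in I$, exactly as in the treatment around \eqref{eq:f1} in Lemma~\ref{lem:14.10.4.1.17ffaabb}, which generates the $\mathop{\mathrm{Re}}(\chi_\epsilon(B-\sigma')Q(H-z))$ remainder (with $Q$ bounded between the indicated weighted $L^2$ spaces, uniformly in $z\in I_\pm$ since $I$ is disjoint from thresholds and from eigenvalues). \textit{The principal obstacle} is the algebraic verification that $\tilde G\ge c_0\chi_\epsilon^2$ at the pointwise level — although the leading term $-F'_\epsilon(B)(\sigma^2-B^2)$ provided by Mourre fails to be positive on $|B|<\sigma$ (where both $-F'_\epsilon$ and $\sigma^2-B^2$ are non-negative and their product enters with the wrong overall sign in the combination), it is precisely the cross-term $-2\kappa' BF_\epsilon(B)$ coming from differentiation of the weight $r^{\kappa'}$ that restores positivity. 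Keeping track of all commutator errors and ensuring they fit into $O(r^{-1-2(\kappa-\kappa')})$ plus an $(H-z)$-remainder — in particular the bound \eqref{eq:1710081345} on $i[\mathbf{D}B,B]$ that drives the exponent $\kappa$ — is the most technical piece.
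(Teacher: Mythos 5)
Your proposal is correct, and its skeleton coincides with the paper's proof: the same observable, dropping $-2(\mathop{\mathrm{Im}}z)P_\pm\ge 0$ by sign, the Leibniz splitting of $\mathbf D P_\pm$ into a weight term coming from $\mathbf D r^{\kappa'}$ and a $\mathbf D F_\epsilon(B)$ term, insertion of the Mourre bound of Corollary~\ref{cor:171007}, second-commutator errors of size $O(r^{2\kappa'-1-2\kappa})$ via Lemmas~\ref{lem:17111416} and \ref{lem:17111515}, and removal of $f(H)^2$ producing the $\mathop{\mathrm{Re}}(\chi_\epsilon(\pm B-\sigma')Q(H-z))$ remainder. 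The one genuine difference is the endgame. The paper bounds the middle factor $\sigma^2-B(\sigma'+3\epsilon)$ below by the constant $\sigma^2-(\sigma'+2\epsilon)(\sigma'+3\epsilon)$ on $\supp\chi_\epsilon(\cdot-\sigma')$, arriving at $c_1T^*T$ in \eqref{eq:Tbnd} with $T$ carrying the factor $(\sigma'+3\epsilon-B)^{\kappa'-1/2}$, which degenerates as $b\to-\infty$; it therefore needs the additional step \eqref{eq:lowST} (energy localization via $\tilde f(H)$) to pass from $T^*T$ to $\chi_\epsilon(B-\sigma')r^{-1+2\kappa'}\chi_\epsilon(B-\sigma')$. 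Your splitting $u^{2\kappa'-1}[\sigma^2-b(\sigma'+3\epsilon)]=[\sigma^2-(\sigma'+3\epsilon)^2]u^{2\kappa'-1}+(\sigma'+3\epsilon)u^{2\kappa'}$ retains the growing term $u^{2\kappa'}$, so $\tilde G\ge c_0\chi_\epsilon^2(\cdot-\sigma')$ holds pointwise on all of $\R$ and the desired operator bound follows at once from the spectral theorem; this cleanly bypasses \eqref{eq:lowST}. Two points should be made explicit in a write-up: the passage from $\mathbf D F_\epsilon(B)$ to ``$F'_\epsilon(B)\mathbf D B$'' must be carried out in the symmetrized form $-\theta_\epsilon(B)(\mathbf D B)\theta_\epsilon(B)$ with $\theta_\epsilon=\sqrt{-F_\epsilon'}$ an admissible square root and with $H$ replaced by $g(H)=H\tilde f(H)$ so that Lemmas~\ref{lem:17111416} and \ref{lem:17111515} apply (exactly the paper's device); and since $F_\epsilon\in\mathcal F^{2\kappa'}$ is unbounded as $b\to-\infty$, one should note that all expressions are controlled by $\langle B\rangle^{1+2\kappa'}f(H)$, which is bounded.
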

\begin{proof}  The proof is  similar to those of Lemmas~\ref{lem:14.10.4.1.17ffaabb} and \ref{lemma:12.7.2.7.9},
and we skip some of the details.
In particular we use Lemmas~\ref{lem:171113}, \ref{lem:171113b} and
\ref{lem:17111416} without references again. 
Fix  $E=\lambda\in\mathbb R\setminus \mathcal T(H)$ and $\sigma<\gamma(E)$ 
and 
choose then  $R\ge 1$ and a neighbourhood $\mathcal U\subset\mathbb R$ of $\lambda$
in accordance with Corollary~\ref{cor:171007}.
Let $f\in C^\infty_{\mathrm c}(\mathcal U)$ be a real-valued function 
such that $f=1$ in an open neighborhood $\tilde I\subset \mathbb R$ of $\lambda$.
Let $I\subset \tilde I$ be a compact neighbourhood of $\lambda$. Let
$\kappa'\in(0,\kappa)$ and $\sigma'\in(0,\sigma)$.
With these quantities fixed  we now  consider for small $\epsilon>0$
the  operator $P_+$ defined by  \eqref{eq:171118v} (treating   only the upper sign). 

We use the operator $\tilde H=g(H)=H\tilde f(H)$ from Step IV of the
proof of Lemma \ref{lem:14.10.4.1.17ffaabb} and let $\tilde \bD$
denote the corresponding Heisenberg derivative.  We compute with
$\theta_\epsilon=\sqrt{-F_\epsilon'}$ (and using the notation
\eqref{eq:1712022})
\begin{align*}
\begin{split}
  f(H)r^{\kappa'}\parb{\tilde \bD F_\epsilon(B)}&r^{\kappa'}f(H)\\
= -r^{\kappa'}\theta_\epsilon(B)&f(H)(\tilde \bD
  B)f(H)\theta_\epsilon(B)r^{\kappa'}+
   O(r^{2\kappa'-2\kappa-1}).
\end{split}
\end{align*} Note  that 
\begin{align*}
  -\tfrac12(\sigma'+3\epsilon-b)^{1-2\kappa'}F_\epsilon'(b)=\kappa'\chi^2_\epsilon(b-\sigma')-(\sigma'+3\epsilon-b)(\chi_\epsilon\chi'_\epsilon)(b-\sigma')\geq0.
\end{align*}
 We also compute
   \begin{align*}
   \tilde \bD
   {r}^{\kappa'}=\kappa 'g'(H)r^{\kappa'-1}B+O(r^{\kappa'-2}).
 \end{align*}

Introducing  $T=(\sigma'+3\epsilon-B)^{\kappa'-1/2}\chi_\epsilon(B-\sigma')r^{\kappa'-1/2}f(H)$ this leads to the lower bound
\begin{align}\label{eq:Tbnd}
  \begin{split} &2\mathop{\mathrm{Im}}\bigl(P_+(H-z)\bigr)\\
  &\geq \bD P_+\\&\geq 2\kappa'T^*\parb{\sigma^2-B^2-B(\sigma'+3\epsilon-B)}T+O(r^{2\kappa'-2\kappa-1})\\
&\geq 2\kappa'T^*\parb{\sigma^2-(\sigma'+2\epsilon)(\sigma'+3\epsilon)}T+O(r^{2\kappa'-2\kappa-1})\\
&= c_1T^*T+O(r^{2\kappa'-2\kappa-1});
\quad
c_1=2\kappa'\parb{\sigma^2-(\sigma'+2\epsilon)(\sigma'+3\epsilon)}>0. 
\end{split}
\end{align}

Introducing 
$S=\chi_\epsilon(B-\sigma')r^{\kappa'-1/2}$ we claim that for some $c_2,C_2>0$
\begin{align}\label{eq:lowST}
  T^*T\geq c_2S^*S-C_2 r^{2\kappa'-2}-\mathop{\mathrm{Re}}\bigl(\chi_\epsilon(B-\sigma')Q (H-z)\bigr),
\end{align}  where $Q=Q_z\in \mathcal
L(L^2_{1/2+\kappa'},L^2_{3/2-\kappa'})$ is bounded uniformly in $z\in
I_+$.  The combination of \eqref{eq:Tbnd} and
\eqref{eq:lowST} completes the proof of the lemma.

To show \eqref{eq:lowST} we first remove the  factor
$(\sigma'+3\epsilon-B)^{\kappa'-1/2}$ of  $T$. 
 We write $Sf(H)=Sf(H)\tilde{ f}(H)$ and note that
\begin{align*}
  [Sf(H),\tilde{f}(H)]=O(r^{\kappa'-3/2}).
\end{align*}
 Using the notation $\|\cdot\|_s=\|\cdot\|_{L^2_s}$ this leads to
 \begin{align*}
   \norm{Sf(H)\phi}&\leq \norm{\tilde f(H)Sf(H)\phi}+C_1\norm{\phi}_{\kappa'-3/2}\\
&\leq C'_1\norm{T\phi}+C_1\norm{\phi}_{\kappa'-3/2},\, \phi\in L^2_{\kappa'-1/2},
 \end{align*} and therefore 
\begin{align}\label{eq:lowSTb}
  T^*T\geq c_3 f(H)S^*Sf(H)-C_3 r^{2\kappa'-3}.
\end{align}

Next we remove the factors $f(H)$ of \eqref{eq:lowSTb} writing as in
\eqref{eq:171121} and using the notation  $ P_z=S^*S(1-f(H))R(z)$
\begin{align}\label{eq:bmang}
  \begin{split}
  &f(H)S^*Sf(H)\\ & =
  S^*S-\mathop{\mathrm{Re}}\bigl((1+f(H))S^*S(1-f(H))\bigr)
  \\ & = S^*S-\mathop{\mathrm{Re}}\bigl(
(1+f(H))P_z(H-z)\bigr)\\
&\geq
S^*S-{\mathrm{Re}}\bigl(\chi_\epsilon(B-\sigma')Q_z(H-z)\bigr)-C_4
r^{2\kappa'-2},
\end{split}
\end{align} where $Q=Q_z\in \mathcal
L(L^2_{1/2+\kappa'},L^2_{3/2-\kappa'})$ is bounded uniformly in $z\in
I_+$.  

 We obtain
\eqref{eq:lowST} from \eqref{eq:lowSTb} and \eqref{eq:bmang}.
\end{proof}

\begin{corollary}\label{cor:1711241722}
Suppose 
Condition~\ref{cond:smooth2wea3n1} or Condition \ref{cond:smooth2wea3n12}.
Let $E\in\mathbb R\setminus (\sigma_{\mathrm{pp}}(H)\cup \mathcal T(H))$, $\sigma'
\in (0, \gamma(E))$, 
and take $R\ge 1$ sufficiently large and $\epsilon>0$ sufficiently
small. There exist for all $\kappa'\in(0,\kappa)$ a constant $C>0$ and a neighbourhood $I\subset \mathbb R$ of $E$ 
such that for all $z\in I_\pm$
and $\psi\in L^2_{1/2+\kappa'}$
\begin{align*}
\|\chi_\epsilon(\pm B-\sigma')R(z)\psi\|_{L^2_{-1/2+\kappa'}}
\le 
C\|\psi\|_{L^2_{1/2+\kappa'}},
\end{align*}
respectively.
\end{corollary}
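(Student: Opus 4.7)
The plan is to evaluate the quadratic form bound of Lemma~\ref{microLocb} on $\phi=R(z)\psi$, substitute $(H-z)\phi=\psi$, and extract the required estimate by Cauchy--Schwarz, Young's inequality, and the LAP bound from Theorem~\ref{thmlapBnd}.

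For given $\kappa'\in(0,\kappa)$, I would first choose $c,C,f,Q$ and a neighbourhood $I$ of $E$ in accordance with Lemma~\ref{microLocb}, shrinking $I$ if necessary so that $I\subset\mathbb R\setminus(\sigma_{\mathrm{pp}}(H)\cup\mathcal T(H))$ (which makes Theorem~\ref{thmlapBnd} applicable on $I$). For $\psi\in L^2_{1/2+\kappa'}$ and $z\in I_\pm$, set $\phi=R(z)\psi$; then $(H-z)\phi=\psi\in L^2_{1/2+\kappa'}\subset\mathcal B$, so Theorem~\ref{thmlapBnd} gives $\phi\in\mathcal B^*$ with $\|\phi\|_{\mathcal B^*}\le C\|\psi\|_{L^2_{1/2+\kappa'}}$.

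Next, evaluating Lemma~\ref{microLocb} in the state $\phi$ and rearranging yields
\begin{equation*}
c\|\chi_\epsilon(\pm B-\sigma')\phi\|^2_{L^2_{-1/2+\kappa'}}\le 2|\langle\phi,P_\pm\psi\rangle|+C\|\phi\|^2_{L^2_{-1/2-(\kappa-\kappa')}}+|\langle\phi,\chi_\epsilon(\pm B-\sigma')Q\psi\rangle|.
\end{equation*}
The second term is controlled directly by the LAP via the embedding $\mathcal B^*\subset L^2_{-1/2-(\kappa-\kappa')}$. The third term is handled by Cauchy--Schwarz using the boundedness $Q\in\mathcal L(L^2_{1/2+\kappa'},L^2_{3/2-\kappa'})\subset\mathcal L(L^2_{1/2+\kappa'},L^2_{1/2-\kappa'})$ and the self-adjointness of $\chi_\epsilon(\pm B-\sigma')$, producing a factor of $\|\chi_\epsilon(\pm B-\sigma')\phi\|_{L^2_{-1/2+\kappa'}}\|\psi\|_{L^2_{1/2+\kappa'}}$ absorbable by Young's inequality.

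For the main term $|\langle\phi,P_\pm\psi\rangle|$, the key is an asymmetric Cauchy--Schwarz splitting: the two factors of $r^{\kappa'}$ in $P_\pm=\mp r^{\kappa'}f(H)F_\epsilon(\pm B)r^{\kappa'}f(H)$ must be repartitioned via the insertion $r^{2\kappa'}=r^{-1/2+\kappa'}\cdot r^{1/2+\kappa'}$, so that after commuting $f(H)$, $r$ and the cutoff $\chi_\epsilon(\pm B-\sigma')$ through each other (using Lemmas~\ref{lem:171113} and~\ref{lem:171113b}, with commutator remainders strictly of lower order and thus controlled by $\|\phi\|_{\mathcal B^*}$ and $\|\psi\|_{L^2_{1/2+\kappa'}}$), and using that $(\sigma'+3\epsilon\mp B)^{2\kappa'}$ is bounded on the support of $\chi_\epsilon(\pm B-\sigma')$, one obtains
\begin{equation*}
|\langle\phi,P_\pm\psi\rangle|\le C\|\chi_\epsilon(\pm B-\sigma')\phi\|_{L^2_{-1/2+\kappa'}}\|\psi\|_{L^2_{1/2+\kappa'}}+C'\|\phi\|_{\mathcal B^*}\|\psi\|_{L^2_{1/2+\kappa'}}.
\end{equation*}
Applying Young's inequality with a small coefficient, absorbing the $\|\chi_\epsilon\phi\|^2_{L^2_{-1/2+\kappa'}}$ contribution into the left-hand side, and collecting the $\|\psi\|^2_{L^2_{1/2+\kappa'}}$ contributions completes the proof.

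The main obstacle I expect is the bookkeeping in this asymmetric splitting: the weights in $P_\pm$ naturally suggest symmetric Cauchy--Schwarz, which would require $\phi\in L^2_{\kappa'}$ (not available from the LAP alone), whereas the required $L^2_{-1/2+\kappa'}$ matching on the $\phi$-side is obtained only after the weight reshuffling and careful commutations with $\chi_\epsilon(\pm B-\sigma')$ and $f(H)$, all of whose errors must be kept strictly subleading and absorbed by Theorem~\ref{thmlapBnd}.
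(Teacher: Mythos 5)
Your proposal is correct and is essentially the paper's proof, which consists precisely of evaluating Lemma~\ref{microLocb} on $\phi=R(z)\psi$ and combining it with Theorem~\ref{thmlapBnd} and the Cauchy--Schwarz inequality; your asymmetric repartition of the weights $r^{\kappa'}$ in $P_\pm$ is the right way to carry this out. One small inaccuracy: the factor $(\sigma'+3\epsilon\mp B)^{2\kappa'}$ is \emph{not} bounded on $\supp\chi_\epsilon(\pm B-\sigma')$, since that support extends to $\pm B\to-\infty$; however it is a symbol of order $2\kappa'<1$ in $B$, so it is harmlessly absorbed because $f(H)$ maps into $\mathcal D(B)$ (or, alternatively, the very negative values of $\pm B$ are covered by Corollary~\ref{cor:1711241720}), and the rest of your argument goes through unchanged.
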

\begin{proof}
The assertion follows from 
Lemma~\ref{microLocb}, Theorem~\ref{thmlapBnd} and the Cauchy--Schwarz inequality. 
\end{proof}

\begin{proof}[Proof of Theorem~\ref{microLoc}] 
The assertion follows  by a covering argument using  Corollaries~\ref{cor:1711241720}  and \ref{cor:1711241722}.
\end{proof}

\subsection{Applications}\label{subsec:1712162}
Here we prove Corollaries~\ref{cor:holdCon} and \ref{cor:Som}.

\begin{proof}[Proof of Corollary~\ref{cor:holdCon}]
We discuss only the upper sign.
Let $I\subset\mathbb R\setminus(\sigma_{\mathrm{pp}}(H)\cup\mathcal T(H))$
be a compact interval, and let $s>1/2$ and $\beta\in (0, \min\{\kappa,s-1/2\})$.
Decompose for any $z,z'\in I_+$ and $m\in \N_0$
\begin{align}\label{eq:tu}
\begin{split}
R(z)-R(z') &= \chi_m R(z)\chi_m -\chi_m R(z')\chi_m
\\&\phantom{{}={}}{}
+\parb{R(z)-\chi_m R(z)\chi_m}-\parb{R(z')-\chi_m R(z')\chi_m}. 
\end{split} 
\end{align}
We estimate the last two terms of \eqref{eq:tu} as
follows. Take any $s'\in (1/2,s-\beta]$. Then by Theorem \ref{thmlapBnd} we have
uniformly in $z\in I_+$ and $m\in\mathbb N_0$ 
\begin{align}
\begin{split}
\|R(z)-\chi_mR(z)\chi_m\|_{\mathcal L(L^2_s,L^2_{-s})}
&
\le 
C_1 \|r^{-s}R(z)(1-\chi_m)r^{-s}\|_{\mathcal L(L^2)}
\\&\phantom{{}={}}{}
+C_1 \|r^{-s}(1-\chi_m)R(z)\chi_mr^{-s}\|_{\mathcal L(L^2)}
\\&
\le{C_2} 2^{-\beta m}.
\end{split}
\label{eq:17112514a}
\end{align}
The same holds true  uniformly in $z'\in I_+$ and $m\in\mathbb N_0$ 
\begin{align}
\|R(z')-\chi_mR(z')\chi_m\|_{\mathcal L(L^2_s,L^2_{-s})}
\le C_2 2^{-\beta m}.
\label{eq:17112514b}
\end{align}
As for the first and second terms of 
\eqref{eq:tu} we write 
\begin{align}
\begin{split}
&\chi_mR(z)\chi_m-\chi_mR(z')\chi_m
\\&=\chi_mR(z)\parb{\chi_{m+1}(H-z')-(H-z)\chi_{m+1}}R(z')\chi_m
\\&
=\chi_mR(z)\parb{(z-z')\chi_{m+1}
-[H,\chi_{m+1}]}R(z')\chi_m.
\end{split}
\label{eq:171207}
\end{align}
Now let us choose $F_\pm\in \mathcal F^0$ such that 
\begin{align*}
F_-+F_+=1,\quad 
\mathop{\mathrm{supp}} F_-\subset (-\infty,\gamma_-(I))
,\quad 
\mathop{\mathrm{supp}} F_+\subset (-\gamma_-(I),\infty)
.
\end{align*}
We write the first term in the parentheses on the right-hand side of \eqref{eq:171207} as 
\begin{align*}
(z-z')\chi_{m+1}
=
(z-z')
\bigl(\chi_{m+1}F_-(B)+F_+(B)\chi_{m+1}+[\chi_{m+1},F_+(B)]\bigr)
\end{align*}
and the second term as 
\begin{align*}
\begin{split}
\mathrm{i}[H,\chi_{m+1}]
&
=
\bigl(\mathop{\mathrm{Re}}(\chi_{m+1}' B)\bigr)F_-(B)
+F_+(B)\mathop{\mathrm{Re}}(\chi_{m+1}' B)
\\&\phantom{{}={}}{}
+\bigl[\mathop{\mathrm{Re}}(\chi_{m+1}' B),F_+(B)\bigr]
.
\end{split}
\end{align*}
Then by \eqref{eq:171207} and Theorem~\ref{microLoc} 
it follows that uniformly in $z,z'\in I_+$ and $m\in\mathbb N_0$ 
\begin{align}
\begin{split}
\|\chi_mR(z)\chi_m-\chi_mR(z')\chi_m\|_{\mathcal L(L^2_s,L^2_{-s})}
&
\le 
|z-z'|\bigl\|r^{-s} R(z)\chi_{m+1} R(z')r^{-s}\bigr\|_{\mathcal L(\mathcal H)}
\\&\phantom{{}={}}{}
+\bigl\|r^{-s}R(z)[H,\chi_{m+1}]R(z')r^{-s}\bigr\|_{\mathcal L(\mathcal H)}
\\&
\le C_32^{(1-\beta)m}|z-z'|+C_32^{-\beta m}.
\end{split}
\label{eq:14.12.30.20.43}
\end{align} 

Summing up \eqref{eq:17112514a}, 
\eqref{eq:17112514b} and \eqref{eq:14.12.30.20.43},
we obtain uniformly in $z,z'\in I_+$ and $m\in\mathbb N_0$ 
\begin{align*}
\|R(z)-R(z')\|_{\mathcal L(L^2_s,L^2_{-s})}
\le C_3 2^{(1-\beta)m} |z-z'|+C_4 2^{-\beta m}.
\end{align*}
 For $|z-z'|\leq 1$ 
we choose $m\in\mathbb N_0$ such that $2^m\leq |z-z'|^{-1}<2^{m+1}$,
yielding 
\begin{align*}
\|R(z)-R(z')\|_{\mathcal L(L^2_s,L^2_{-s})}
\le C_5|z-z'|^\beta.
\end{align*} 
This bound is trivial for $|z-z'|\geq 1$. Therefore we obtain \eqref{eq:171125}
for $k=0$.
For $k=1$ we may use the bound with $k=0$ and the first reolvent equation.
The rest of the assertions follow from Theorem \ref{thmlapBnd} and \eqref{eq:171125}.
\end{proof}

\begin{proof}[Proof of Corollary~\ref{cor:Som}]
We discuss only the upper sign.
Let $\lambda\in\mathbb R\setminus(\sigma_{\mathrm{pp}}(H)\cup\mathcal T(H))$, 
and take $R\ge 1$ and $\tilde \gamma>0$
sufficiently large as in Theorem~\ref{microLoc} with $I=\{\lambda\}$.
We let $\psi\in r^{-\beta}\mathcal B$ with
$\beta\in [0,\kappa)$,
and set $\phi=R(\lambda+\mathrm i0)\psi$.
Then \ref{item:13.7.29.0.28} and \ref{item:13.7.29.0.29} follows by
Theorems \ref{thmlapBnd}, \ref{microLoc} and Corollary~\ref{cor:holdCon}.

Conversely, 
let $\phi'\in L^2_{-\infty}\cap H^1_{0,\mathrm{loc}}(\Omega)$ satisfy
\ref{item:13.7.29.0.28b} and \ref{item:13.7.29.0.29b}.
Set 
\begin{align*}
\phi''=\phi'-\phi;\quad \phi=R(\lambda+\mathrm i0)\psi.
\end{align*}
Since we proved that $\phi$ 
satisfies \ref{item:13.7.29.0.28b} and \ref{item:13.7.29.0.29b} it
follows that 
$\phi''$ 
satisfies \ref{item:13.7.29.0.28b} and \ref{item:13.7.29.0.29b} 
with $\psi=0$. 
Due to Theorem~\ref{thm:priori-decay-b_0} it suffices to show that 
$\phi''\in \mathcal B^*_0$. 

We first claim that $\phi''\in L^2_{-1}$.
Choose  $s<-1$ such that  $\phi''\in L^2_s$ and 
choose $F_\pm\in \mathcal F^0$ such that 
\begin{align*}
F_-+F_+=1,\quad 
\mathop{\mathrm{supp}} F_-\subset (-\infty,\gamma')
,\quad 
\mathop{\mathrm{supp}} F_+\subset (\gamma'/2,\infty)
.
\end{align*}
with $\gamma'>0$ sufficiently small.  
For any  $t\le 0$ we estimate uniformly in $m\in\mathbb N_0$
\begin{align*}
2\mathop{\mathrm{Im}}\bigl(\chi_mr^t(H-\lambda)\bigr)
&
=
-\bigl|(\chi_mr^t)'\bigr|^{1/2} B\bigl(F_-(B)+F_+(B)\bigr)\bigl|(\chi_mr^t)'\bigr|^{1/2}
\\&
\le 
\bigl|(\chi_mr^t)'\bigr|^{1/2} (c_1-B)F_-(B)\bigl|(\chi_mr^t)'\bigr|^{1/2}
-c_1\bigl|(\chi_mr^t)'\bigr|
,
\end{align*}
so that 
\begin{align}
\begin{split}
\bigl|(\chi_mr^t)'\bigr|
&
\le 
\bigl|(\chi_mr^t)'\bigr|^{1/2} (1-C_2B)F_-(B)\bigl|(\chi_mr^t)'\bigr|^{1/2}
\\&\phantom{{}={}}{}
-C_3\mathop{\mathrm{Im}}\bigl(\chi_mr^t(H-\lambda)\bigr)
.
\end{split}
\label{eq:171217}
\end{align}
We take $t=2s+2(<0)$ and apply \eqref{eq:171217}  to 
$\phi''=f(H)\phi''$; here $f\in C^\infty_{\mathrm c}(\mathbb R)$
satisfies $f(\lambda)=1$.  Then we obtain
\begin{align*}
\bigl\|\bigl|(\chi_mr^{2s+2})'\bigr|^{1/2}\phi''\bigr\|_{\mathcal H}^2
&\le C_4\bigl\|\bigl|(\chi_mr^{2s+2})'\bigr|^{1/2}\phi''\bigr\|_{\mathcal H}
\bigl\|\bigl|(\chi_mr^{2s+2})'\bigr|^{1/2}F_-(B)\phi''\bigr\|_{\mathcal H}
\\&\phantom{{}={}}{}
+C_4\|r^s\phi''\|_{\mathcal H}^2
,
\end{align*}
so that by the Cauchy--Schwarz inequality
\begin{align*}
\|\chi_mr^{s+1/2}\phi''\|_{\mathcal H}^2
&\le C_5
\|F_-(B)\phi''\|_{\mathcal B^*}^2
+C_5\|\phi''\|_s
.
\end{align*}
This implies that $\phi''\in L^2_{s+1/2}$, and hence inductively that
indeed $\phi''\in L^2_{-1}$. 

Finally we prove $\phi''\in\mathcal B^*_0$.
We take $t=0$ and apply again \eqref{eq:171217}  to 
$\phi''=f(H)\phi''$.
Then we obtain  
\begin{align*}
\bigl\||\chi_m'|^{1/2}\phi''\bigr\|_{\mathcal H}^2
&\le C_6\bigl\||\chi_m'|^{1/2}\phi''\bigr\|_{\mathcal H}
\bigl\||\chi_m'|^{1/2}F_-(B)\phi''\bigr\|_{\mathcal H}
\\&\phantom{{}={}}{}
+C_6\bigl\||\chi_m'|^{1/2}r^{-1/2}\phi''\bigr\|_{\mathcal H}\|r^{-1}\phi''\|_{\mathcal H}
,
\end{align*}
which implies that 
\begin{align*}
\lim_{m\to \infty}\inp{|\chi'_m|}_{\phi''}=0
,
\end{align*} 
or equivalently that $\phi''\in \vB^*_0$. By Theorem~\ref{thm:priori-decay-b_0} it follows that 
$\phi''=0$, 
and we are done. 
\end{proof}

\end{document}